\documentclass[11pt,onecolumn]{IEEEtran}

\usepackage{amsmath}
\usepackage{tikz}
\usepackage[T1]{fontenc}
\usepackage{newtxtext}
\usetikzlibrary{decorations.pathreplacing, matrix,calc}
\usepackage{textcomp}
\usepackage{xcolor}
\def\BibTeX{{\rm B\kern-.05em{\sc i\kern-.025em b}\kern-.08em
T\kern-.1667em\lower.7ex\hbox{E}\kern-.125emX}}
\usepackage{amssymb,amsthm,amsmath,amsfonts}
\usepackage{enumerate, enumitem}
\usepackage{textcomp}
\usepackage{xcolor}
\usepackage{graphicx}
\usepackage{subcaption}
\usepackage{algorithm}
\usepackage{float}
\usepackage{epstopdf, epsfig}
\usepackage{tcolorbox}
\usepackage{setspace}
\usepackage[noend]{algpseudocode}
\usepackage{cite}
\usepackage{soul}

\newcommand{\ER}{Erd\H{o}s-R{\'e}nyi \ }

\newcommand{\mU}{\mbox{${\mathbf U}$}}

\newcommand{\gre}{\varepsilon}

\newcommand{\gz}{\zeta}

\newcommand{\gt}{\tau}

\newcommand{\go}{\omega}

\newcommand{\SNR}{\ensuremath{\hbox{SNR}}}

\newcommand{\bea}{\begin{array}}
\newcommand{\ena}{\end{array}}
\newcommand{\bds}{\begin {itemize}}
\newcommand{\eds}{\end {itemize}}
\newcommand{\bdf}{\begin{definition}}
\newcommand{\blm}{\begin{lemma}}
\newcommand{\edf}{\end{definition}}
\newcommand{\elm}{\end{lemma}}
\newcommand{\bthm}{\begin{theorem}}
\newcommand{\ethm}{\end{theorem}}
\newcommand{\bprp}{\begin{prop}}
\newcommand{\eprp}{\end{prop}}
\newcommand{\bcl}{\begin{claim}}
\newcommand{\ecl}{\end{claim}}
\newcommand{\bcr}{\begin{coro}}
\newcommand{\ecr}{\end{coro}}
\newcommand{\bquest}{\begin{question}}
\newcommand{\equest}{\end{question}}

\newtheorem{definition}{Definition}
\newtheorem{lemma}{Lemma}
\newtheorem{theorem}{Theorem}
\newtheorem{claim}{Claim}

\newtheorem{corollary}{Corollary}

\usepackage{hyperref}
\usepackage[justification=centering]{caption}
\usepackage{thmtools}

\graphicspath{{simulations/}}
\newcommand{\orcid}[1]{\href{https://orcid.org/#1}{ {\includegraphics[scale=0.5]{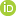}}}}
\begin{document}
\title{Asymptotic Max-Min Fair Allocation with Random Utilities}
\author{Noam Glazner\textsuperscript{\orcid{0009-0003-1524-1157}} and 
  Amir Leshem\thanks{Authors are with Faculty of Engineering, Bar-Ilan University, Ramat Gan 52900, Israel. Email: amir.leshem@biu.ac.il. This research was partially supported by grant ISF 2197/22 (Corresponding author: Amir Leshem}\textsuperscript{ 
  \orcid{0000-0002-2265-7463}}, ~\IEEEmembership{Fellow,~IEEE}
}
\maketitle
\begin{abstract} 
We investigate the asymptotic behavior of max-min fair allocations for indivisible goods under i.i.d. random utilities. For 
$N$ agents and 
$K$ goods with utilities 
${\mU_{i,j}}$
drawn independently from a common distribution 
$F$, we derive asymptotic characterizations of the max-min value in the balanced case 
$K=N$ (and in 
$K=LN$ extensions) via distributional quantiles. We then study the efficiency impact of max-min fairness by comparing the resulting total welfare with the optimal sum welfare. For distributions with sufficiently light tails, we prove that the relative efficiency loss converges to zero as the market grows, implying that max-min fairness incurs negligible welfare loss in large random instances for a broad class of distributions.
\end{abstract}
\begin{IEEEkeywords}
max-min fairness, random assignment, bottleneck assignment,
random bipartite graphs, indivisible resources, price of fairness,
extreme-value asymptotics.
\end{IEEEkeywords}
\section{Introduction}
\label{sec:intro}
Fair resource allocation is a key problem in the field of resource allocation \cite{kelly1998rate, jain1984quantitative, lan2010axiomatic, mo2002fair}. 
Fairness considerations date back to Nash's bargaining analysis of the bargaining problem  \cite{nash1950bargaining}. Kalai and Smorodinsky proposed different fairness axioms \cite{kalai1975other}. Aumann and Maschler proposed a generalization of the Talmudic division rule \cite{aumann1985game}. These solutions can be used to provide fair allocation of divisible goods \cite{han2005fair, leshem2006bargaining, leshem2008cooperative} and provide approximations to the indivisible case. An overview of these appeared in \cite{leshem2009game}. 
A surge of research has focused specifically on the indivisible case, exploring fairness notions such as envy-freeness and $\alpha$-fairness \cite{akrami2025achieving, bilo2026approximately, montanari2025weighted, igarashi2019pareto, golz2026fair, dror2023fair, jin2018trade, massoulie2007structural}.
The $\alpha$-fairness family provides a continuous interpolation between efficiency-oriented and fairness-oriented allocations: the case $\alpha=0$ reduces to the utilitarian objective of maximizing the sum-utility. Proportional fairness appears as the special case $\alpha=1$, while max-min fairness arises as the limiting regime of $\alpha\to\infty$, and is fundamentally different \cite{bistritz2020my}. 

Max-min fairness is especially prominent due to its rigorous guarantees for the worst-off users. This objective has wide-ranging applications, including in economics \cite{mongin2021rawls, stark2014reconciling}, artificial intelligence \cite{park2026multi, park2024max, bistritz2021one, leshem2025near, radunovic2007unified},
 and in various areas of networking and communication systems \cite{zehavi2013weighted, trassl2022outage, zheng2018joint, yu2025max, nguyen2020max }.
\newline 
The problem of allocating indivisible resources among agents according to a max-min objective is also known as egalitarian social welfare maximization. When the number of agents equals the number of resources, the problem reduces to the linear bottleneck assignment problem \cite{burkard2012assignment}, which can be solved in polynomial time in the number of resources. In contrast, when the number of resources exceeds the number of agents, the problem is sometimes referred to as the Santa Claus problem \cite{bansal2006santa}, and it is shown to be NP-hard to approximate within a factor of better than $\frac{1}{2}$ \cite{golovin2005max}. Therefore, when the number of resources is larger than the number of agents, many studies focus on improving the computational complexity and accuracy of these approximations.
A restricted case approximation algorithm for the max-min fairness value was proposed in \cite{bansal2006santa}. The first polynomial-time approximation algorithm for the general problem was provided in \cite{asadpour2007approximation}, which was improved in \cite{haeupler2011new}.
 An approximation algorithm that approximates the solution up to a single resource for each agent using a relaxation of this problem that can be solved in polynomial time was provided in \cite{zehavi2013weighted}. A more comprehensive discussion of the problem can be found in \cite{lang2024fair}.
These approximations have proven effective across a wide range of applications. When utilities are stochastic and modeled as random variables rather than fixed numerical values, studying the behavior of the max-min value becomes particularly interesting. In particular, the relationship between utility distributions and max-min outcomes raises several important theoretical questions.

Quantifying the max-min value is essential to understanding the trade-off between fairness and efficiency. The choice of objective function plays a pivotal role.
When some agents receive utilities that are significantly higher than those of others, a max-min allocation tends to favor the "poorer" agents by allocating more resources to them, and vice versa. In contrast, the optimal sum welfare allocation (i.e., maximizing the sum of the total utility of all the agents) naturally allocates more resources to the "richer" agents, leading to a potential conflict between fairness and efficiency. ~\cite{bertsimas2011price} studied this loss of efficiency and defined the price of fairness as
\begin{equation}
\mathrm{POF}(\mU)
=
\frac{\mathrm{S}(\mU)-\mathrm{FAIR}(\mU)}
     {\mathrm{S}(\mU)}.
\end{equation}
where $\mathrm{S}(\mU)$ is the optimal sum welfare value, and
$\mathrm{FAIR}(\mU)$ is the total utility achieved by the fair
allocation, which in our setting is the max-min fair allocation.
Their worst-case bounds show that this efficiency loss may approach
one as the number of agents increases.

The price of fairness has since been studied under various fairness
criteria. \cite{caragiannis2012efficiency} analyzed
proportionality, envy-freeness, and equitability. 
\cite{barman2020optimal} established tight
$\Theta(\sqrt{N})$ bounds for envy-freeness up to one good and
one-half maximin-share guarantees.  Under our definition, this
corresponds to
$\mathrm{POF}=1-\Theta\left(\frac{1}{\sqrt{N}}\right)$. 

The trade-off between fairness and efficiency is particularly
interesting when utilities are independent and identically distributed
(i.i.d.). In this setting, the similarity among the agents' utilities
may lead to a small Price of Fairness. In the balanced regime \(N=K\),
the max-min allocation problem is closely related to the random linear
bottleneck assignment problem. ~\cite{pferschy1996random}
studied its asymptotic expected optimal value and derived explicit
bounds for uniformly distributed costs. ~\cite{spivey2011asymptotic}
subsequently obtained refined asymptotic expansions for moments of the
optimal bottleneck value under additional distributional conditions.
In contrast, our results are formulated directly in terms of upper-tail
behavior and provide quantile-based asymptotic characterizations for
broad, nested classes of utility distributions.

When $K>N$,
\cite{bistritz2018asymptotically} constructed, under
more restrictive assumptions, a max-min fair allocation whose total
utility is asymptotically close to the optimal sum welfare value. While their analysis is restricted to a specific setting, we provide a comprehensive asymptotic characterization of the problem, establishing stronger theoretical results for a broad class of utility distributions and problem configurations. Our central question is therefore:
under the i.i.d.\ assumption, does max-min fairness cause a
significant loss in total welfare?

\textbf{Novelty and contributions:}
The central contribution of this work is a general tail-based asymptotic theory for max-min fair allocation under i.i.d.\ random utilities. Rather than analyzing particular utility distributions separately, we identify explicit conditions on the upper-tail quantiles of the common utility distribution that determine the asymptotic behavior of the max-min value. In the balanced regime $K=N$, the max-min problem is linked exactly to the perfect-matching threshold of a random bipartite graph. This yields the distribution-free first-order characterization
\begin{equation}
M_{N,N}\sim F^{-1}\!\left(1-\frac{\ln N}{N}\right)
\end{equation}
in probability for the broad class of distributions with regularly varying upper-tail quantiles, as well as for distributions with a finite positive upper endpoint. For the more restrictive admissible Weibull-type class, we strengthen this relative characterization to additive convergence. Thus, the relevant distinction between utility distributions is expressed directly through their upper-tail behavior, and progressively stronger tail regularity yields progressively stronger forms of convergence.

This viewpoint is substantially different from the distribution-specific asymptotic calculations previously available for the random bottleneck assignment problem. In particular, the refined moment expansions in~\cite{spivey2011asymptotic} require detailed analysis of individual distribution families. Our results instead isolate general upper-tail conditions under which the same asymptotic law holds simultaneously for broad classes of distributions, including light-tailed, lognormal, and regularly varying heavy-tailed models. The resulting hierarchy separates the assumptions needed for relative convergence from those sufficient for additive convergence and makes explicit the role played by the heaviness and local regularity of the upper tail.

We further show that this tail-based characterization extends well beyond the balanced assignment problem. In the proportional-growth regime $K=LN$, with fixed integer $L$ and $N\to\infty$, we prove
\begin{equation}
M_{N,LN}\sim
L F^{-1}\!\left(1-\frac{\ln N}{N}\right)
\end{equation}
in probability for Weibull-type utilities and distributions with a finite positive upper endpoint. The proof combines a new $L$-matching construction with an asymptotically sharp sum-welfare upper bound. This gives a direct characterization of the unrestricted max-min optimum, rather than of a particular allocation algorithm, and covers broad distribution classes that include the fading models considered in~\cite{bistritz2018asymptotically}. In the complementary fixed-agent regime, where $N$ is fixed and $K\rightarrow\infty$, we prove that the opportunistic allocation that assigns each resource to its highest-utility agent is asymptotically max-min optimal, together with an explicit finite-$K$ probabilistic bound.

Finally, these asymptotic characterizations yield a strong efficiency consequence. For the proportional-growth regime under the stated light-tail assumptions, and for the fixed-agent regime under finite-moment assumptions, the total welfare of \emph{every} max-min allocation is asymptotically equivalent to the optimal sum welfare. Consequently,
\begin{equation}
\operatorname{PoF}\xrightarrow{\mathrm P}0.
\end{equation}
Thus, although max-min fairness may incur a severe efficiency loss in worst-case indivisible-allocation instances, under broad i.i.d.\ random utility models this loss disappears asymptotically. The results therefore identify a large stochastic regime in which strong fairness and asymptotic efficiency are simultaneously achievable.

The paper is organized as follows. Section~\ref{sec:problem formulation} formulates the allocation problem and provides definitions of the basic distribution families.

We begin with the important case of equal number of agents and resources, in Section~\ref{sec: Single Resource Per agent}. We derive a complete asymptotic characterization of the max-min value in terms of the upper tail of the utility distribution using a random-graph-based matching-theoretic approach. The section ends with explicit computation of the max-min asymptotics for Rayleigh fading channels. 

In Section~\ref{sec:multiple_resources}, we extend our analysis to the case where the number of resources exceeds the number of agents ($K > N$). We outline two distinct asymptotic models: the proportional growth model and the fixed-agent model.
Section~\ref{sec:first_model} examines the proportional growth model ($K=LN$), where both the number of agents and resources grow to infinity while maintaining a constant ratio. We establish a lower bound on the
max-min value by extending the random bipartite-graph approach  to a more general matching construction that accommodates
multiple resources per agent.
Section~\ref{sec:second_model} addresses the fixed-agent model ($K \gg N$), where the number of agents is fixed while the number of resources grows asymptotically. We demonstrate that in this regime, the greedy allocation policy is asymptotically optimal for the max-min objective.

Section~\ref{sec:simulations} presents numerical experiments validating our theoretical bounds and asymptotic results across all discussed regimes.
Finally, section~\ref{sec:conclusion} summarizes our main findings, discusses the limitations of the analysis, and outlines directions for future research.
\section{Problem Formulation}
\label{sec:problem formulation}
Throughout the paper, uppercase symbols denote random objects,
lowercase symbols denote their realizations, and bold symbols denote
matrices.

Assume that we have a set
\begin{equation}
A=\{1,\ldots,N\}
\end{equation}
of \(N\) agents and a set
\begin{equation}
B=\{1,\ldots,K\}
\end{equation}
of \(K\) resources, where \(K\geq N\). Let \(U_{i,j}\) denote
the random utility of resource \(j\) to agent \(i\). We assume that
the random variables \(U_{i,j}\) are nonnegative and i.i.d. with
common continuous distribution function \(F\).

For every pair \((N,K)\), let
\begin{align}
\mU_{(N,K)}
&=
\left(U_{i,j}\right)_
{\substack{1\leq i\leq N\\1\leq j\leq K}}
\in\mathbb{R}_+^{N\times K}
\end{align}
denote the random utility matrix associated with \(N\) agents and
\(K\) resources. A realization of \(\mU_{(N,K)}\) is denoted by
\begin{equation}
\mathbf{u}_{(N,K)}
=
\left(u_{i,j}\right)_
{\substack{1\leq i\leq N\\1\leq j\leq K}}.
\end{equation}
When the dimensions are clear from the context, we simply write
\(\mathbf u\).

Let
\begin{equation}
\mathcal{G}_{N,K}
:=
\{g:B\to A\}
\end{equation}
denote the set of feasible allocations. For a realization
\(\mathbf u\), an allocation \(g\in\mathcal{G}_{N,K}\), and an agent
\(n\in A\), define the total utility received by agent \(n\) as
\begin{align}
t_n(g;\mathbf u)
&=
\sum_{k:g(k)=n}u_{n,k}.
\label{def:maxmin_perm_K=LN}
\end{align}

For a fixed realization \(\mathbf u\), define the max-min value by
\begin{align}
m_{N,K}(\mathbf u)
&:=
\max_{g\in\mathcal{G}_{N,K}}
\min_{1\leq n\leq N}t_n(g;\mathbf u),
\label{def:deterministic-max-min}
\end{align}
and let
\begin{align}
\mathcal{G}_{\mathrm{mm}}(\mathbf u)
&:=
\arg\max_{g\in\mathcal{G}_{N,K}}
\min_{1\leq n\leq N}t_n(g;\mathbf u)
\end{align}
denote the set of max-min allocations. We write
\begin{equation}
g_{\mathrm{mm}}(\mathbf u)
\in
\mathcal{G}_{\mathrm{mm}}(\mathbf u)
\end{equation}
for any selected max-min allocation.

The corresponding random max-min value is
\begin{align}
M_{N,K}
&:=
m_{N,K}\left(\mU_{(N,K)}\right).
\label{def: max-min value multiple resources}
\end{align}

Because the utilities are nonnegative and \(F\) is continuous,
\begin{equation}
\Pr(U_{i,j}=0)=0.
\end{equation}
Therefore, for every fixed \(N\) and \(K\), all entries of
\(\mU_{(N,K)}\) are strictly positive almost surely.

For every strictly positive realization \(\mathbf u\), every
allocation in \(\mathcal{G}_{\mathrm{mm}}(\mathbf u)\) is surjective.
Indeed, because \(K\geq N\), there exists a surjective allocation,
and every surjective allocation gives each agent strictly positive
total utility. Hence,
\begin{equation}
m_{N,K}(\mathbf u)>0.
\end{equation}
In contrast, a non-surjective allocation leaves at least one agent
with utility zero and therefore cannot attain \(m_{N,K}(\mathbf u)\).
Consequently, every max-min allocation of
\(\mU_{(N,K)}\) is surjective almost surely.

For a fixed realization \(\mathbf u\), define the optimal sum-welfare
value \cite{bistritz2018asymptotically} by
\begin{align}
s_{N,K}(\mathbf u)
&:=
\max_{g\in\mathcal{G}_{N,K}}
\sum_{n=1}^{N}t_n(g;\mathbf u)
\nonumber\\
&=
\sum_{k=1}^{K}\max_{1\leq n\leq N}u_{n,k}.
\label{def:deterministic-sum-welfare}
\end{align}
The corresponding random optimal sum-welfare value is
\begin{align}
S_{N,K}
&:=
s_{N,K}\left(\mU_{(N,K)}\right).
\label{def: sum-rate value multiple resources}
\end{align}

For every realization \(\mathbf u\) and every
\(g_{\mathrm{mm}}(\mathbf u)\in
\mathcal{G}_{\mathrm{mm}}(\mathbf u)\),
\begin{equation}
N m_{N,K}(\mathbf u)
\leq
\sum_{n=1}^{N}
t_n\bigl(g_{\mathrm{mm}}(\mathbf u);\mathbf u\bigr)
\leq
s_{N,K}(\mathbf u).
\end{equation}
Applying this deterministic inequality to the random matrix
\(\mU_{(N,K)}\) gives
\begin{equation}
\label{eq:general_upper_bound}
M_{N,K}
\leq
\frac{1}{N}S_{N,K}
\qquad\text{almost surely}.
\end{equation}

In this work, we examine the following asymptotic regimes:
\begin{enumerate}
    \item \(K=N\) and \(N\to\infty\);
    \item \(K=LN\), where \(L\geq1\) is a fixed integer, and
          \(N\to\infty\);
    \item \(N\) is fixed and \(K\to\infty\).
\end{enumerate}
Before discussing the solution, we provide an overview of the important families of distributions used in this paper. 
\subsection{Distribution Classes and Tail Hierarchy}
\label{sec:distribution-classes}

The distribution of the random max-min value
\begin{equation}
M_{N,K}
=
m_{N,K}\left(\mU_{(N,K)}\right)
\end{equation}
is governed by the common distribution \(F\) of the random utility
entries \(U_{i,j}\), and in particular by its upper-tail behavior. Furthermore, we translate
the max-min problem into a graph matching problem.
Consequently, rather than analyzing various distributions
separately, we formulate the results directly in terms of tail
growth. This provides a unified characterization covering
a very broad family of utility distributions and makes
explicit how progressively lighter tails yield progressively
stronger forms of convergence.
We now provide a short overview of the basic definitions
and properties of the distributions relevant to this paper. We define the three unbounded-support distribution
classes used throughout the paper. They are presented
from the broadest to the most restrictive and satisfy the
strict hierarchy
\begin{equation}
\label{eq:distribution-class-hierarchy}
\mathcal{C}_{\mathrm{RVQ}}
\supsetneq
\mathcal{C}_{\mathrm{W}}
\supsetneq
\mathcal{C}_{\mathrm{AW}},
\end{equation}
where the subscripts denote, respectively, regularly varying
upper-tail quantiles, Weibull-type tails, and admissible Weibull-type
tails. For the standard distribution families considered in this
paper, moving from left to right corresponds to restricting attention
to progressively lighter and more regular upper tails.

The following definitions are stated for a generic continuous
distribution function \(F\) supported on \([0,\infty)\) and having
an unbounded upper endpoint. In the allocation model, Let
\begin{equation}
F^{-1}(u)
:=
\inf\{x\in\mathbb{R}:F(x)\geq u\},
\qquad u\in(0,1),
\end{equation}
denote the generalized inverse of \(F\). Define the cumulative hazard
function, the associated log-tail quantile, and the upper-tail
quantile by
\begin{equation}
\label{eq:log-tail-quantile-definition}
\begin{aligned}
H_F(x)
&:=
-\ln\bigl(1-F(x)\bigr),\\
V_F(y)
&:=
H_F^{\leftarrow}(y)
=
F^{-1}\left(1-e^{-y}\right),
&& y>0,\\
Q_F(t)
&:=
F^{-1}\left(1-\frac{1}{t}\right)
=
V_F(\ln t),
&& t>1.
\end{aligned}
\end{equation}
Thus, \(V_F\) and \(Q_F\) describe the same upper quantiles on two
different scales: the exceedance probability is \(e^{-y}\) for
\(V_F(y)\) and \(1/t\) for \(Q_F(t)\).

Recall that a positive function \(\ell\) is slowly varying at infinity
if
\begin{equation}
\frac{\ell(ct)}{\ell(t)}
\longrightarrow 1
\qquad\text{as }t\to\infty
\end{equation}
for every fixed \(c>0\).

\begin{definition}[Regularly varying upper-tail quantile]
\label{def:regularly-varying-upper-tail-quantile}
We say that \(F\) has a \emph{regularly varying upper-tail quantile}
with coefficient \(\rho\in[0,\infty)\) if \(Q_F\) is regularly
varying at infinity with index \(\rho\); that is, for every \(c>0\),
\begin{equation}
\label{eq:regularly-varying-upper-tail-quantile}
\frac{Q_F(ct)}{Q_F(t)}
\longrightarrow c^\rho
\qquad\text{as }t\to\infty.
\end{equation}
Equivalently,
\begin{equation}
Q_F(t)=t^\rho\ell(t)
\end{equation}
for some slowly varying function \(\ell\).
\end{definition}

This is a deliberately general class. It contains many standard
positive unbounded distributions used in applications, including all
Weibull-type distributions defined below, the lognormal distribution,
and heavy-tailed distributions such as the Pareto type II, and Burr
distributions. It therefore permits relative asymptotic
characterizations without excluding practically important power-law
models.

The case \(\rho=0\) consists of slowly varying upper-tail quantiles,
whereas \(\rho>0\) permits polynomial quantile growth and regularly
varying heavy tails. In particular, a Pareto type II distribution belongs to
\(\mathcal{C}_{\mathrm{RVQ}}\) with a positive coefficient but does
not belong to \(\mathcal{C}_{\mathrm{W}}\). This shows that the first
inclusion in \eqref{eq:distribution-class-hierarchy} is strict.

\begin{definition}[Weibull-type tail]
\label{def:Weibull-type-tail}
We say that \(F\) has a \emph{Weibull-type tail} with coefficient
\(\theta\in[0,\infty)\) if \(V_F\) is regularly varying at infinity
with index \(\theta\); that is, for every \(c>0\),
\begin{equation}
\label{eq:Weibull-tail-regular-variation}
\frac{V_F(cy)}{V_F(y)}
\longrightarrow c^\theta
\qquad\text{as }y\to\infty.
\end{equation}
Equivalently,
\begin{equation}
V_F(y)=y^\theta L(y),
\end{equation}
where \(L\) is slowly varying at infinity.
\end{definition}

The Weibull-type class restricts attention to tails that are lighter
than the power-law tails permitted by
\(\mathcal{C}_{\mathrm{RVQ}}\). Within standard parametric families,
the coefficient \(\theta\) describes the growth rate of extreme upper
quantiles: larger values of \(\theta\) correspond to faster-growing
quantiles and hence to heavier tails. The class is well studied; see,
for example, \cite{broniatowski1993estimation}.

A Weibull distribution with shape parameter \(\beta>0\) has
coefficient \(\theta=1/\beta\). Exponential, gamma, and chi-squared
distributions have coefficient \(\theta=1\), whereas the half-normal,
Rayleigh, chi, Maxwell, Nakagami, and Rice distributions have
coefficient \(\theta=1/2\). The positive Gompertz distribution is an
example with coefficient \(\theta=0\).

No differentiability of \(V_F\) is required in this definition, and
every finite coefficient \(\theta\geq0\) is permitted. Moreover, every
Weibull-type distribution has a slowly varying upper-tail quantile
\(Q_F\) and therefore belongs to
\(\mathcal{C}_{\mathrm{RVQ}}\) with coefficient \(\rho=0\).

\begin{definition}[Admissible Weibull-type tail]
\label{def: Admissible Weibull-type tail}
We say that \(F\) has an \emph{admissible Weibull-type tail} if, for
some \(\theta\in[0,2)\),
\begin{equation}
V_F(y)=y^\theta L(y),
\end{equation}
where \(L\) is slowly varying, \(V_F\) is eventually continuously
differentiable, and
\begin{equation}
\label{eq:admissible-Weibull-derivative}
\frac{yV_F'(y)}{V_F(y)}
\longrightarrow\theta
\qquad\text{as }y\to\infty.
\end{equation}
\end{definition}

The admissible class further restricts the Weibull-type class to
sufficiently light and regular tails. The derivative condition
controls the local variation of nearby extreme quantiles, while the
restriction \(\theta<2\) ensures that the relevant quantile gaps
vanish not only relatively but also in absolute value. These
conditions permit the additive convergence results established later
in the paper.

The admissible class includes the smooth examples above whenever
their coefficient is below two. In particular, it includes the
exponential, gamma, chi-squared, half-normal, Rayleigh, chi, Maxwell,
Nakagami, Rice, and Gompertz distributions. A Weibull distribution is
admissible when its shape parameter satisfies \(\beta>1/2\). When
\(\beta\leq1/2\), it remains Weibull-type but is not admissible because
\begin{equation}
\theta=\frac{1}{\beta}\geq2.
\end{equation}
This shows that the second inclusion in
\eqref{eq:distribution-class-hierarchy} is strict.

The containment relations and the quantile comparisons required for
the matching-threshold analysis are proved in
Lemma~\ref{lemma:regular-tail-quantile-threshold} in
Appendix~\ref{app:regular-tail-quantiles}.
\section{A Single Resource per Agent}
\label{sec: Single Resource Per agent}
This section focuses on the balanced allocation setting ($N=K$). 
In that case, let $U$ be a random variable with continuous distribution function $F$, and,
for each $N$, let $\mU_{(N,N)}$ be an $N\times N$ random utility
matrix whose entries are i.i.d. copies of $U$.

\begin{theorem}
\label{theorem:main}
Suppose that the common utility distribution \(F\) has a regularly
varying upper-tail quantile or a finite positive upper endpoint. Then,
for every \(\gre>0\),
\begin{align}
\lim_{N\to\infty}
\Pr\left(
\left|
1-
\frac{M_{N,N}}
{F^{-1}\left(1-\frac{\ln N}{N}\right)}
\right|
>\gre
\right)
=0.
\tag{a}
\end{align}

If, more restrictively, \(F\) has an admissible Weibull-type tail or
a finite positive upper endpoint, then, for every \(\gre>0\),
\begin{align}
\lim_{N\to\infty}
\Pr\left(
\left|
M_{N,N}
-
F^{-1}\left(1-\frac{\ln N}{N}\right)
\right|
>\gre
\right)
=0.
\tag{b}
\end{align}
\end{theorem}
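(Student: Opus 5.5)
The plan is to reduce the bottleneck value to a perfect-matching threshold in a random bipartite graph. For a threshold $x$, let $G_N(x)$ be the bipartite graph on $A\cup B$ with an edge $(i,j)$ whenever $U_{i,j}\geq x$. Since $K=N$, every surjective allocation is a perfect matching, so
\begin{equation*}
\{M_{N,N}\geq x\}=\{G_N(x)\text{ has a perfect matching}\}.
\end{equation*}
Each edge is present independently with probability $p=1-F(x)$, so $G_N(x)$ is the random bipartite graph $G(N,N,p)$. I will use the classical Erd\H{o}s--R\'enyi threshold for perfect matchings in $G(N,N,p)$: if $p=(\ln N+\omega_N)/N$, a perfect matching exists with probability tending to $1$ when $\omega_N\to\infty$ and tending to $0$ when $\omega_N\to-\infty$. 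The lower bound is the crude one: an isolated vertex already rules out a perfect matching.

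Next I choose thresholds. Take $p_\pm=(\ln N\pm\omega_N)/N$ with $\omega_N\to\infty$ slowly, say $\omega_N=\ln\ln N$, and set $x_\pm=F^{-1}(1-p_\pm)$. By continuity of $F$,
\begin{equation*}
\Pr(x_+\leq M_{N,N}<x_-)\to1 .
\end{equation*}
Both statements then reduce to deterministic quantile comparisons against $x_N:=F^{-1}(1-\ln N/N)$:
\begin{equation*}
\text{(a) }\ \frac{x_\pm}{x_N}\to1,\qquad \text{(b) }\ x_\pm-x_N\to0 .
\end{equation*}
Part (b) with a finite upper endpoint $\omega_F$ is immediate, because $x_\pm$ and $x_N$ all tend to $\omega_F$. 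Since $\omega_F>0$, the same observation gives part (a) in that case.

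For the regularly varying case of (a), write $Q_F(t)=F^{-1}(1-1/t)$. Then
\begin{equation*}
x_\pm=Q_F\!\left(\frac{N}{\ln N\pm\omega_N}\right),\qquad x_N=Q_F\!\left(\frac{N}{\ln N}\right),
\end{equation*}
and the ratio of the arguments is $c_N=\ln N/(\ln N\pm\omega_N)\to1$. By the uniform convergence theorem for regularly varying functions, $Q_F(c_Nt)/Q_F(t)\to1$ whenever $c_N\to1$ and $t\to\infty$. I expect this comparison to be the content of Lemma~\ref{lemma:regular-tail-quantile-threshold}, so I would invoke it. Monotonicity of $Q_F$ is also available to sandwich the ratio if needed.

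The step I expect to be the real obstacle is the admissible Weibull case of (b), where the gap must vanish in absolute terms. Work on the log scale: $x_\pm=V_F(y_\pm)$ with $y_\pm=\ln N-\ln(\ln N\pm\omega_N)$, and $x_N=V_F(y_N)$ with $y_N=\ln N-\ln\ln N$. Hence
\begin{equation*}
|y_\pm-y_N|=\left|\ln\!\left(1\pm\frac{\omega_N}{\ln N}\right)\right|\sim\frac{\omega_N}{\ln N}.
\end{equation*}
By the mean value theorem, $|x_\pm-x_N|\leq V_F'(\xi)\,|y_\pm-y_N|$ for some $\xi$ between the two points. The derivative condition gives $V_F'(\xi)\sim\theta V_F(\xi)/\xi$, and this is $\xi^{\theta-1+o(1)}$ by Potter bounds. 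With $\xi\sim\ln N$, the gap is at most
\begin{equation*}
\omega_N(\ln N)^{\theta-2+o(1)}.
\end{equation*}
This tends to $0$ because $\theta<2$, provided $\omega_N$ grows slowly enough, e.g. $\omega_N=\ln\ln N$. This is exactly where the restriction $\theta<2$ is used. The remaining care points are that $\omega_N$ must go to infinity for the matching threshold to apply, while staying $o\bigl((\ln N)^{2-\theta-\delta}\bigr)$, and that $\theta=0$ must be handled separately, since the derivative ratio then tends to $0$ but the bound still holds.
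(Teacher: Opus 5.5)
Your proposal is correct and follows the paper's proof essentially step for step. You use the same ingredients: the threshold-graph/perfect-matching identity, the Erd\H{o}s--R\'enyi threshold at $p=(\ln N\pm\omega_N)/N$ to sandwich $M_{N,N}$ with high probability, the uniform convergence theorem for $Q_F$ in part~(a), and the mean-value estimate $O\bigl(\omega_N(\ln N)^{\theta-2+o(1)}\bigr)$ from the derivative condition and $\theta<2$ in part~(b). The only differences are cosmetic: you take $\omega_N=\ln\ln N$ instead of $\ln\ln\ln N$, and you compare each endpoint quantile directly with $F^{-1}(1-\ln N/N)$ rather than comparing the two endpoint quantiles with each other and sandwiching by monotonicity.
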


Thus, part~(a) gives relative convergence in probability, whereas
part~(b) gives the stronger additive convergence in probability. 

To prove Theorem~\ref{theorem:main}, we first translate the
deterministic max-min problem into a perfect-matching problem. We then apply this representation to
the random matrix \(\mU_{(N,N)}\) and use the perfect-matching threshold
for \ER random bipartite graphs.

\begin{claim}
\label{claim:K=N}
Let
\begin{equation}
\mathbf u
=
(u_{i,j})_{1\leq i,j\leq N}
\in\mathbb{R}_+^{N\times N}
\end{equation}
be a fixed realization of the utility matrix, and let
\(\gt\in\mathbb{R}\). Define the bipartite threshold graph
\begin{equation}
G_{\mathbf u}(\gt)
=
\bigl(A,B,E_{\mathbf u}(\gt)\bigr),
\end{equation}
where \(A=B=\{1,\ldots,N\}\) and
\begin{equation}
E_{\mathbf u}(\gt)
=
\left\{
(i,j)\in A\times B:u_{i,j}\geq\gt
\right\}.
\end{equation}
Then
\begin{equation}
\gt\leq m_{N,N}(\mathbf u)
\end{equation}
if and only if \(G_{\mathbf u}(\gt)\) contains a perfect matching.
\end{claim}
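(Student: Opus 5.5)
The key observation is that when $K=N$, every surjective allocation $g:B\to A$ is a bijection. We prove the claim by identifying such bijections with perfect matchings of the complete bipartite graph on $A\times B$, and then showing that the max-min objective of a bijection is at least $\gt$ exactly when the corresponding matching uses only edges of $G_{\mathbf u}(\gt)$. Throughout we work with a fixed deterministic matrix $\mathbf u\in\mathbb{R}_+^{N\times N}$. We do not assume strict positivity here, since the claim is stated for arbitrary nonnegative realizations.

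The first step is to show that the max-min value is attained by a bijection. Given any allocation $g$, consider the agents who receive more than one resource. Because $K=N$, if some agent receives no resource then some other agent receives at least two. We transfer surplus resources to empty agents. Since $\mathbf u\geq0$, removing resources from an agent that keeps at least one resource $k$ leaves its utility $\ge u_{n,k}$. A cleaner route avoids repair arguments altogether. For any $g$, select for each agent $n$ a resource $k_n\in g^{-1}(n)$ maximizing $u_{n,k}$, whenever $g^{-1}(n)\neq\emptyset$. Note that $t_n(g)\geq u_{n,k_n}$, but not conversely, so a subset of resources can carry less utility than the full bundle. This asymmetry is where care is needed.

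To handle it, I will compare the two sides directly.

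\emph{($\Leftarrow$)} Suppose $G_{\mathbf u}(\gt)$ has a perfect matching $\pi$. Define the allocation $g(\pi(i))=i$. Then each agent has $t_i(g;\mathbf u)=u_{i,\pi(i)}\ge\gt$, so $m_{N,N}(\mathbf u)\ge\gt$.

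\emph{($\Rightarrow$)} Suppose $\gt\le m_{N,N}(\mathbf u)$, and let $g$ be a maximizer, which exists because $\mathcal G_{N,N}$ is finite. The case $\gt\le0$ is trivial, since then $G_{\mathbf u}(\gt)$ is complete bipartite and contains a perfect matching. So assume $\gt>0$. Then every agent has $t_n(g)\ge\gt>0$, hence every agent receives at least one resource. Since $g$ is then a surjection between two sets of size $N$, it is a bijection. Therefore $t_n(g)=u_{n,g^{-1}(n)}\ge\gt$ for every $n$, and the edges $\{(n,g^{-1}(n))\}$ form a perfect matching in $G_{\mathbf u}(\gt)$.

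The main subtlety is this $\Rightarrow$ direction. It is the only place where one might worry about bundles of several goods. The counting observation that $\gt>0$ forces surjectivity, and hence bijectivity when $K=N$, resolves it.
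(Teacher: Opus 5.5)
Your proof is correct and follows the paper's argument essentially verbatim. For ($\Leftarrow$), the perfect matching defines a bijective allocation in which every agent gets utility at least $\tau$. For ($\Rightarrow$), you dispose of $\tau\le 0$ via the complete graph, and for $\tau>0$ you note that any max-min allocation must be surjective, hence bijective since $K=N$, so its edges form a perfect matching. The exploratory ``repair'' discussion in your opening paragraph is never used and can simply be deleted.
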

\begin{proof}
If \(G_{\mathbf u}(\gt)\) contains a perfect matching, the matching
defines a bijective allocation \(g\) satisfying
\begin{equation}
u_{g(j),j}\geq\gt
\qquad\text{for every }j\in\{1,\ldots,N\}.
\end{equation}
Thus, every agent receives utility at least \(\gt\), and hence
\begin{equation}
m_{N,N}(\mathbf u)\geq\gt.
\end{equation}

Conversely, suppose that \(m_{N,N}(\mathbf u)\geq\gt\). If
\(\gt\leq0\), then \(G_{\mathbf u}(\gt)\) is complete because
\(\mathbf u\) is nonnegative, so it contains a perfect matching.

Now suppose that \(\gt>0\), and let
\(g^\star\in\mathcal G_{\mathrm{mm}}(\mathbf u)\). Since
\begin{equation}
t_i(g^\star;\mathbf u)\geq m_{N,N}(\mathbf u)\geq\gt>0
\qquad\text{for every }i,
\end{equation}
every agent receives at least one resource. Because there are \(N\)
agents and \(N\) resources, \(g^\star\) is bijective and every agent receives exactly one resource. Consequently,
\begin{equation}
u_{g^\star(j),j}
=
t_{g^\star(j)}(g^\star;\mathbf u)
\geq\gt
\qquad\text{for every }j,
\end{equation}
so the edges induced by \(g^\star\) form a perfect matching in
\(G_{\mathbf u}(\gt)\).
\end{proof}

We now apply Claim~\ref{claim:K=N} to the random matrix
\(\mU_{(N,N)}\). Let
\begin{equation}
G_{\mU}(\gt):= G_{\mU_{(N,N)}}(\gt)
\end{equation}
denote the random threshold graph obtained by replacing \(u_{i,j}\)
with \(U_{i,j}\) in the definition of \(G_{\mathbf u}(\gt)\). The claim
gives the event identity
\begin{equation}
\label{eq:matching-event-identity}
\left\{M_{N,N}\geq\gt\right\}
=
\left\{
G_{\mU}(\gt)
\text{ contains a perfect matching}
\right\}.
\end{equation}

Because the entries \(U_{i,j}\) are i.i.d. and \(F\) is continuous,
the edges of \(G_{\mU}(\gt)\) are independent and have common
probability
\begin{equation}
p_{\gt}
=
\Pr(U_{i,j}\geq\gt)
=
1-F(\gt).
\end{equation}
Thus, \(G_{\mU}(\gt)\) is an \ER random bipartite
graph.

We use the following perfect-matching threshold theorem of \ER \cite{erdos1966random,erdos1968random}; see also
\cite[Theorem~6.1]{frieze2016introduction}.

\begin{theorem}
\label{theorem:ER-perfect-matching}
Let \(\go_N\) be a real sequence and let
\begin{equation}
p_N=\frac{\ln N+\go_N}{N},
\end{equation}
where \(p_N\in[0,1]\) for all sufficiently large \(N\). Let
\(E_N\) denote the event that the random bipartite graph
\(G_{N,N,p_N}\) contains a perfect matching. Then
\begin{equation}
\lim_{N\to\infty}\Pr(E_N)
=
\begin{cases}
0,
& \go_N\to-\infty,\\
e^{-2e^{-c}},
& \go_N\to c\in\mathbb{R},\\
1,
& \go_N\to+\infty.
\end{cases}
\end{equation}
\end{theorem}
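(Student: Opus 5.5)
Since this is the classical Erd\H{o}s--R\'enyi result, the plan is to reproduce the standard argument of \cite[Theorem~6.1]{frieze2016introduction}. The main idea is that the only likely obstruction to a perfect matching is an isolated vertex. Write $X_N$ for the number of isolated vertices (on either side) of $G_{N,N,p_N}$. Then
\begin{equation}
\Pr(E_N)=\Pr(X_N=0)-\Pr(X_N=0,\ E_N^c),
\end{equation}
and it suffices to prove two things: first, that $X_N$ converges in distribution to a Poisson variable with mean $2e^{-c}$ when $\go_N\to c$; second, that $\Pr(X_N=0,\ E_N^c)\to0$. Both the case $\go_N\to c$ and the case $\go_N\to+\infty$ follow from these. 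The case $\go_N\to-\infty$ follows by monotonicity, since having a perfect matching is an increasing property. Alternatively, in that case $\mathbb E X_N\to\infty$, and a second-moment (Chebyshev) bound gives $X_N>0$ with high probability.

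\textbf{Step 1 (isolated vertices).} I would compute $\mathbb E X_N=2N(1-p_N)^N=2N\exp\bigl(-Np_N+O(Np_N^2)\bigr)$, which tends to $2e^{-c}$ because $Np_N^2=O(\ln^2N/N)\to0$. I would then show, by the method of factorial moments, that $\mathbb E[(X_N)_r]\to(2e^{-c})^r$ for every fixed $r$. This works because any $r$ fixed vertices are all isolated with probability $(1-p_N)^{rN-O(r^2)}$. Poisson convergence then yields $\Pr(X_N=0)\to e^{-2e^{-c}}$. For $\go_N\to+\infty$ we have $\mathbb E X_N\to0$, and Markov's inequality gives $X_N=0$ with high probability. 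To handle the lower-order term cleanly, I may first cap $\go_N$ at $\ln\ln N$, using monotone coupling.

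\textbf{Step 2 (no other Hall obstruction; the main obstacle).} By Hall's theorem, if there is no perfect matching then there is a minimal violating set $S$ on one side, with $|S|=k\leq\lceil N/2\rceil$ (by symmetry between the two sides) and $|N(S)|=k-1$. Minimality forces the bipartite graph between $S$ and $N(S)$ to be connected, with every vertex of $N(S)$ having at least two neighbours in $S$. Such a structure therefore contains a spanning tree on $2k-1$ vertices plus at least $k-1$ further edges, and has no edges from $S$ to the remaining $N-k+1$ vertices. When $k=1$ this is exactly an isolated vertex, which is excluded on $\{X_N=0\}$. For $2\leq k\leq N/2$ I would apply a union bound:
\begin{equation}
\Pr(\exists\,S)\leq 2\sum_{k=2}^{N/2}\binom{N}{k}\binom{N}{k-1}k^{k-2}(k-1)^{k-1}\binom{k(k-1)}{k-1}p_N^{3k-3}(1-p_N)^{k(N-k+1)}.
\end{equation}
I would then show that this sum is $o(1)$. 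This is the hardest part. I plan to split the range of $k$. For small $k$ (say $k\leq N/\ln^2 N$), the factor $(1-p)^{kN}\approx (e^{-c}/N)^k$ dominates, and each term is $O\bigl((C\ln^3N/N)^{k-1}\bigr)$ or similar. For large $k$, up to $N/2$, the factor $(1-p)^{k(N-k)}\leq e^{-k\ln N/2}$ together with $\binom{N}{k}^2\leq(eN/k)^{2k}$ and the $p^{3k}$ factor must be balanced. Some care is needed near $k\approx N/2$, where I would instead use the fact that $G$ is whp an expander: every set of size at most $N/2$ on one side has at least $|S|$ neighbours once the minimum degree is at least $1$, a fact that can itself be established by a similar first-moment count.

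Combining Steps 1 and 2 gives $\Pr(E_N)\to e^{-2e^{-c}}$ when $\go_N\to c$, and $\Pr(E_N)\to1$ when $\go_N\to+\infty$. The case $\go_N\to-\infty$ is handled by monotonicity as described above.
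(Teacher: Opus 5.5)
Your plan follows the standard proof of \cite[Theorem~6.1]{frieze2016introduction}, which the paper only cites rather than reproves. Step~1 and the monotonicity reductions for $\go_N\to\pm\infty$ are fine.

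\textbf{The gap is in Step~2.} You claim that a minimal Hall violator $S$ (with $|S|=k$, $T=N(S)$, $|T|=k-1$) contains a spanning tree \emph{plus at least $k-1$ further edges}. This is false. Take $S=\{x_1,\ldots,x_k\}$ and $T=\{y_1,\ldots,y_{k-1}\}$, with only the path edges $x_iy_i$ and $y_ix_{i+1}$ and no other edges incident to $S$. Then:
\begin{itemize}
\item the graph on $S\cup T$ is connected;
\item every $y_i$ has exactly two neighbours in $S$;
\item $S$ is an inclusion-minimal violator, since removing any $x_j$ leaves a set matchable into $T$;
\item there are only $2k-2$ edges.
\end{itemize}
In general, every edge of the bipartite graph on $S\cup T$ has exactly one endpoint in $T$. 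So the condition that each vertex of $T$ has at least two neighbours in $S$ forces only $2(k-1)$ edges, which is exactly the number of edges in a spanning tree; it adds nothing. Your displayed sum, with the factor $\binom{k(k-1)}{k-1}$ and $p_N^{3k-3}$, is therefore not an upper bound for $\Pr(\exists S)$. Showing that it is $o(1)$ does not establish Step~2.

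\textbf{The repair.} Count only the spanning tree, i.e.\ drop $\binom{k(k-1)}{k-1}p_N^{k-1}$ and keep $p_N^{2k-2}$. Alternatively, as in the cited proof, use $\binom{k(k-1)}{2k-2}p_N^{2k-2}$. We have $\binom Nk\binom N{k-1}k^{k-2}(k-1)^{k-1}\le(eN)^{2k-1}/k^2$, and $N-k+1\ge N/2$ for $k\le\lceil N/2\rceil$. Hence the $k$-th term is at most $\frac{eN}{k^2}(eNp_N)^{2k-2}e^{-kNp_N/2}$. Take $\go_N$ bounded below and at most $\ln\ln N$, using your capping.
\begin{itemize}
\item For $k=2$, compute directly with the exact exponent: $(1-p_N)^{2(N-1)}=O(N^{-2})$, so this term is $O(\ln^2N/N)$.
\item For $k\ge3$, the terms are $O\bigl(N(C\ln^2N/\sqrt N)^{k}\bigr)$, which sum to $O(\ln^6N/\sqrt N)$.
\end{itemize}
So the corrected first-moment bound is $o(1)$ over the whole range $2\le k\le\lceil N/2\rceil$ at once. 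No separate expansion argument near $k\approx N/2$ is needed. That fallback would in any case just be the same union bound again, since expansion of sets of size at most $N/2$ is exactly the Hall condition being proved.
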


Using the theorem we obtain that for every $\gre$ choosing \begin{align} F(\gt_N)=1-\frac{\ln N+\ln \ln \ln N}{N} \end{align} the graph $G_{\mathbf u}(\gt_N)$ is an \ER random graph with edge probability $\frac{\ln N + \ln \ln \ln N}{N}$. Hence it has a perfect matching with probability approaching $1$. Similarly, choosing $F(\gt_N)=1-\frac{\ln N-\ln \ln \ln N}{N}$, the graph $G_{\mathbf u}(\gt_N)$ has no perfect matching with probability approaching 1. Hence the max-min value $M_{N,N}$ satisfies with high probability: \begin{align} F^{-1}\left(1-\gz_{1}^N\right )\le M_{N,N} \le F^{-1}\left(1-\gz_2^N\right). \label{eq: K=N lower bound} \end{align} where $\gz_1^N=\frac{\ln N + \ln \ln \ln N}{N}, \gz_2^N=\frac{\ln N - \ln \ln \ln N}{N}$. To complete the proof, we need to show that $F^{-1}_U\left(1-\gz_{1}^N\right )$ is asymptotically close to $F^{-1}\left(1-\gz_2^N\right)$ or, under the weaker assumptions, asymptotically equivalent in ratio. This is true for Admissible Weibull-type tail class and for regularly varying upper-tail class, (including bounded distributions), accordingly, as we show in the following lemmas: \begin{restatable}{lemma}{admissible} \label{Lemma: total convergence order statistic complete} Let $F$ be an Admissible Weibull-type tail as defined in Definition \ref{def: Admissible Weibull-type tail} or a finite positive upper endpoint, then: \begin{equation} F^{-1}\left( 1-\gz_1^N \right) - F^{-1}\left( 1-\gz_2^N \right) \longrightarrow0. \end{equation} \end{restatable}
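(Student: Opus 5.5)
We rewrite both quantiles through the log-tail quantile \(V_F\) and apply the mean value theorem; the finite-endpoint case is handled separately.

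\textbf{Admissible Weibull-type case.} Since \(V_F(y)=F^{-1}(1-e^{-y})\), we have \(F^{-1}(1-\gz)=V_F(-\ln \gz)\). Set
\begin{equation}
y_1^N:=-\ln\gz_1^N=\ln N-\ln(\ln N+\ln\ln\ln N),\qquad
y_2^N:=-\ln\gz_2^N=\ln N-\ln(\ln N-\ln\ln\ln N).
\end{equation}
Both tend to infinity and satisfy \(y_i^N\sim\ln N\). Their gap is
\begin{equation}
\Delta_N:=y_2^N-y_1^N=\ln\frac{\ln N+\ln\ln\ln N}{\ln N-\ln\ln\ln N}\sim\frac{2\ln\ln\ln N}{\ln N}.
\end{equation}
For large \(N\), \(V_F\) is continuously differentiable on \([y_1^N,y_2^N]\), so the mean value theorem gives some \(\xi_N\in[y_1^N,y_2^N]\) with
\begin{equation}
\bigl|F^{-1}(1-\gz_1^N)-F^{-1}(1-\gz_2^N)\bigr|=V_F'(\xi_N)\,\Delta_N .
\end{equation}
The derivative condition \eqref{eq:admissible-Weibull-derivative} gives \(V_F'(\xi_N)\le(\theta+1)V_F(\xi_N)/\xi_N\) for large \(N\).

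By regular variation with index \(\theta\), \(V_F(\xi_N)=\xi_N^{\theta}L(\xi_N)\le \xi_N^{\theta+\delta}\) for any fixed \(\delta>0\) and large \(N\), using Potter's bound \(L(y)=o(y^\delta)\). Since \(\xi_N\sim\ln N\), this yields
\begin{equation}
V_F'(\xi_N)\Delta_N\lesssim(\ln N)^{\theta+\delta-1}\cdot\frac{2\ln\ln\ln N}{\ln N}=2(\ln N)^{\theta+\delta-2}\ln\ln\ln N .
\end{equation}
Because \(\theta<2\), we can pick \(\delta<2-\theta\), and the right-hand side tends to \(0\). This completes this case.

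The main obstacle is making sure the slowly varying factor \(L\) cannot absorb the margin \(2-\theta\). The sub-polynomial bound \(L(y)=o(y^\delta)\) handles this, and it is exactly where the strict inequality \(\theta<2\) is used. A borderline \(\theta=2\) would fail in general.

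\textbf{Finite positive upper endpoint case.} Let \(x^*=\sup\{x:F(x)<1\}<\infty\). As \(\gz\downarrow0\), \(F^{-1}(1-\gz)\) is nondecreasing in \(1-\gz\) and bounded by \(x^*\). Moreover, for any \(x<x^*\) we have \(1-F(x)>0\), so \(F^{-1}(1-\gz)>x\) once \(\gz<1-F(x)\). Hence \(F^{-1}(1-\gz)\to x^*\) as \(\gz\to0\).

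Both \(\gz_1^N\) and \(\gz_2^N\) tend to \(0\), so both quantiles converge to \(x^*\), and their difference tends to \(0\). No rate is needed, which is why the only assumption here is finiteness of the endpoint. Positivity of \(x^*\) matters only for the ratio statement in part~(a).
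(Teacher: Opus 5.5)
Your proof is correct and follows essentially the same route as the paper: you write both quantiles as $V_F$ evaluated at $-\ln\zeta_i^N$ and apply the mean value theorem with a gap of order $2\ln\ln\ln N/\ln N$, using admissibility to bound $V_F'(\xi_N)$ by a constant times $V_F(\xi_N)/\xi_N$, with $\theta<2$ closing the argument and the finite-endpoint case handled identically. The only cosmetic difference is that you control $L(\xi_N)$ directly via $L(y)=o(y^\delta)$, whereas the paper first transfers $V_F(\xi_N)$ to $V_F(\ln N)$ using the uniform convergence theorem and then absorbs the factor $\ln\ln\ln N$ into a slowly varying function.
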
 \begin{proof} See appendix \ref{app: special case}. \end{proof} The max-min value converges in probability to $F^{-1}\left(1-\frac{\ln N}{N}\right)$ and this proves (b). \begin{restatable}{lemma}{regular} \label{lemma:regular-tail-quantile-threshold} Let $F$ be a continuous distribution function. Then: If $F$ has a regularly varying upper-tail quantile or a finite positive upper endpoint, then, with \begin{equation} \gz_1^N:=\frac{\ln N+\ln\ln\ln N}{N}, \qquad \gz_2^N:=\frac{\ln N-\ln\ln\ln N}{N}, \end{equation} we have \begin{equation} \label{eq:regular-tail-quantile-matching-ratio} \frac{F^{-1}(1-\gz_1^N)}{F^{-1}(1-\gz_2^N)} \longrightarrow1. \end{equation} \end{restatable}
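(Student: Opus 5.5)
We rewrite both quantiles on the \(Q_F\) scale and treat the two hypotheses separately. Since \(Q_F(t)=F^{-1}(1-1/t)\), setting
\begin{equation}
t_1^N:=\frac{1}{\gz_1^N}=\frac{N}{\ln N+\ln\ln\ln N},
\qquad
t_2^N:=\frac{1}{\gz_2^N}=\frac{N}{\ln N-\ln\ln\ln N},
\end{equation}
we have \(F^{-1}(1-\gz_i^N)=Q_F(t_i^N)\). Both \(t_i^N\to\infty\) because \(\ln\ln\ln N=o(\ln N)\), and
\begin{equation}
c_N:=\frac{t_1^N}{t_2^N}=\frac{\ln N-\ln\ln\ln N}{\ln N+\ln\ln\ln N}\longrightarrow1 .
\end{equation}
Since \(Q_F\) is nondecreasing and \(t_1^N<t_2^N\), the ratio in \eqref{eq:regular-tail-quantile-matching-ratio} is at most \(1\). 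It therefore suffices to prove \(\liminf_N Q_F(t_1^N)/Q_F(t_2^N)\ge1\).

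\textbf{Regularly varying case.} Suppose \(Q_F\) is regularly varying with index \(\rho\ge0\). Fix \(\delta\in(0,1)\). For all large \(N\) we have \(c_N\ge1-\delta\), so \(t_1^N\ge(1-\delta)t_2^N\), and monotonicity gives
\begin{equation}
\frac{Q_F(t_1^N)}{Q_F(t_2^N)}\ge\frac{Q_F\bigl((1-\delta)t_2^N\bigr)}{Q_F(t_2^N)}\longrightarrow(1-\delta)^\rho .
\end{equation}
The convergence on the right is the definition applied with the fixed constant \(c=1-\delta\) along \(t=t_2^N\to\infty\). Hence the \(\liminf\) is at least \((1-\delta)^\rho\), and letting \(\delta\downarrow0\) yields the claim.

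The only subtle point is that \(c_N\) varies with \(N\), whereas regular variation is stated for fixed \(c\). We avoid needing uniform convergence (the uniform convergence theorem) by sandwiching with a fixed \(c=1-\delta\) and using that \(Q_F\) is monotone.

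\textbf{Finite upper endpoint case.} Suppose \(F\) has upper endpoint \(x^\ast\in(0,\infty)\). Since \(\gz_i^N\downarrow0\), we have
\begin{equation}
F^{-1}(1-\gz_i^N)\longrightarrow\lim_{u\uparrow1}F^{-1}(u)=x^\ast>0 .
\end{equation}
This limit equals \(x^\ast\) for the generalized inverse: \(F<1\) on \((-\infty,x^\ast)\), so \(F^{-1}(u)\) eventually exceeds any \(x<x^\ast\), while it never exceeds \(x^\ast\). Both quantiles therefore converge to the same positive limit, and their ratio tends to \(1\).

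Positivity of \(x^\ast\) is what makes the ratio well defined and convergent. It holds automatically here, since \(F\) is continuous and supported on \([0,\infty)\).
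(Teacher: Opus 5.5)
Your proof is correct and follows the same route as the paper. Both arguments write the two quantiles as $Q_F$ evaluated at $N/(\ln N\pm\ln\ln\ln N)$, whose ratio tends to $1$, and both handle the bounded case by noting that the two quantiles converge to the same positive endpoint. The only difference is in passing from a fixed $c$ to the varying scale $\lambda_N\to1$: the paper cites the uniform convergence theorem for regularly varying functions, whereas you use monotonicity of $Q_F$ to sandwich the ratio between $Q_F\bigl((1-\delta)t_2^N\bigr)/Q_F(t_2^N)$ and $1$, which is a self-contained justification of that same step.
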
 \begin{proof} See appendix~\ref{app:regular-tail-quantiles}. \end{proof} This proves (a).

We now show a closed form expression for the max-min asymptotic behaviour for the important case of parallel, i.i.d Rayleigh fading channels. In this case the signal to noise ratio is exponentially distributed. In this case 
\begin{align}
    F(t)=\left\{
    \begin{tabular}{cl}
        $0$ & if $t<0$  \\
        $1-e^{-t}$ & if $t\ge0$.
    \end{tabular}
    \right.
\end{align}
Hence, as the number of channels grows, the max-min fair allocation provides signal-to-noise ratio satisfying:
\begin{align}
    F(\SNR)=1-e^{-\SNR}=1-\frac{\ln N}{N}.
\end{align}
Hence, 
\begin{align}
    SNR_{\max \min} - (\ln N- \ln \ln N) \xrightarrow[]{p} 0.
\end{align}

\section{Multiple Resources per Agent}
\label{sec:multiple_resources}

We now consider the max-min allocation problem when the number of
resources exceeds the number of agents (\(K>N\)). As previously noted,
this problem is NP-hard. Consequently, computing the exact max-min
value is computationally intractable in general, making theoretical
bounds useful for benchmarking suboptimal heuristics. Nevertheless, we
show that, similarly to the balanced case (\(N=K\)), the random
max-min value \(M_{N,K}\) can be asymptotically bounded using only the
common utility distribution \(F\).

We distinguish between two asymptotic regimes. The first, the
proportional-growth model, assumes proportional scaling of the numbers
of agents and resources. Specifically, we let \(K=LN\), where \(L\)
is a fixed positive integer, and examine the asymptotic behavior as
\(N\to\infty\). The second regime is motivated by settings such as
those considered in \cite{zehavi2013weighted}, in which a relatively
small number of agents share a large pool of resources. Accordingly,
the fixed-agent model assumes that \(N\) is fixed and examines the
asymptotic behavior as \(K\to\infty\).

As shown in \eqref{eq:general_upper_bound}, the normalized optimal
sum-welfare value \(S_{N,K}/N\) is an upper bound for \(M_{N,K}\). The
following claim describes the distribution of \(S_{N,K}\).

\begin{claim}
\label{Remark: sum-rate}
Let
\begin{equation}
\mU_{(N,K)}
=
\left(U_{i,j}\right)_
{\substack{1\leq i\leq N\\1\leq j\leq K}}
\end{equation}
be the random utility matrix defined in the problem formulation. Then
\begin{equation}
S_{N,K}
=
s_{N,K}\left(\mU_{(N,K)}\right)
=
\sum_{j=1}^{K}\max_{1\leq i\leq N}U_{i,j}.
\end{equation}
Moreover, if
\begin{equation}
U_{[N:N]}^{(1)},\ldots,U_{[N:N]}^{(K)}
\end{equation}
are independent copies of the maximum of \(N\) i.i.d. random
variables with distribution function \(F\), then
\begin{equation}
S_{N,K}
\stackrel{d}{=}
\sum_{j=1}^{K}U_{[N:N]}^{(j)}.
\end{equation}
\end{claim}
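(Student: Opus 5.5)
The claim has two parts, and both follow from the definitions without any asymptotic input. I will first verify the deterministic identity and then pass to distributions using independence of the columns of \(\mU_{(N,K)}\).

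For the identity, fix a realization \(\mathbf u\). For any allocation \(g\in\mathcal G_{N,K}\), exchanging the order of summation gives
\begin{equation}
\sum_{n=1}^{N}t_n(g;\mathbf u)=\sum_{k=1}^{K}u_{g(k),k}\leq\sum_{k=1}^{K}\max_{1\leq n\leq N}u_{n,k}.
\end{equation}
This holds because every resource \(k\) is assigned to exactly one agent. Equality is attained by the opportunistic allocation \(g^\ast(k)\in\arg\max_n u_{n,k}\), with ties broken arbitrarily. This allocation is feasible because \(\mathcal G_{N,K}\) consists of all maps \(B\to A\) and imposes no surjectivity constraint. Hence \(s_{N,K}(\mathbf u)=\sum_k\max_n u_{n,k}\), which is \eqref{def:deterministic-sum-welfare}. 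Substituting the random matrix \(\mU_{(N,K)}\) for \(\mathbf u\) gives the first display of the claim pointwise, on every sample point.

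For the distributional statement, define \(X_j:=\max_{1\leq i\leq N}U_{i,j}\) for \(j=1,\ldots,K\). Each \(X_j\) is a measurable function of column \(j\) alone. The columns are disjoint families of i.i.d. variables, so \(X_1,\ldots,X_K\) are independent. Each column consists of \(N\) i.i.d. draws from \(F\), so each \(X_j\) has distribution function \(F^N\), which is the law of \(U_{[N:N]}\). Therefore \((X_1,\ldots,X_K)\) has the same joint law as \((U_{[N:N]}^{(1)},\ldots,U_{[N:N]}^{(K)})\). Applying the measurable map \((x_1,\ldots,x_K)\mapsto\sum_j x_j\) to both vectors preserves equality in distribution, which yields \(S_{N,K}\stackrel{d}{=}\sum_j U_{[N:N]}^{(j)}\).

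Neither step is a genuine obstacle. The only points that need care are that the set of feasible allocations allows non-surjective maps, which is what makes the column-wise maximum attainable, and that independence across columns requires stating the grouping argument explicitly.
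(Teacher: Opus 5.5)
Your proof is correct and follows the reasoning the paper relies on. The paper gives no separate proof of this claim: it asserts $s_{N,K}(\mathbf u)=\sum_{k}\max_{n}u_{n,k}$ directly in \eqref{def:deterministic-sum-welfare} and leaves the independence-of-columns step implicit, and your exchange-of-summation bound (with equality at the opportunistic allocation, which is feasible because $\mathcal G_{N,K}$ contains non-surjective maps) plus the grouping argument for $S_{N,K}\stackrel{d}{=}\sum_j U_{[N:N]}^{(j)}$ supplies exactly those steps.
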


In the following sections, we use this representation of the upper
bound to analyze the asymptotic behavior of \(M_{N,K}\). For the
fixed-agent model, we also provide a probabilistic lower bound for
finite \(K\) and derive an upper bound on the corresponding error
probability. Due to their simplicity, these bounds provide a practical
tool for evaluating the efficiency of allocation algorithms.
\section{Proportional Growth Model: Numerous Agents, More Resources}
\label{sec:first_model}

In this section, we consider the proportional-growth regime in which
\(K=LN\), where \(L\) is a fixed positive integer. Thus, as
\(N\to\infty\), both the number of agents and the number of resources
grow, while
\begin{equation}
\frac{K}{N}=L
\end{equation}
remains fixed. Accordingly, we denote the random utility matrix and
its associated random max-min and optimal sum-welfare values by
\begin{equation}
\mU_{(N,LN)},\qquad M_{N,LN},\qquad S_{N,LN},
\end{equation}
respectively.

The following theorem characterizes the asymptotic max-min value in the
proportional-growth regime \(K=LN\). It applies to distributions with
Weibull-type tails as well as to distributions with a finite upper
endpoint.

\begin{theorem}
\label{THM:main_theorem_order_statistic}
Let \(L\) be a fixed positive integer, and let
\begin{equation}
\mU_{(N,LN)}
=
(U_{i,j})_{\substack{1\leq i\leq N\\1\leq j\leq LN}}
\end{equation}
be a random utility matrix whose entries are i.i.d. with common
continuous distribution function \(F\). Let
\begin{equation}
M_{N,LN}
=
m_{N,LN}\left(\mU_{(N,LN)}\right)
\end{equation}
be the corresponding random max-min value. If \(F\) has a
Weibull-type tail or a finite positive upper endpoint, then, for every
\(\gre>0\),
\begin{align}
\lim_{N\to\infty}
\mathbb{P}\left(
\left|
1-
\frac{M_{N,LN}}
{L F^{-1}\left(1-\frac{\ln N}{N}\right)}
\right|>\gre
\right)
=0.
\end{align}
\end{theorem}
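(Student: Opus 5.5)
The plan is a matching lower bound and a sum-welfare upper bound, both asymptotic to $L F^{-1}(1-\frac{\ln N}{N})$. Write $q_N:=F^{-1}(1-\frac{\ln N}{N})$.

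\textbf{Upper bound.} By \eqref{eq:general_upper_bound}, $M_{N,LN}\le S_{N,LN}/N$. By Claim~\ref{Remark: sum-rate}, $S_{N,LN}$ is a sum of $LN$ i.i.d.\ copies of $U_{[N:N]}$. I will show
\begin{equation}
\frac{S_{N,LN}}{LN\,q_N}\xrightarrow{\mathrm P}1 .
\end{equation}
For Weibull-type tails, $U_{[N:N]}/F^{-1}(1-1/N)\to1$ in probability, by regular variation of $V_F$ with $\ln N$ as argument. Moreover, $V_F(\ln N-\ln\ln N)/V_F(\ln N)\to1$ because $V_F$ is regularly varying. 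Hence $F^{-1}(1-1/N)\sim q_N$.

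To pass from the maxima to the empirical mean, I will use a truncated law of large numbers. Since the $LN$ summands form a triangular array, I will bound $\mathbb E[U_{[N:N]}]/q_N\to1$ by uniform integrability. Uniform integrability follows from the tail estimate
\begin{equation}
\Pr\bigl(U_{[N:N]}>x\,V_F(\ln N)\bigr)\le N e^{-H_F(xV_F(\ln N))},
\end{equation}
together with Potter bounds for $V_F$. These show that the exponent $H_F(xV_F(\ln N))$ behaves like $x^{1/\theta}\ln N$, uniformly enough in $x$; the case $\theta=0$ is even easier. A Chebyshev-type bound on the truncated sum then gives concentration, so by Markov's inequality the upper bound is $\le(1+\gre)Lq_N$ with high probability. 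In the bounded case with endpoint $b$, everything lies in $[0,b]$ and $U_{[N:N]}\to b$, so both sides tend to $Lb$.

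\textbf{Lower bound.} I will split the $LN$ resources into $L$ disjoint blocks $B_1,\dots,B_L$ of $N$ resources each. For each block $\ell$, consider the threshold graph on $A\times B_\ell$ with threshold $\gt_N=F^{-1}(1-\gz_1^N)$. Each such graph is Erd\H{o}s--R\'enyi $G_{N,N,\gz_1^N}$. By Claim~\ref{claim:K=N} and Theorem~\ref{theorem:ER-perfect-matching}, each graph has a perfect matching with probability tending to one. A union bound over the fixed number $L$ of blocks shows that all $L$ graphs have perfect matchings simultaneously with high probability.

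Assigning each agent its matched resource from every block is a valid allocation in which every agent receives at least $L\gt_N$. Therefore $M_{N,LN}\ge L F^{-1}(1-\gz_1^N)$ with high probability. This is the $L$-matching construction.

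By Lemma~\ref{lemma:regular-tail-quantile-threshold}, $F^{-1}(1-\gz_1^N)/q_N\to1$; here I use that Weibull-type implies RVQ, and that $F^{-1}(1-\gz_2^N)$ and $q_N$ are sandwiched between $F^{-1}(1-\gz_1^N)$ and $F^{-1}(1-\gz_2^N)$. Hence $M_{N,LN}\ge(1-\gre)Lq_N$ with high probability.

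Combining the two bounds proves the theorem. The main obstacle is the upper bound: controlling the mean of $S_{N,LN}/(LN)$ uniformly in $N$ requires a careful tail estimate of the maximum beyond its typical scale. This is exactly where the Weibull-type assumption is needed rather than RVQ. For Pareto-type tails, the mean of the maximum exceeds $F^{-1}(1-1/N)$ by a constant factor, so the argument, and plausibly the statement, fails there.
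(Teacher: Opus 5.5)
Your proposal is correct and follows essentially the same route as the paper. The lower bound uses $L$ disjoint blocks, each carrying a perfect matching at threshold $F^{-1}(1-\gz_1^N)$, together with the quantile-ratio lemma; the upper bound uses $M_{N,LN}\le S_{N,LN}/N$, a law of large numbers for the column maxima normalized by $F^{-1}(1-1/N)$ with uniform integrability from Potter bounds, and $F^{-1}(1-1/N)\sim q_N$ by regular variation of $V_F$. The differences are cosmetic: you bound the maximum's tail via $N e^{-H_F}$ with truncation plus Chebyshev/Markov, where the paper uses the representation $V_F(t_N)$ and uniformly bounded second moments to get $L^2$ convergence. Also, your sandwich sentence should say that $q_N$ lies between $F^{-1}(1-\gz_1^N)$ and $F^{-1}(1-\gz_2^N)$.
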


Before presenting the proof of
Theorem~\ref{THM:main_theorem_order_statistic}, we record an important
consequence concerning the price of fairness. The theorem implies that
the welfare loss caused by imposing max-min fairness is asymptotically
negligible.

\begin{corollary}
\label{cor:vanishing-pof}
Under the assumptions of
Theorem~\ref{THM:main_theorem_order_statistic}, let
\begin{equation}
g_{\mathrm{mm}}
\in
\mathcal G_{\mathrm{mm}}\left(\mU_{(N,LN)}\right)
\end{equation}
be any max-min allocation of the random utility matrix, and define
\begin{equation}
\operatorname{FAIR}_{N,LN}
:=
\sum_{i=1}^{N}
t_i\left(g_{\mathrm{mm}};\mU_{(N,LN)}\right)
\end{equation}
and
\begin{equation}
\operatorname{PoF}_{N,LN}
:=
1-\frac{\operatorname{FAIR}_{N,LN}}{S_{N,LN}}.
\end{equation}
Then
\begin{equation}
\operatorname{PoF}_{N,LN}\xrightarrow{\mathrm{P}}0.
\end{equation}
\end{corollary}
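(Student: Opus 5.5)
The plan is to sandwich $\operatorname{FAIR}_{N,LN}$ between $N M_{N,LN}$ and $S_{N,LN}$ and show that these two bounds agree to first order in probability. For every realization, the deterministic inequality preceding \eqref{eq:general_upper_bound} gives
\begin{equation}
N M_{N,LN}\leq \operatorname{FAIR}_{N,LN}\leq S_{N,LN},
\end{equation}
and hence
\begin{equation}
0\leq \operatorname{PoF}_{N,LN}\leq 1-\frac{N M_{N,LN}}{S_{N,LN}}.
\end{equation}
This holds for any selected $g_{\mathrm{mm}}$, so the choice of max-min allocation plays no role. It therefore suffices to show $N M_{N,LN}/S_{N,LN}\xrightarrow{\mathrm P}1$. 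Write $q_N:=F^{-1}(1-\frac{\ln N}{N})$. By Theorem~\ref{THM:main_theorem_order_statistic}, $M_{N,LN}/(Lq_N)\xrightarrow{\mathrm P}1$. The remaining task is to show
\begin{equation}
\frac{S_{N,LN}}{LNq_N}\xrightarrow{\mathrm P}1 .
\end{equation}
Once that is in hand, the ratio of the two convergences gives the corollary, since $q_N>0$ eventually.

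For the lower direction, $S_{N,LN}\geq N M_{N,LN}$, and Theorem~\ref{THM:main_theorem_order_statistic} gives $N M_{N,LN}\geq (1-\gre)LNq_N$ with high probability. The real work is the upper direction $S_{N,LN}\leq(1+\gre)LNq_N$ with high probability. I expect this is exactly the ``asymptotically sharp sum-welfare upper bound'' that the paper uses inside the proof of Theorem~\ref{THM:main_theorem_order_statistic}, and if so I will simply cite it. Otherwise I will prove it via Claim~\ref{Remark: sum-rate}. That claim writes $S_{N,LN}$ as a sum of $LN$ i.i.d. copies of $X_N:=U_{[N:N]}$, so by Markov's inequality it is enough to show $\mathbb E[X_N]\leq(1+o(1))q_N$.

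In the finite upper endpoint case $\omega$, we have $X_N\leq\omega$ and $q_N\to\omega$, so this is immediate. In the Weibull-type case, the first step is to represent $X_N$ by quantiles: $X_N\stackrel{d}{=}V_F(E_{\max})$, where $E_{\max}$ is the maximum of $N$ standard exponentials, and $E_{\max}=\ln N+O_{\mathrm P}(1)$. Regular variation of $V_F$ then gives
\begin{equation}
\frac{V_F(E_{\max})}{V_F(\ln N)}\to1 \quad\text{in probability.}
\end{equation}
The same regular variation gives $V_F(\ln N-\ln\ln N)/V_F(\ln N)\to1$, which means $q_N\sim V_F(\ln N)$.

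The main obstacle is upgrading this convergence in probability to convergence of the mean, that is, establishing uniform integrability of $X_N/q_N$. The approach I would try first is Potter's bounds: $V_F(y)/V_F(\ln N)\leq C\,(y/\ln N)^{\theta+\delta}$ for $y\geq y_0$. Combined with the Gumbel-type tail $\Pr(E_{\max}>\ln N+s)\leq e^{-s}$, this shows that $\mathbb E[(E_{\max}/\ln N)^{\theta+\delta}]$ stays bounded and in fact tends to $1$. The portion where $E_{\max}<y_0$ contributes only a bounded amount, which is negligible because $q_N\to\infty$.

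As an alternative that sidesteps the means, I could work with truncated variables $\min(X_N,(1+\gre)q_N)$. This reduces to bounding $LN\Pr(X_N>(1+\gre)q_N)$ together with the expected excess, both of which are controlled by the same Potter estimate.
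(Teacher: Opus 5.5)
Your proof is correct along its main branch and is essentially the paper's. You use the same sandwich $0\le\operatorname{PoF}_{N,LN}\le1-NM_{N,LN}/S_{N,LN}$, and obtain $NM_{N,LN}/S_{N,LN}\xrightarrow{\mathrm P}1$ from three ingredients: Theorem~\ref{THM:main_theorem_order_statistic}; the sum-welfare asymptotic \eqref{eq:S_N-quantile}, namely $S_{N,LN}/(LNb_N)\xrightarrow{\mathrm P}1$ with $b_N=F^{-1}(1-1/N)$; and $b_N/q_N\to1$ from Lemma~\ref{lemma: Asymptotic equivalence of ratio quantiles}. So your ``cite it'' option is available. The paper proves \eqref{eq:S_N-quantile} in Lemma~\ref{lemma:averaging-independent-maxima} by the same Potter-bound and uniform-integrability argument you sketch.

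Your fallback needs one correction. Markov's inequality applied to $S_{N,LN}$ with only $\mathbb{E}[X_N]\le(1+o(1))q_N$ gives $\Pr\bigl(S_{N,LN}>(1+\gre)LNq_N\bigr)\le(1+o(1))/(1+\gre)$, and this bound does not vanish. The mean bound suffices in either of two ways:
\begin{itemize}
\item Combine it with your lower direction: apply Markov to $\bigl(S_{N,LN}/(LNq_N)-(1-\gre')\bigr)^+$, whose mean is at most $\gre'+o(1)$, and let $\gre'\downarrow0$.
\item As the paper does, add $\sup_N\mathbb{E}[X_N^2]/q_N^2<\infty$ (from the same Potter bound) and apply Chebyshev to the average of the $LN$ i.i.d.\ maxima.
\end{itemize}

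The truncation variant has a similar problem: you cannot make $LN\Pr\bigl(X_N>(1+\gre)q_N\bigr)$ small. For exponential utilities it is of order $LN^{1-\gre}(\ln N)^{1+\gre}\to\infty$ when $\gre<1$. Only the expected-excess term, together with a variance bound on the truncated sum, is usable there.
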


\begin{proof}[Proof of Corollary~\ref{cor:vanishing-pof}]
Since
\begin{equation}
t_i\left(g_{\mathrm{mm}};\mU_{(N,LN)}\right)
\geq M_{N,LN}
\end{equation}
for every agent \(i\),
\begin{equation}
\operatorname{FAIR}_{N,LN}\geq NM_{N,LN}.
\end{equation}
Moreover, since \(S_{N,LN}\) is the maximum sum-welfare value,
\begin{equation}
S_{N,LN}\geq\operatorname{FAIR}_{N,LN}.
\end{equation}
Therefore,
\begin{align}
\label{eq:POFproduct}
0
\leq
\operatorname{PoF}_{N,LN}
\leq
1-\frac{NM_{N,LN}}{S_{N,LN}}.
\end{align}
Using \eqref{eq:S_N-quantile},
Theorem~\ref{THM:main_theorem_order_statistic}, and
Lemma~\ref{lemma: Asymptotic equivalence of ratio quantiles}, we obtain
\begin{align}
\label{eq:Mnproduct}
\frac{NM_{N,LN}}{S_{N,LN}}
\xrightarrow{\mathrm{P}}1.
\end{align}
The result now follows from \eqref{eq:POFproduct} by the squeeze
theorem.
\end{proof}

We now turn to the proof of the main theorem. We begin with the lower
bound. Under the more restrictive admissible Weibull-type assumption,
the relative lower bound can be strengthened to an additive one.
Although this strengthening is not needed for the proof of
Theorem~\ref{THM:main_theorem_order_statistic}, it provides a sharper
asymptotic characterization and is therefore stated as a separate
result.

\begin{restatable}{lemma}{lowerbound}
\label{theorem:main-first-model}
Suppose that \(F\) has a regularly varying upper-tail quantile or a
finite positive upper endpoint. Then, for every \(\gre>0\),
\begin{align}
\lim_{N\to\infty}
\mathbb{P}\left(
1-
\frac{M_{N,LN}}
{L F^{-1}\left(1-\frac{\ln N}{N}\right)}
>\gre
\right)
=0.
\tag{a}
\end{align}
If \(F\) has an admissible Weibull-type tail or a finite positive
upper endpoint, then, for every \(\gre>0\),
\begin{align}
\lim_{N\to\infty}
\mathbb{P}\left(
L F^{-1}\left(1-\frac{\ln N}{N}\right)
-
M_{N,LN}
>\gre
\right)
=0.
\tag{b}
\end{align}
\end{restatable}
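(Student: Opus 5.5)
We prove the lower bound by reducing the $L$-resources-per-agent problem to $L$ perfect matchings on disjoint column blocks, and then reuse the balanced-case analysis of Theorem~\ref{theorem:main}.

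\textbf{Step 1 (block decomposition).} Partition the $LN$ resources into $L$ disjoint blocks $B_1,\ldots,B_L$, each of size $N$, for instance $B_\ell=\{(\ell-1)N+1,\ldots,\ell N\}$. For each block the $N\times N$ submatrix $\mU^{(\ell)}$ consists of i.i.d.\ entries with distribution $F$. Given bijections $\pi_\ell:B_\ell\to A$, define $g$ by $g|_{B_\ell}=\pi_\ell$. Then every agent receives exactly one resource from each block, and
\begin{equation}
m_{N,LN}(\mathbf u)\geq \sum_{\ell=1}^{L} m_{N,N}\bigl(\mathbf u^{(\ell)}\bigr)
\end{equation}
deterministically. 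This gives the pathwise bound $M_{N,LN}\geq\sum_{\ell=1}^L M^{(\ell)}_{N,N}$, where each $M^{(\ell)}_{N,N}$ has the law of $M_{N,N}$.

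\textbf{Step 2 (high-probability lower threshold).} By the event identity \eqref{eq:matching-event-identity} and Theorem~\ref{theorem:ER-perfect-matching} with $\go_N=\ln\ln\ln N\to\infty$, each block satisfies
\begin{equation}
\Pr\bigl(M^{(\ell)}_{N,N}<F^{-1}(1-\gz_1^N)\bigr)\to0.
\end{equation}
The number of blocks $L$ is fixed, so a union bound shows that with probability tending to one every block simultaneously exceeds $F^{-1}(1-\gz_1^N)$. Hence $M_{N,LN}\geq L F^{-1}(1-\gz_1^N)$ with probability tending to one.

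\textbf{Step 3 (comparison to the target quantile).} It remains to compare $F^{-1}(1-\gz_1^N)$ with $F^{-1}(1-\frac{\ln N}{N})$. Since $\gz_2^N\leq \frac{\ln N}{N}\leq \gz_1^N$ and $F^{-1}$ is monotone,
\begin{equation}
F^{-1}(1-\gz_1^N)\leq F^{-1}\left(1-\tfrac{\ln N}{N}\right)\leq F^{-1}(1-\gz_2^N).
\end{equation}
\begin{itemize}
\item For part (a), Lemma~\ref{lemma:regular-tail-quantile-threshold} gives that the ratio of the outer terms tends to one. Hence $F^{-1}(1-\gz_1^N)/F^{-1}(1-\frac{\ln N}{N})\to1$, so for large $N$ it exceeds $1-\gre$, and the claim follows on the high-probability event of Step 2. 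Positivity of the denominator for large $N$ holds because the quantile tends to the upper endpoint, which is positive or infinite.
\item For part (b), Lemma~\ref{Lemma: total convergence order statistic complete} gives that the difference of the outer terms tends to zero. Thus $LF^{-1}(1-\frac{\ln N}{N})-LF^{-1}(1-\gz_1^N)\leq L\,[F^{-1}(1-\gz_2^N)-F^{-1}(1-\gz_1^N)]\to0$, which is below $\gre$ for large $N$.
\end{itemize}

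The only nontrivial step is Step 1, specifically checking that concatenating block matchings is a feasible allocation whose per-agent utility is the sum of the block bottleneck values. This holds because agent $i$ gets $\sum_\ell u_{i,\pi_\ell^{-1}(i)}\geq\sum_\ell m_{N,N}(\mathbf u^{(\ell)})$. Everything else is inherited directly from the balanced case.
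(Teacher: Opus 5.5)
Your proposal is correct and takes essentially the paper's route: the paper also splits the $LN$ resources into $L$ blocks of size $N$, gets a perfect matching in each block's threshold graph at level $F^{-1}(1-\gz_1^N)$ from the Erd\H{o}s--R\'enyi theorem with a union bound over the $L$ blocks (packaged there as an $L$-matching lemma and Claim~\ref{claim:K=LN}), and then applies Lemmas~\ref{Lemma: total convergence order statistic complete} and~\ref{lemma:regular-tail-quantile-threshold}. Your explicit sandwich $F^{-1}(1-\gz_1^N)\leq F^{-1}\bigl(1-\frac{\ln N}{N}\bigr)\leq F^{-1}(1-\gz_2^N)$ and the positivity check on the denominator spell out the final comparison step, which the paper leaves implicit.
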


\begin{proof}
See appendix~\ref{app:Proportional Growth: Lower Bound}.
\end{proof}

The following lemma provides the corresponding upper bound.

\begin{restatable}{lemma}{upperbound}
\label{theorem:one-sided-upper-bound-max-min}
Suppose that \(F\) has a Weibull-type tail or a finite positive upper
endpoint. Then, for every \(\gre>0\),
\begin{equation}
\lim_{N\to\infty}
\mathbb{P}\left(
1-
\frac{
L F^{-1}\left(1-\frac{1}{N}\right)
}{
M_{N,LN}
}
>\gre
\right)
=0.
\end{equation}
\end{restatable}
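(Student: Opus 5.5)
The plan is to derive the bound from the sum-welfare upper bound \eqref{eq:general_upper_bound}, $M_{N,LN}\le S_{N,LN}/N$. Write $q_N:=F^{-1}(1-\frac1N)=V_F(\ln N)$. Since $M_{N,LN}>0$ almost surely, the event in the statement is
\[
\left\{\frac{L q_N}{M_{N,LN}}<1-\gre\right\}\subseteq\left\{\frac{S_{N,LN}}{LN\,q_N}>\frac{1}{1-\gre}\right\}.
\]
So it suffices to show that $\frac{S_{N,LN}}{LN q_N}\le 1+\delta$ with probability tending to one for every $\delta>0$. By Claim~\ref{Remark: sum-rate},
\[
\frac{S_{N,LN}}{LN q_N}=\frac{1}{LN}\sum_{j=1}^{LN}X_N^{(j)},\qquad X_N^{(j)}:=\frac{U^{(j)}_{[N:N]}}{q_N},
\]
which is an average of $LN$ i.i.d.\ copies of a normalized maximum $X_N$.

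\emph{Finite upper endpoint.} Let $x^\ast<\infty$ be the upper endpoint. Then $X_N\le x^\ast/q_N$ deterministically. By continuity of $F$, $q_N\to x^\ast$, so the average is at most $1+\delta$ for all large $N$, surely. No probabilistic argument is needed in this case.

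\emph{Weibull-type tail.} I would use the representation $U\stackrel{d}{=}V_F(E)$ with $E\sim\mathrm{Exp}(1)$, which is valid because $F$ is continuous and $H_F(U)$ is standard exponential. Since $V_F$ is nondecreasing, the maximum of $N$ copies has the law of $V_F(\ln N+G_N)$, where $G_N:=\max_{i\le N}E_i-\ln N$. The variable $G_N$ converges to a Gumbel law and has the uniform tail bound $\Pr(G_N>g)\le e^{-g}$. The plan is then:
\begin{enumerate}[label=(\roman*)]
\item Show $\mathbb E[X_N]\to1$.
\item Show $\sup_N\mathbb E[X_N^2]<\infty$.
\item Conclude by Chebyshev: $\operatorname{Var}\bigl(\frac{1}{LN}\sum_j X_N^{(j)}\bigr)\le \mathbb E[X_N^2]/(LN)\to0$, so the average converges in probability to $1$. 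In particular it is at most $1+\delta$ with probability tending to one.
\end{enumerate}

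For (i) and (ii), split on $G_N$.
\begin{itemize}
\item On $\{G_N\le0\}$, monotonicity gives $0\le X_N\le1$.
\item On $\{0<G_N\le \ln N\}$, regular variation of $V_F$ gives $V_F(\ln N+g)/V_F(\ln N)\le V_F(2\ln N)/V_F(\ln N)\to 2^\theta$. Hence $X_N$ is uniformly bounded on this event.
\item Pointwise, $X_N\to1$ because $(\ln N+G_N)/\ln N\to1$ in probability, and the uniform convergence theorem for regularly varying functions on compact ratio sets applies.
\item On $\{G_N>\ln N\}$, use Potter's bounds: $V_F(ln N\cdot s)/V_F(\ln N)\le C s^{\theta+1}$ for $s\ge1$ and $N$ large. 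With $s=1+G_N/\ln N\le 2G_N$, this gives $X_N^2\le C' G_N^{2\theta+2}$. Together with $\Pr(G_N>g)\le e^{-g}$, the contribution of this event to $\mathbb E[X_N^2]$ is at most $C'\,\mathbb E\bigl[G_N^{2\theta+2}\mathbf 1\{G_N>\ln N\}\bigr]\to0$.
\end{itemize}
Dominated convergence, using the bounds above, then yields (i), and the same bounds yield (ii).

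The main obstacle is this uniform integrability step, namely controlling the far upper tail of $V_F(\ln N+G_N)$ relative to $V_F(\ln N)$ uniformly in $N$. Potter's bounds are what make it work. They require only that $V_F$ be regularly varying, with no differentiability, which is why the Weibull-type class suffices here while the admissible class is not needed. For the upper bound alone one could even skip the lower-tail analysis and apply Chebyshev only to the upper deviation of the average, but the two-sided version costs nothing extra.
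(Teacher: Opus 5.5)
Your proposal is correct and follows essentially the same route as the paper. Both bound $M_{N,LN}$ by $S_{N,LN}/N$, write $S_{N,LN}$ as a sum of $LN$ i.i.d.\ column maxima with the law of $V_F(\max_{i\le N}E_i)$, use Potter's bounds and the exponential tail of the maximum to get uniform integrability, hence $\mathbb{E}[X_N]\to1$ and $\sup_N\mathbb{E}[X_N^2]<\infty$, and finish with a variance (Chebyshev/$L^2$) argument. The differences are only cosmetic: you center by $G_N=\max_i E_i-\ln N$ and split on its size rather than working with $\max_i E_i/\ln N$, and in the bounded case your deterministic bound $X_N\le x^\ast/q_N$ is enough for this one-sided statement (the paper proves the two-sided convergence because it reuses it in the price-of-fairness corollary).
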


\begin{proof}
See appendix~\ref{app:Proportional Growth: Upper Bound}.
\end{proof}

To relate the two bounds, we use the following quantile equivalence.

\begin{restatable}{lemma}{weibull}
\label{lemma: Asymptotic equivalence of ratio quantiles}
Suppose that \(F\) has a Weibull-type tail, as defined in
Definition~\ref{def:Weibull-type-tail}, or a finite positive upper
endpoint. Let
\begin{equation}
\gz_1^N
:=
\frac{\ln N+\ln\ln\ln N}{N}.
\end{equation}
Then
\begin{equation}
\frac{
F^{-1}\left(1-\frac{1}{N}\right)
}{
F^{-1}\left(1-\gz_1^N\right)
}
\longrightarrow1
\qquad\text{as }N\to\infty.
\end{equation}
\end{restatable}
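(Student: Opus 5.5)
We prove the statement by rewriting both quantiles on the log-tail scale \(V_F\) and then treating the two cases of the hypothesis separately. By \eqref{eq:log-tail-quantile-definition},
\begin{equation}
F^{-1}\left(1-\tfrac{1}{N}\right)=V_F(\ln N),
\qquad
F^{-1}\left(1-\gz_1^N\right)=V_F(y_N),
\end{equation}
where
\begin{equation}
y_N:=-\ln \gz_1^N=\ln N-\ln\bigl(\ln N+\ln\ln\ln N\bigr).
\end{equation}
The key observation is that \(y_N/\ln N\to1\), because \(\ln(\ln N+\ln\ln\ln N)=O(\ln\ln N)=o(\ln N)\). The ratio in the lemma is therefore \(V_F(\ln N)/V_F(y_N)\), with both arguments tending to infinity at the same first-order rate.

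\emph{Weibull-type case.} Here \(V_F\) is regularly varying with index \(\theta\ge0\). By the Uniform Convergence Theorem for regularly varying functions, \(V_F(cy)/V_F(y)\to c^\theta\) uniformly in \(c\) on compact subsets of \((0,\infty)\). Take \(y=\ln N\) and \(c_N=y_N/\ln N\in[1/2,1]\) for large \(N\). Then
\begin{equation}
\frac{V_F(y_N)}{V_F(\ln N)}-c_N^\theta\to0.
\end{equation}
Since \(c_N^\theta\to1\), the ratio tends to \(1\).

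Monotonicity of \(V_F\) gives an alternative that avoids the uniform theorem. Fix \(\delta\in(0,1)\). For large \(N\) we have \((1-\delta)\ln N\le y_N\le\ln N\), hence
\begin{equation}
V_F\bigl((1-\delta)\ln N\bigr)\le V_F(y_N)\le V_F(\ln N).
\end{equation}
Dividing by \(V_F(\ln N)\) and using pointwise regular variation at \(c=1-\delta\) gives
\begin{equation}
(1-\delta)^\theta\le\liminf_N\frac{V_F(y_N)}{V_F(\ln N)}\le\limsup_N\frac{V_F(y_N)}{V_F(\ln N)}\le1.
\end{equation}
Letting \(\delta\downarrow0\) finishes this case. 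This version needs no continuity or differentiability of \(V_F\), which matters because Definition~\ref{def:Weibull-type-tail} imposes none. One check: \(V_F(\ln N)>0\) eventually. This holds because the support is unbounded and \(V_F\) is nondecreasing and unbounded.

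\emph{Finite positive upper endpoint case.} Let \(x^*=\sup\{x:F(x)<1\}\in(0,\infty)\). Both \(1-1/N\) and \(1-\gz_1^N\) tend to \(1\) from below, since \(\gz_1^N\to0\). The generalized inverse satisfies \(F^{-1}(u)\uparrow x^*\) as \(u\uparrow1\): it is nondecreasing, bounded by \(x^*\), and for every \(x<x^*\) we have \(F(x)<1\), so \(F^{-1}(u)>x\) once \(u>F(x)\). Hence both quantiles converge to \(x^*>0\), and their ratio tends to \(1\).

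The main obstacle is the Weibull-type case. There the two arguments \(\ln N\) and \(y_N\) differ by an unbounded amount, roughly \(\ln\ln N\), so one cannot simply use that \(V_F\) varies little over bounded shifts. The sandwich argument above resolves this, since regular variation controls only proportional changes of the argument and the gap is \(o(\ln N)\). It also explains why the lemma requires the Weibull-type class rather than \(\mathcal C_{\mathrm{RVQ}}\). On the \(t\)-scale the comparison is \(Q_F(N)\) versus \(Q_F(N/\ln N)\), which differ by a non-constant factor \(\ln N\). For \(\rho>0\) the ratio then grows like \((\ln N)^\rho\), so the argument fails.
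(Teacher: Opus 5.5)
Your proof is correct and follows the paper's argument. Both quantiles are written as $V_F$ evaluated at $\ln N$ and at $-\ln\gz_1^N$, the ratio of these arguments tends to one, and the uniform convergence theorem closes the Weibull-type case; the bounded case uses convergence of both quantiles to the endpoint $x_F>0$. Your monotone sandwich, which uses pointwise regular variation at $c=1-\delta$, is a valid and slightly more elementary substitute for the uniform convergence theorem. However, the theorem itself needs no continuity of $V_F$, only measurability, and monotonicity already supplies that.
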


\begin{proof}
See appendix~\ref{app:first-model}.
\end{proof}

This leads to the following conclusion: for Weibull-type utility
distributions, \(M_{N,LN}\) is asymptotically equivalent in ratio to
\begin{equation}
L F^{-1}\left(1-\frac{\ln N}{N}\right).
\end{equation}
Theorem~\ref{THM:main_theorem_order_statistic} formalizes this
conclusion.

\begin{proof}[Proof of Theorem~\ref{THM:main_theorem_order_statistic}]
Define
\begin{equation}
q_N
:=
F^{-1}\left(1-\frac{\ln N}{N}\right),
\qquad
r_N
:=
F^{-1}\left(1-\frac{1}{N}\right).
\end{equation}
Also, let
\begin{equation}
\widetilde q_N
:=
F^{-1}\left(1-\frac{\ln N+\ln\ln\ln N}{N}\right).
\end{equation}
By monotonicity of \(F^{-1}\),
\begin{equation}
\widetilde q_N\leq q_N\leq r_N.
\end{equation}
Lemma~\ref{lemma: Asymptotic equivalence of ratio quantiles} gives
\begin{equation}
\frac{r_N}{\widetilde q_N}\longrightarrow1,
\end{equation}
and therefore
\begin{equation}
\frac{r_N}{q_N}\longrightarrow1.
\end{equation}

Lemma~\ref{theorem:main-first-model} gives, for every \(\gre>0\),
\begin{equation}
\mathbb{P}\left(
\frac{M_{N,LN}}{Lq_N}<1-\gre
\right)
\longrightarrow0.
\end{equation}
Moreover, Lemma~\ref{theorem:one-sided-upper-bound-max-min}, together
with \(r_N/q_N\to1\), gives
\begin{equation}
\mathbb{P}\left(
\frac{M_{N,LN}}{Lq_N}>1+\gre
\right)
\longrightarrow0.
\end{equation}
Combining the two one-sided bounds yields
\begin{equation}
\frac{M_{N,LN}}{Lq_N}
\xrightarrow{\mathrm{P}}1.
\end{equation}
Equivalently,
\begin{equation}
\lim_{N\to\infty}
\mathbb{P}\left(
\left|
1-
\frac{M_{N,LN}}
{L F^{-1}\left(1-\frac{\ln N}{N}\right)}
\right|>\gre
\right)
=0.
\end{equation}
\end{proof}

\section{Fixed-Agent Model: Many Resources Per Agent}
\label{sec:second_model}

Another interesting regime for max-min fairness with i.i.d. utilities
arises when the number of resources is much larger than the number of
agents. In this section, we examine the max-min value when \(N\) is
fixed and \(K\to\infty\).

In this regime, the previously established matching approach becomes
inapplicable. If each agent is allocated approximately \(\frac{K}{N}\) of its
highest-valued resources, the corresponding cutoff is approximately
the \(\frac{K}{N}\)-th largest order statistic among \(K\) observations,  Consequently, we adopt a strategy in
which every resource is assigned to the agent with the highest utility.
This strategy coincides with the opportunistic optimal sum allocation.

We show that, under this allocation, every agent receives a sufficient
number of resources for the achieved minimum utility to be
asymptotically optimal for the max-min objective.

\begin{theorem}
\label{thm:fixed-agent}
Let \(N\) be fixed and let \(K\to\infty\). Let
\begin{equation}
\mU_{(N,K)}
=
(U_{i,j})_{\substack{1\leq i\leq N\\1\leq j\leq K}}
\end{equation}
be the random utility matrix defined in the problem formulation, and
suppose that
\begin{equation}
\mathbb{E}[U_{1,1}]<\infty,
\qquad
\operatorname{Var}(U_{1,1})<\infty.
\end{equation}
Define
\begin{equation}
Y_j:=\max_{1\leq i\leq N}U_{i,j},
\qquad
\mu_U^N:=\mathbb{E}[Y_j],
\qquad
v_U^N:=\operatorname{Var}(Y_j)<\infty.
\end{equation}
Then, for every \(\delta>0\),
\begin{align}
\mathbb{P}\left(
\left|
\frac{M_{N,K}}{\frac{K}{N}\mu_U^N}-1
\right|>\delta
\right)
&\leq
\frac{v_U^N}
{K\delta^2(\mu_U^N)^2}
\nonumber\\
&+
\frac{
N\left[Nv_U^N+(N-1)(\mu_U^N)^2\right]
}{
K\delta^2(\mu_U^N)^2
}
\xrightarrow[K\to\infty]{}0.
\label{eq:corrected-fixed-agent-bound}
\end{align}
Consequently,
\begin{equation}
\frac{M_{N,K}}
     {\left(\frac{K}{N}\right)\mu_U^N}
\xrightarrow{\mathrm{P}}1.
\end{equation}
\end{theorem}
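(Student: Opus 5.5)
We would prove the two-sided bound by squeezing \(M_{N,K}\) between the upper bound \eqref{eq:general_upper_bound} and the minimum agent utility under the opportunistic (greedy) allocation \(g^{\mathrm{op}}(j)\in\arg\max_i U_{i,j}\). By Claim~\ref{Remark: sum-rate}, \(S_{N,K}=\sum_{j=1}^K Y_j\) with \(Y_1,\dots,Y_K\) i.i.d., mean \(\mu_U^N\) and variance \(v_U^N\). The first term of \eqref{eq:corrected-fixed-agent-bound} handles the upper tail. From \(M_{N,K}\le S_{N,K}/N\), the event \(\{M_{N,K}>(1+\delta)\frac KN\mu_U^N\}\) is contained in \(\{S_{N,K}-K\mu_U^N>\delta K\mu_U^N\}\). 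Chebyshev's inequality bounds the probability of the latter by \(v_U^N/(K\delta^2(\mu_U^N)^2)\).

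For the lower tail we would use \(M_{N,K}\ge\min_i t_i(g^{\mathrm{op}};\mU)\), which gives
\begin{equation}
\{M_{N,K}<(1-\delta)\tfrac KN\mu_U^N\}\subseteq\bigcup_{i=1}^N\{T_i<(1-\delta)\tfrac KN\mu_U^N\},
\qquad T_i:=\sum_{j=1}^K Y_j\mathbf 1\{g^{\mathrm{op}}(j)=i\}.
\end{equation}
The key observation is that \(T_i\) is a sum of \(K\) i.i.d. terms \(Z_{i,j}:=Y_j\mathbf 1\{\arg\max_{i'}U_{i',j}=i\}\), independent across columns \(j\). Ties occur with probability zero since \(F\) is continuous. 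By symmetry among agents, \(\mathbb E[Z_{i,j}]=\mu_U^N/N\). We then bound the second moment crudely as \(\operatorname{Var}(Z_{i,j})\le\mathbb E[Z_{i,j}^2]=\mathbb E[Y_j^2]/N=(v_U^N+(\mu_U^N)^2)/N\). Chebyshev's inequality applied to \(T_i\), with deviation \(\delta\frac KN\mu_U^N\), gives
\begin{equation}
\Pr\Bigl(T_i<(1-\delta)\tfrac KN\mu_U^N\Bigr)\le\frac{K(v_U^N+(\mu_U^N)^2)/N}{\delta^2K^2(\mu_U^N)^2/N^2}=\frac{N(v_U^N+(\mu_U^N)^2)}{K\delta^2(\mu_U^N)^2}.
\end{equation}
A union bound over the \(N\) agents multiplies this by \(N\).

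The stated second term, \(N[Nv_U^N+(N-1)(\mu_U^N)^2]/(K\delta^2(\mu_U^N)^2)\), is larger than \(N\cdot N(v_U^N+(\mu_U^N)^2)/(K\delta^2(\mu_U^N)^2)\) only when \(N=1\) is not the case, so the two expressions must be reconciled. We would use the exact variance \(\operatorname{Var}(Z_{i,j})=\mathbb E[Y^2]/N-(\mu_U^N)^2/N^2=\bigl(Nv_U^N+(N-1)(\mu_U^N)^2\bigr)/N^2\). This yields the per-agent bound \(\bigl[Nv_U^N+(N-1)(\mu_U^N)^2\bigr]/(K\delta^2(\mu_U^N)^2)\), and the union over \(N\) agents then matches the paper's term exactly. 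Adding the upper- and lower-tail bounds gives \eqref{eq:corrected-fixed-agent-bound}, and since \(N\) is fixed the right-hand side is \(O(1/K)\to0\).

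The step we expect to need the most care is the identity \(\mathbb E[Y_j^2\mathbf 1\{\text{agent } i \text{ is argmax}\}]=\mathbb E[Y_j^2]/N\). It relies on exchangeability of the column \((U_{1,j},\dots,U_{N,j})\) together with a.s.\ uniqueness of the argmax, so that the indicators for the \(N\) agents sum to one almost surely. Finite variance of \(U\) gives \(\mathbb E[Y^2]\le N\,\mathbb E[U^2]<\infty\). We would also note that \(\mu_U^N>0\) because \(U>0\) a.s. Convergence in probability of \(M_{N,K}/(\frac KN\mu_U^N)\) then follows immediately.
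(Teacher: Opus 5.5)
Your proposal is correct and follows the paper's proof essentially step for step. You sandwich $M_{N,K}$ between $\min_i T_i$ (the minimum agent utility under the opportunistic allocation) and $S_{N,K}/N$. For the upper tail you apply Chebyshev to $S_{N,K}$. For the lower tail you apply Chebyshev with the exact variance $\operatorname{Var}(Z_{i,j})=\bigl(Nv_U^N+(N-1)(\mu_U^N)^2\bigr)/N^2$ and take a union bound over agents. Your exchangeability argument for $\mathbb{E}[Y_j^2\mathbf{1}\{g^{\mathrm{op}}(j)=i\}]=\mathbb{E}[Y_j^2]/N$ plays the role of the paper's independence of $I_j$ and $Y_j$, to the same effect.

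Drop the crude-bound detour and its garbled comparison sentence. Using $\mathbb{E}[Z_{i,j}^2]$ puts $N^2\bigl(v_U^N+(\mu_U^N)^2\bigr)$ in the numerator, which is strictly \emph{larger} than the stated $N\bigl[Nv_U^N+(N-1)(\mu_U^N)^2\bigr]$ for every $N$. Only the exact variance yields the claimed bound.
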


\begin{proof}
For every resource \(j\), let
\begin{equation}
I_j:=\arg\max_{1\leq i\leq N}U_{i,j}.
\end{equation}
Because the utilities are i.i.d. and continuously distributed,
\(I_j\) is uniformly distributed on \(\{1,\ldots,N\}\) and is
independent of \(Y_j\).

Assign every resource to its maximizing agent and define
\begin{equation}
G_i(K):=
\sum_{j=1}^K
Y_j\mathbf{1}_{\{I_j=i\}}.
\end{equation}
This is a feasible allocation. Therefore,
\begin{equation}
\min_{1\leq i\leq N}G_i(K)
\leq M_{N,K}
\leq \frac{S_{N,K}}{N},
\qquad
S_{N,K}=\sum_{j=1}^K Y_j.
\end{equation}

Set
\begin{equation}
Z_{i,j}:=Y_j\mathbf{1}_{\{I_j=i\}}.
\end{equation}
The independence of \(I_j\) and \(Y_j\) gives
\begin{equation}
\mathbb{E}[Z_{i,j}]
=
\frac{\mu_U^N}{N}.
\end{equation}
Moreover,
\begin{align*}
\operatorname{Var}(Z_{i,j})
&=
\frac{\mathbb{E}[Y_j^2]}{N}
-\frac{(\mu_U^N)^2}{N^2}\\
&=
\frac{v_U^N}{N}
+\frac{N-1}{N^2}(\mu_U^N)^2.
\end{align*}

For the upper deviation, using
\(M_{N,K}\leq S_{N,K}/N\) and Chebyshev's inequality,
\begin{align*}
&\mathbb{P}\left(
M_{N,K}>
(1+\delta)\frac{K\mu_U^N}{N}
\right)\\
&\quad\leq
\mathbb{P}\left(
S_{N,K}-K\mu_U^N
>
\delta K\mu_U^N
\right)\\
&\quad\leq
\frac{v_U^N}
{K\delta^2(\mu_U^N)^2}.
\end{align*}

For the lower deviation,
\begin{align*}
&\mathbb{P}\left(
M_{N,K}<
(1-\delta)\frac{K\mu_U^N}{N}
\right)\\
&\quad\leq
\mathbb{P}\left(
\min_{1\leq i\leq N}G_i(K)
<
(1-\delta)\frac{K\mu_U^N}{N}
\right)\\
&\quad\leq
\sum_{i=1}^N
\mathbb{P}\left(
G_i(K)-\frac{K\mu_U^N}{N}
<
-\delta\frac{K\mu_U^N}{N}
\right)\\
&\quad\leq
\frac{
N\left[Nv_U^N+(N-1)(\mu_U^N)^2\right]
}{
K\delta^2(\mu_U^N)^2
}.
\end{align*}

Combining the upper and lower deviations proves
\eqref{eq:corrected-fixed-agent-bound}. Since \(N\) is fixed,
the right-hand side converges to zero as \(K\to\infty\).
\end{proof}

We next show that the optimal sum welfare is asymptotically equivalent
to \(N\) times the max-min value. Fix \(0<\delta<1\) and define
\begin{equation}
r_\delta:=\sqrt{1-\delta}.
\end{equation}
In particular,
\begin{equation}
0<r_\delta<1,
\qquad
r_\delta^2=1-\delta.
\end{equation}
We have the event inclusion
\begin{align}
\left\{
\frac{NM_{N,K}}{S_{N,K}}\leq1-\delta
\right\}&
\subseteq{}
\left\{
\frac{NM_{N,K}}{K\mu_U^N}
\leq r_\delta
\right\}\\
&\cup
\left\{
\frac{S_{N,K}}{K\mu_U^N}
\geq\frac{1}{r_\delta}
\right\}.
\end{align}
Indeed, outside the two events on the right-hand side,
\begin{equation}
\frac{NM_{N,K}}{S_{N,K}}
=
\frac{
\frac{NM_{N,K}}{K\mu_U^N}
}{
\frac{S_{N,K}}{K\mu_U^N}
}
>
\frac{r_\delta}{r_\delta^{-1}}
=
r_\delta^2
=
1-\delta.
\end{equation}

Therefore,
\begin{align*}
&\mathbb{P}\left(
\frac{NM_{N,K}}{S_{N,K}}\leq1-\delta
\right)\\
&\quad\leq
\mathbb{P}\left(
\frac{NM_{N,K}}{K\mu_U^N}
\leq r_\delta
\right)
+
\mathbb{P}\left(
\frac{S_{N,K}}{K\mu_U^N}
\geq\frac{1}{r_\delta}
\right)\\
&\quad\leq
\frac{
N\left[Nv_U^N+(N-1)(\mu_U^N)^2\right]
}{
K(1-r_\delta)^2(\mu_U^N)^2
}
+
\frac{
v_U^N
}{
K(r_\delta^{-1}-1)^2(\mu_U^N)^2
}\\
&\quad=
\frac{
N\left[Nv_U^N+(N-1)(\mu_U^N)^2\right]
}{
K\left(1-\sqrt{1-\delta}\right)^2(\mu_U^N)^2
}
\\&+
\frac{
v_U^N
}{
K\left((1-\delta)^{-1/2}-1\right)^2(\mu_U^N)^2
}
\xrightarrow[K\to\infty]{}0.
\end{align*}

Since
\begin{equation}
0\leq\frac{NM_{N,K}}{S_{N,K}}\leq1,
\end{equation}
the upper deviation from \(1\) is impossible. The preceding bound
therefore proves that
\begin{equation}
\frac{NM_{N,K}}{S_{N,K}}
\xrightarrow{\mathrm{P}}1.
\end{equation}
Equivalently,
\begin{equation}
\frac{S_{N,K}}{NM_{N,K}}
\xrightarrow{\mathrm{P}}1.
\end{equation}

Thus, the optimal sum welfare is asymptotically equivalent to
\(N\) times the max-min value. As a consequence, the price of
fairness also converges to zero.

\begin{corollary}
\label{corollary:fixed-agent-pof}
Under the assumptions of Theorem~\ref{thm:fixed-agent}, let
\begin{equation}
g_{\mathrm{mm}}
\in
\mathcal G_{\mathrm{mm}}\left(\mU_{(N,K)}\right)
\end{equation}
be any max-min allocation of the random utility matrix and define
\begin{align}
\operatorname{FAIR}_{N,K}
&:=
\sum_{i=1}^{N}
t_i\left(g_{\mathrm{mm}};\mU_{(N,K)}\right),
\\
\operatorname{PoF}_{N,K}
&:=
1-\frac{\operatorname{FAIR}_{N,K}}{S_{N,K}}.
\end{align}
Then, for fixed \(N\) and \(K\to\infty\),
\begin{equation}
\operatorname{PoF}_{N,K}\xrightarrow{\mathrm{P}}0.
\end{equation}
\end{corollary}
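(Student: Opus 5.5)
The plan mirrors the proof of Corollary~\ref{cor:vanishing-pof}, with the fixed-agent ratio result established just before this statement replacing the proportional-growth asymptotics. The approach is a deterministic sandwich followed by a squeeze.

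\textbf{Step 1: deterministic bounds.} Fix a realization and any $g_{\mathrm{mm}}\in\mathcal G_{\mathrm{mm}}(\mU_{(N,K)})$. Every agent satisfies
\begin{equation*}
t_i(g_{\mathrm{mm}};\mU_{(N,K)})\geq M_{N,K},
\end{equation*}
so $\operatorname{FAIR}_{N,K}\geq NM_{N,K}$. Since $S_{N,K}$ is the maximal sum welfare over all allocations in $\mathcal G_{N,K}$, we also have $\operatorname{FAIR}_{N,K}\leq S_{N,K}$.

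Almost surely all entries are strictly positive, hence $S_{N,K}>0$ and the ratio is well defined. This gives, almost surely,
\begin{equation*}
0\leq \operatorname{PoF}_{N,K}\leq 1-\frac{NM_{N,K}}{S_{N,K}}.
\end{equation*}

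\textbf{Step 2: apply the ratio result.} The argument preceding this corollary shows that $NM_{N,K}/S_{N,K}\xrightarrow{\mathrm P}1$. It also gives an explicit Chebyshev bound, of order $1/K$, on
\begin{equation*}
\mathbb{P}\left(\frac{NM_{N,K}}{S_{N,K}}\leq 1-\delta\right).
\end{equation*}

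\textbf{Step 3: squeeze.} For any $\delta\in(0,1)$, Step 1 yields the event inclusion
\begin{equation*}
\{\operatorname{PoF}_{N,K}>\delta\}\subseteq\left\{\frac{NM_{N,K}}{S_{N,K}}<1-\delta\right\}.
\end{equation*}
The probability of the right-hand event is bounded by the explicit expression derived above. That expression tends to $0$ as $K\to\infty$ because $N$ is fixed and $v_U^N,\mu_U^N$ are finite constants.

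We also need $\mu_U^N>0$. This holds because $Y_j\geq U_{1,j}>0$ almost surely. Finiteness of $v_U^N$ follows from $Y_j^2\leq\sum_i U_{i,j}^2$ together with the finite second moment of $U_{1,1}$.

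\textbf{Expected difficulty.} There is no real obstacle here. The only points needing care are the almost-sure positivity of $S_{N,K}$, which makes the ratio defined, and the fact that the bound holds for \emph{every} selection of max-min allocation. The latter is automatic, because the deterministic sandwich in Step 1 does not depend on which $g_{\mathrm{mm}}$ is chosen.
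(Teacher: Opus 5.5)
Your proposal is correct and follows the paper's proof essentially verbatim. Both use the deterministic sandwich $0\leq \operatorname{PoF}_{N,K}\leq 1-NM_{N,K}/S_{N,K}$, combine it with the ratio convergence $NM_{N,K}/S_{N,K}\xrightarrow{\mathrm{P}}1$ established just before the corollary, and conclude by a squeeze; your additional remarks (almost-sure positivity of $S_{N,K}$, $\mu_U^N>0$, finiteness of $v_U^N$, and independence from the choice of $g_{\mathrm{mm}}$) are valid details the paper leaves implicit.
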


\begin{proof}
Since
\begin{equation}
t_i\left(g_{\mathrm{mm}};\mU_{(N,K)}\right)
\geq M_{N,K}
\end{equation}
for every agent \(i\),
\begin{equation}
\operatorname{FAIR}_{N,K}\geq NM_{N,K}.
\end{equation}
Moreover, because \(S_{N,K}\) is the maximum achievable total
utility,
\begin{equation}
\operatorname{FAIR}_{N,K}\leq S_{N,K}.
\end{equation}
Therefore,
\begin{equation}
0
\leq
\operatorname{PoF}_{N,K}
\leq
1-\frac{NM_{N,K}}{S_{N,K}}.
\end{equation}
By the ratio convergence established above,
\begin{equation}
\frac{NM_{N,K}}{S_{N,K}}
\xrightarrow{\mathrm{P}}1.
\end{equation}
The result now follows from the squeeze theorem.
\end{proof}
\section{Numerical Experiments}
\label{sec:simulations}

In this section, we numerically examine the finite-size behavior of the
max-min value in the three scaling regimes considered in the paper. The
unnormalized percentile plots are presented for three representative utility
distributions: exponential ($\lambda=1$), Nakagami ($m=1, \Omega=1$), and Rician ($\nu=1, \sigma=1$). The additional normalized
convergence plots broaden the comparison to Weibull($\beta=\frac{1}{2})$, uniform $[0,1]$, Pareto type II with tail index $\alpha=3$, and  Lognormal ($\mu=0,\sigma=1$) distributions.

For every simulated choice of \(N\) and \(K\), we generated \(1000\)
independent realizations \(\mathbf u_{(N,K)}\) of \(\mU_{(N,K)}\). For each
simulated quantity, we report its \(5\)th percentile, median, and \(95\)th
percentile over the Monte Carlo trials. The median represents the typical observed value, while the \(5\)th and \(95\)th percentiles describe its
finite-size variability. When the exact max-min value cannot be computed efficiently (due to a very low convergence rate or the NP hardness of the problem), these percentiles are reported separately for an empirical lower
bound and an empirical upper bound.

\subsection{A Single Resource per Agent $(N=K)$}
$(N=K)$. When the number of agents equals the number of resources, the max-min allocation problem reduces to a bottleneck assignment problem and can
therefore be solved in polynomial time. For every simulated realization
\(\mathbf u_{(N,N)}\), we compute the exact max-min value
\(m_{N,N}(\mathbf u_{(N,N)})\) by binary searching over the distinct utility
thresholds and testing matching feasibility at each threshold using the Hopcroft-Karp algorithm.

Figure~\ref{fig:simulations-balanced} presents the \(5\)th percentile, median,
and \(95\)th percentile of the empirical max-min value as functions of \(N\).
The empirical values are compared with the theoretical centering value
\begin{equation}
    q_N
    :=
    F^{-1}\left(1-\frac{\ln N}{N}\right).
    \label{eq:sim-balanced-centering}
\end{equation}
The three panels correspond to the exponential, Nakagami, and Rician distributions. For all three distributions, the empirical max-min value
approaches the theoretical prediction as \(N\) increases, consistent with
Theorem~\ref{theorem:main}.

To compare the convergence rate across a broader collection of distributions,
Figure~\ref{fig:simulations-balanced-normalized} plots the normalized empirical
median
\begin{equation}
    \frac{\operatorname{median}(M_{N,N})}
    {F^{-1}\left(1-\frac{\ln N}{N}\right)}
    \label{eq:sim-balanced-normalized-median}
\end{equation}
for Weibull, uniform, Nakagami, Rician, lognormal, Pareto type II, and exponential utilities. All these distributions have regularly varying upper-tail quantiles or are bounded from above. Therefore, Theorem~\ref{theorem:main} applies to each of them and guarantees that, in the balanced setting, the corresponding normalized max-min value converges to 1 in probability. The figure illustrates and compares the empirical convergence rates predicted by the theorem.

\begin{figure*}[htbp]
    \centering

    \begin{subfigure}[t]{0.32\textwidth}
        \centering
        \includegraphics[width=\linewidth]
        {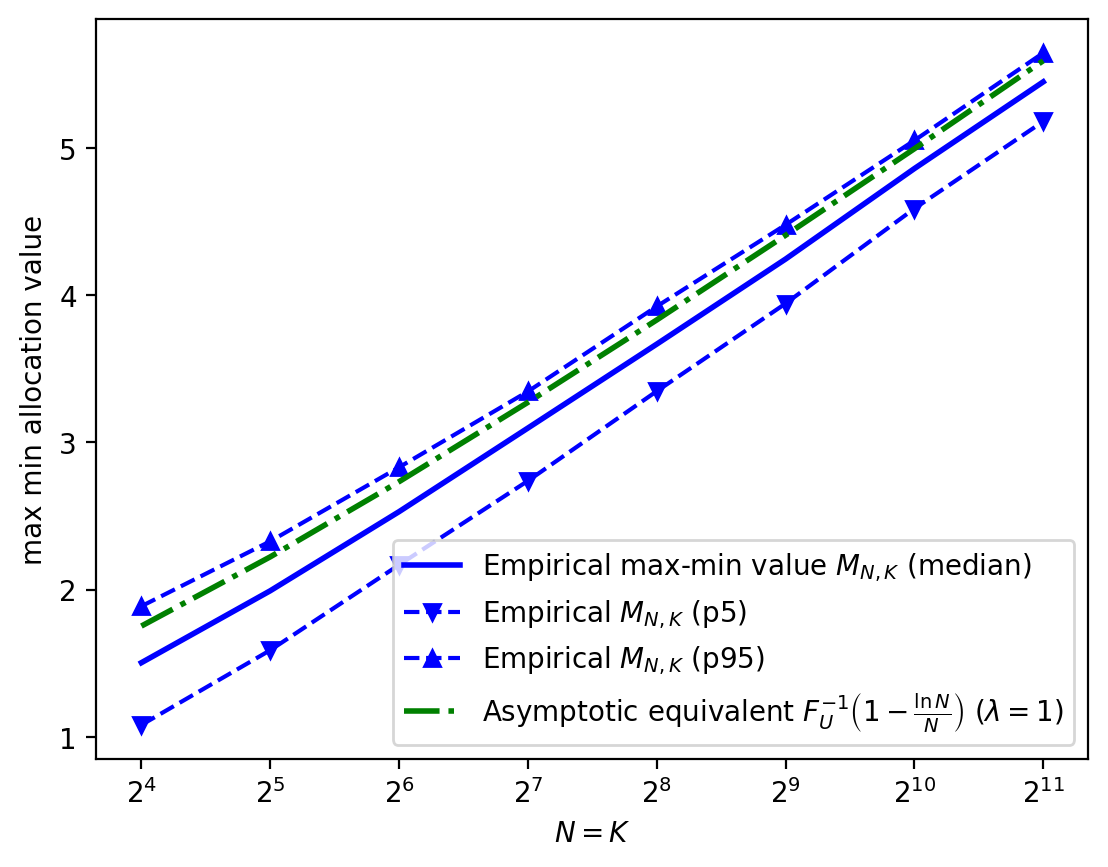}
        \caption{Exponential distribution (\(\lambda=1\)).}
        \label{fig:balanced-exponential}
    \end{subfigure}
    \hfill
    \begin{subfigure}[t]{0.32\textwidth}
        \centering
        \includegraphics[width=\linewidth]
        {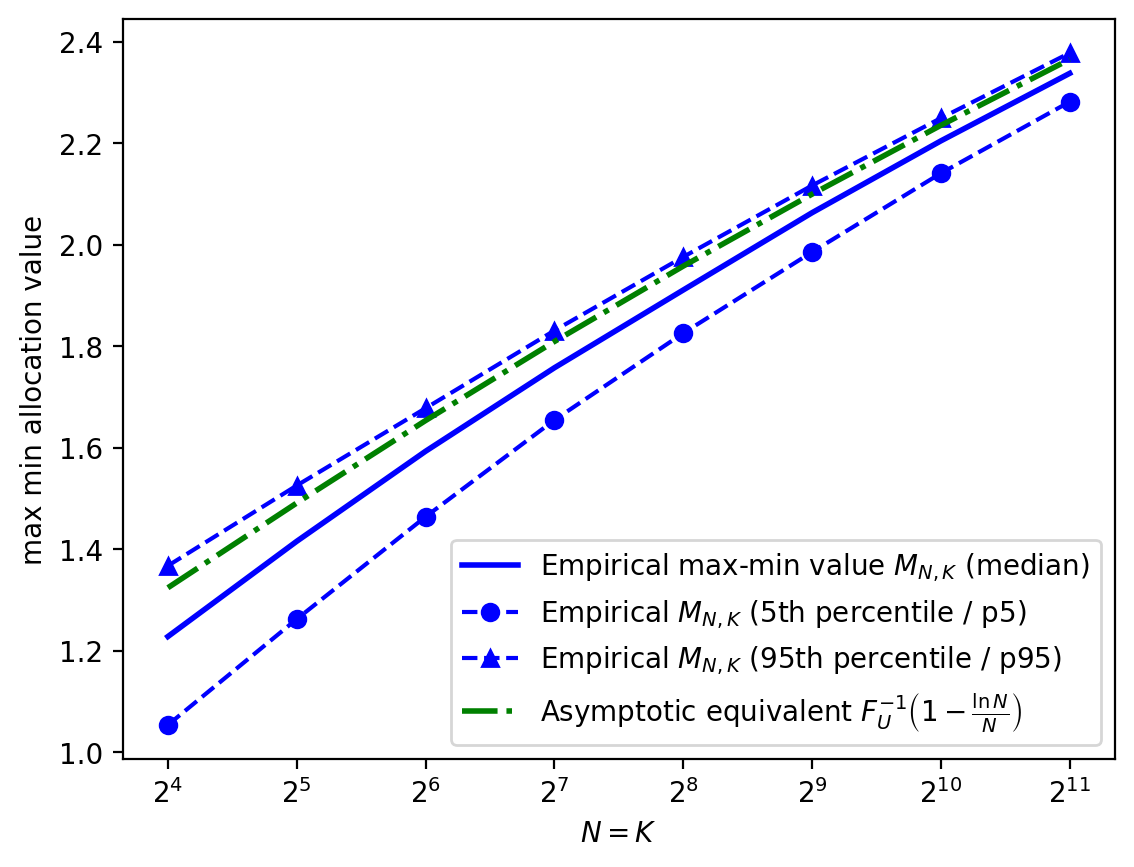}
        \caption{Nakagami distribution (\(m=1\), \(\Omega=1\)).}
        \label{fig:balanced-nakagami}
    \end{subfigure}
    \hfill
    \begin{subfigure}[t]{0.32\textwidth}
        \centering
        \includegraphics[width=\linewidth]
        {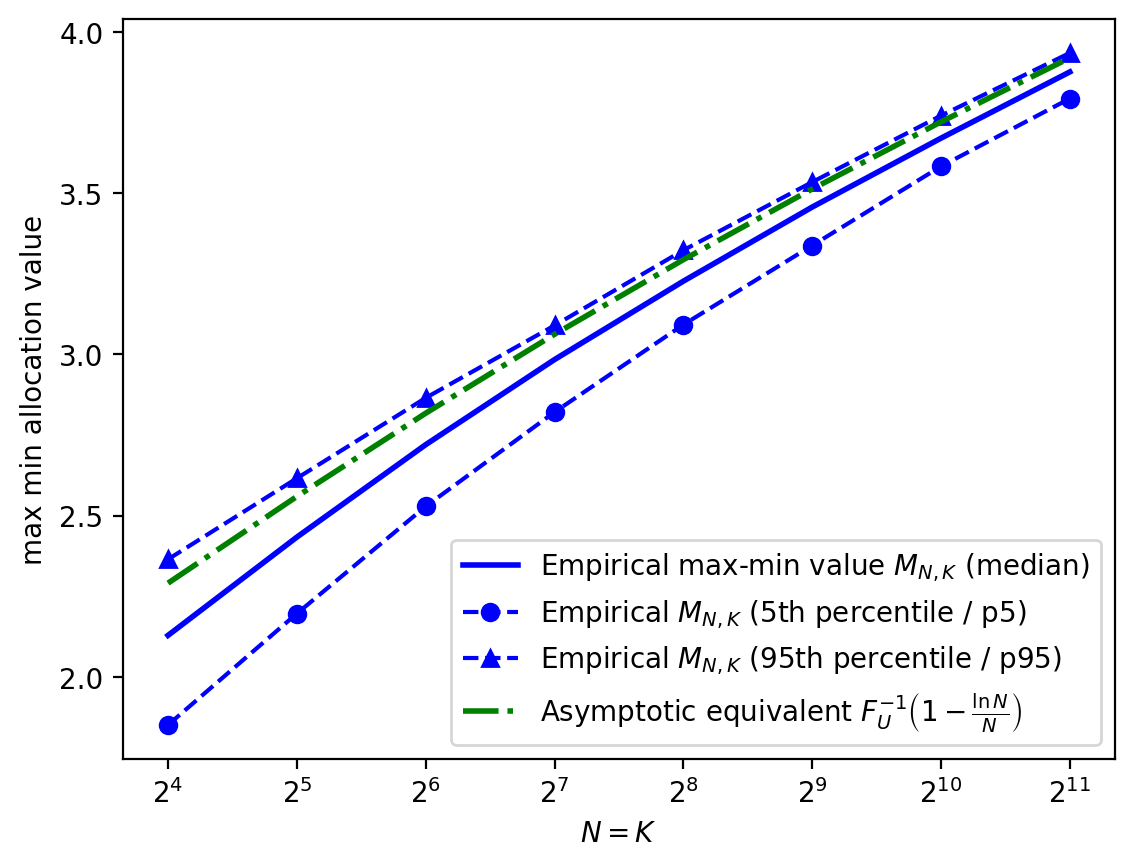}
        \caption{Rician distribution (\(\nu=1\), \(\sigma=1\)).}
        \label{fig:balanced-rice}
    \end{subfigure}

    \caption{Empirical max-min value in the balanced regime \(N=K\).
    Each panel presents the \(5\)th percentile, median, and \(95\)th
    percentile over \(1{,}000\) Monte Carlo trials, together with the
    theoretical centering value
    \(F^{-1}(1-\frac{\ln N}{N})\).}
    \label{fig:simulations-balanced}
\end{figure*}

\begin{figure*}[htbp]
    \centering
    \includegraphics[width=0.72\textwidth]
    {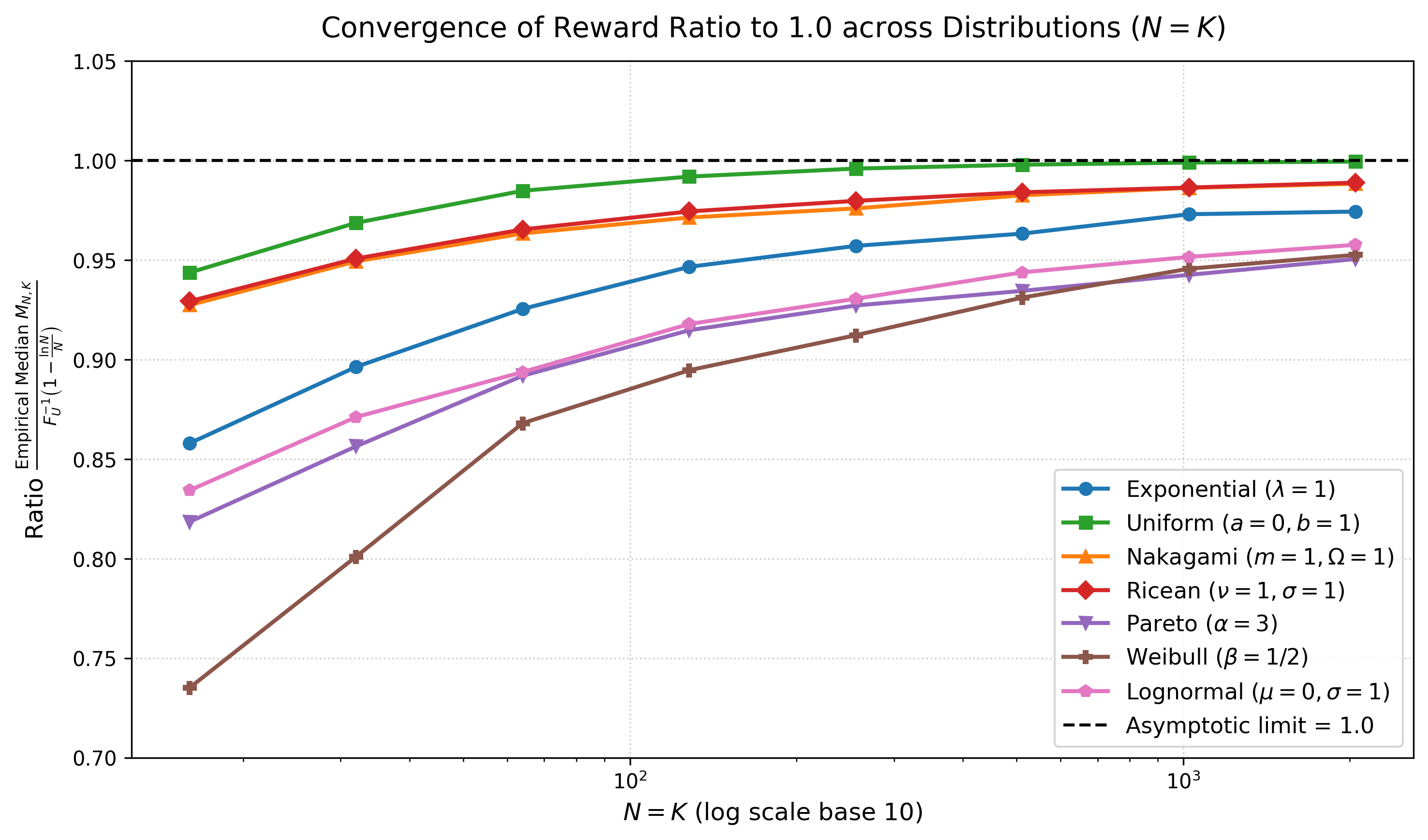}
    \caption{Normalized empirical median in the balanced regime \(N=K\).
    Each curve shows
    \(\operatorname{median}(M_{N,N})/
    F^{-1}(1-\ln N/N)\) over \(1{,}000\) Monte Carlo trials. The
    horizontal reference line is \(1\).}
    \label{fig:simulations-balanced-normalized}
\end{figure*}

\subsection{Proportional Growth Model (\(K=LN\))}

Next, we consider the proportional-growth regime \(K=LN\), where \(L\) is
fixed. Let
\begin{equation}
    q_N:=F^{-1}\left(1-\frac{\ln N}{N}\right),\ \ 
    b_N:=F^{-1}\left(1-\frac{1}{N}\right),
\end{equation}
so that the max-min lower scale is \(Lq_N\), whereas the sum-welfare upper
scale is \(Lb_N\).

The empirical lower bound is constructed by partitioning the \(K=LN\)
resources into \(L\) disjoint blocks
\(\mathcal{B}_1,\ldots,\mathcal{B}_L\), each containing \(N\) resources.
For every block, we compute the exact max-min value of the corresponding
\(N\times N\) assignment problem:
\begin{equation}
    M_N^{(\ell)}
    :=
    \max_{\pi_\ell:[N]\to\mathcal{B}_\ell\ {\rm bijective}}
    \min_{i\in[N]} U_{i,\pi_\ell(i)} .
\end{equation}
Combining the \(L\) blockwise matchings gives a feasible allocation for the
original problem and hence the lower bound
\begin{equation}
    \underline{M}_{N,LN}
    :=
    \sum_{\ell=1}^{L} M_N^{(\ell)}
    \leq M_{N,LN}.
\end{equation}

The upper bound is obtained by assigning each resource to the agent who
values it most and then dividing the resulting sum welfare by \(N\):
\begin{equation}
    \overline{M}_{N,LN}
    :=
    \frac{1}{N}\sum_{j=1}^{LN}\max_{i\in[N]}U_{i,j}.
\end{equation}
Indeed, the minimum utility of any allocation cannot exceed its average
utility across agents, while its total utility cannot exceed the sum obtained
by assigning every resource to its highest-valuing agent. If
\(U_{[N:N]}^{(j)}:=\max_{i\in[N]}U_{i,j}\), then
\begin{equation}
    \overline{M}_{N,LN}
    =
    L\left(
        \frac{1}{LN}\sum_{j=1}^{LN}U_{[N:N]}^{(j)}
    \right).
\end{equation}
Thus, the upper bound is \(L\) times the empirical average of \(K=LN\)
independent \(N\)-sample maxima.

To separate the two sources of finite-size error, we consider the three ratios
\begin{align}
    R_N^{\mathrm{lb}}
    &:=
    \frac{\operatorname{median}(\underline{M}_{N,LN})}{Lq_N},
    &
    R_N^{\mathrm{ub}}
    &:=
    \frac{\operatorname{median}(\overline{M}_{N,LN})}{Lb_N},
    \label{eq:sim-proportional-bound-ratios}\\
    R_N^{\mathrm{q}}
    &:=
    \frac{q_N}{b_N}
    =
    \frac{F^{-1}\left(1-\frac{\ln N}{N}\right)}
         {F^{-1}\left(1-\frac{1}{N}\right)}.
    \label{eq:sim-proportional-quantile-ratio}
\end{align}
The first two ratios measure the convergence of each bound to its corresponding
finite-size scale, while the third measures the remaining separation between
these two scales. Under the assumptions of
Theorem~\ref{THM:main_theorem_order_statistic},
\begin{equation}
    R_N^{\mathrm{lb}}\longrightarrow 1,
    \qquad
    R_N^{\mathrm{ub}}\longrightarrow 1,
    \qquad
    R_N^{\mathrm{q}}\longrightarrow 1.
\end{equation}
Thus, for the distributions covered by the theorem, the result guarantees
convergence in ratio, which is precisely the mode of convergence examined by
these normalized quantities.

Figure~\ref{fig:simulations-proportional} presents these three complementary
comparisons for Weibull, uniform, Nakagami, Rician, and exponential utilities.
\footnote{Pareto type II and lognormal are unbounded and do not have Weibull-type upper-tail quantiles required by  Theorem~\ref{THM:main_theorem_order_statistic}}. 
For these distributions we expect
\begin{equation}
    \frac{M_{N,LN}}{Lq_N}
    \xrightarrow{\mathbb{P}}1.
\end{equation}
The lower-bound computation becomes prohibitively expensive because convergence is very slow. Therefore, the first
panel is restricted to problem sizes for which the bound can be evaluated reliably. The second panel shows that the upper bound, normalized by \(Lb_N\),
approaches one. To examine substantially larger values of \(N\) without
generating and optimizing random utility matrices, the third panel plots the
deterministic ratio \(R_N^{\mathrm q}=q_N/b_N\). Taken together, the panels
show that the lower and upper bounds approach their respective finite-size
scales and that these scales themselves become asymptotically equivalent.

For the unbounded distributions considered here, \(R_N^{\mathrm q}\) may
approach one remarkably slowly. 
\begin{figure*}[htbp]
    \centering

    \begin{subfigure}[t]{0.32\textwidth}
        \centering
        \includegraphics[width=\linewidth]
        {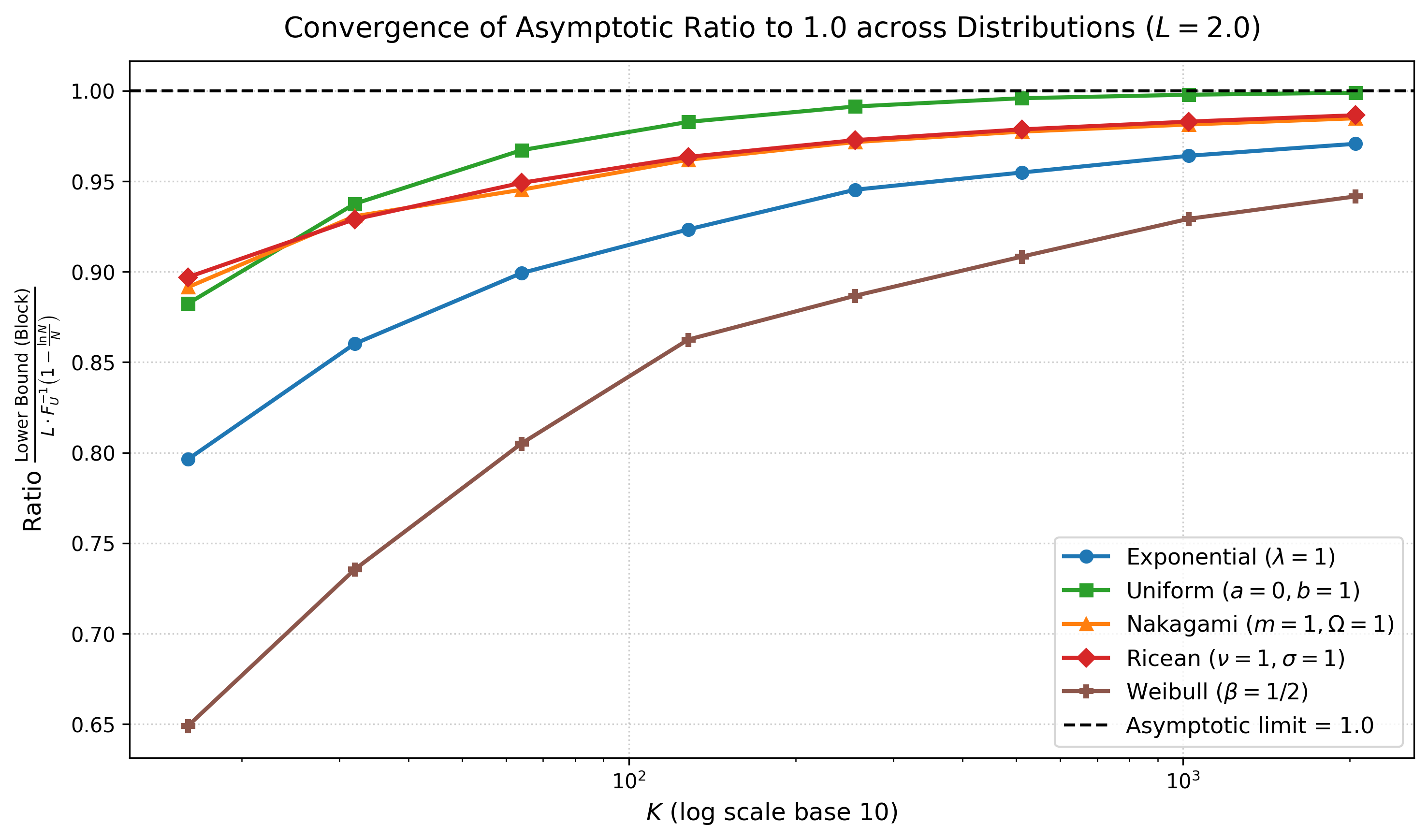}
        \caption{Lower bound normalized by \(Lq_N\).}
        \label{fig:proportional-lower-ratio}
    \end{subfigure}
    \hfill
    \begin{subfigure}[t]{0.32\textwidth}
        \centering
        \includegraphics[width=\linewidth]
        {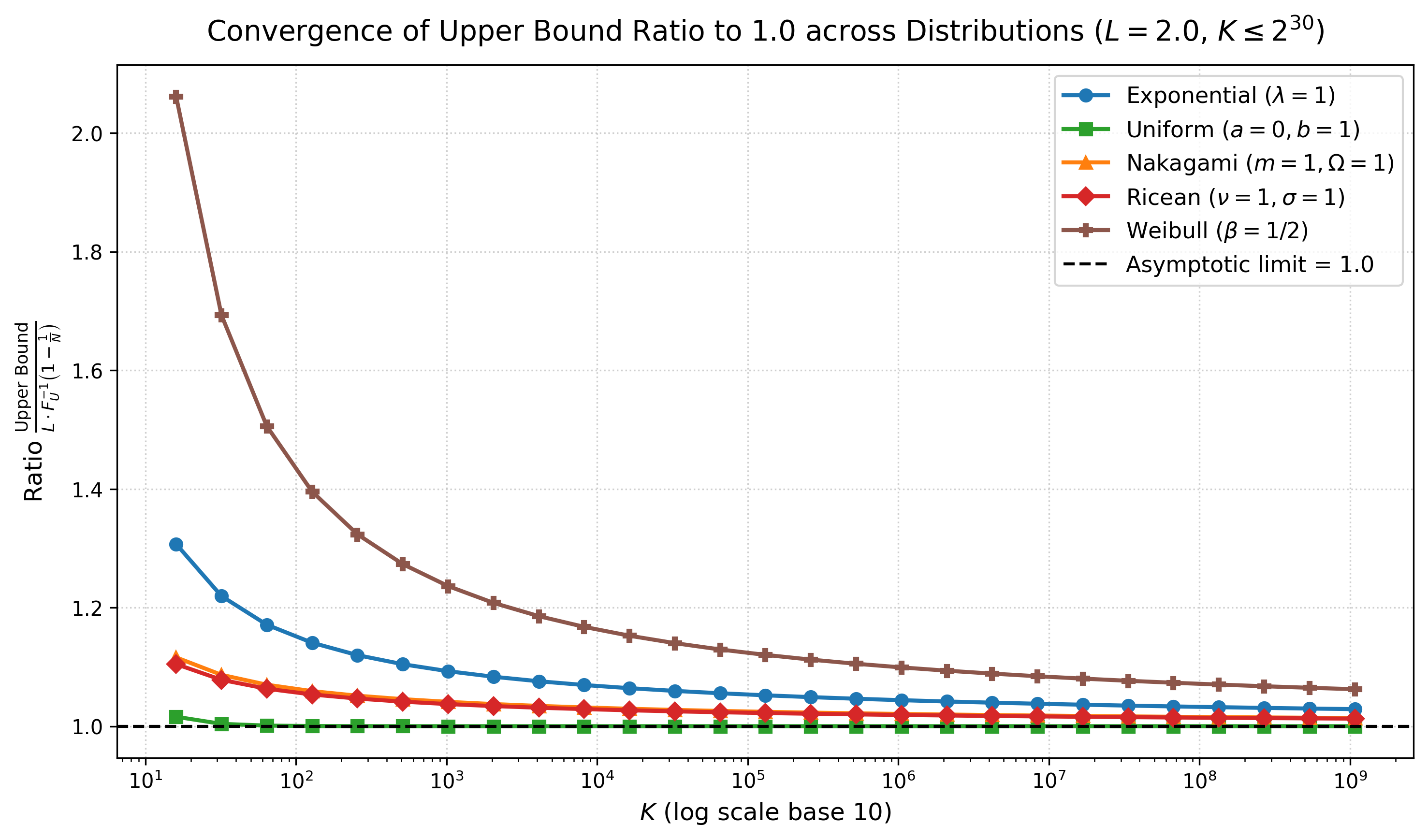}
        \caption{Upper bound normalized by \(Lb_N\).}
        \label{fig:proportional-upper-ratio}
    \end{subfigure}
    \hfill
    \begin{subfigure}[t]{0.32\textwidth}
        \centering
        \includegraphics[width=\linewidth]
        {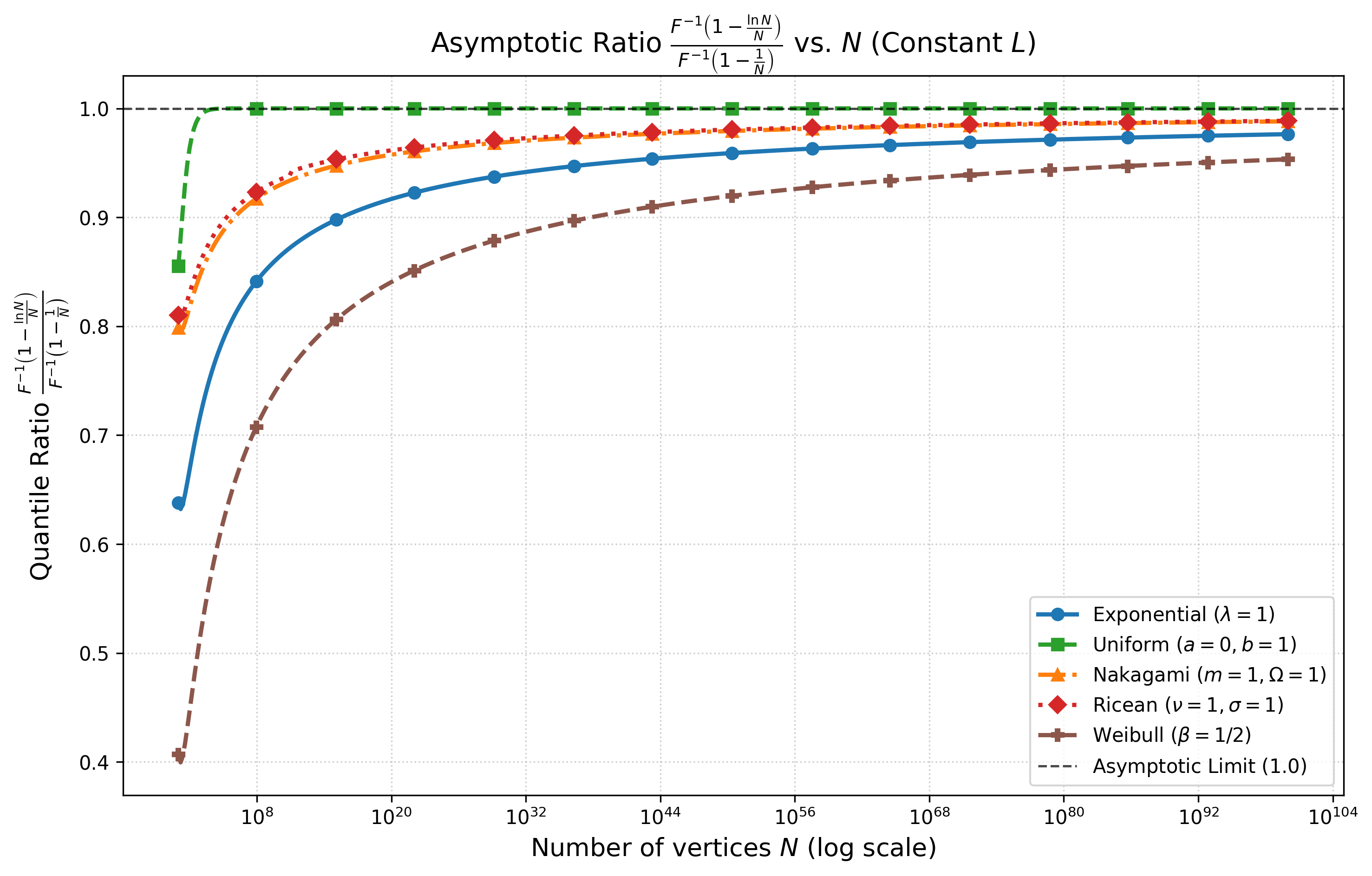}
        \caption{Deterministic quantile ratio \(q_N/b_N\).}
        \label{fig:proportional-quantile-ratio}
    \end{subfigure}

    \caption{Convergence diagnostics in the proportional-growth regime
    \(K=LN\) for the Weibull, uniform, Nakagami, Rician, and exponential
    distributions. The first two panels show the normalized empirical medians
    \(\operatorname{median}(\underline{M}_{N,LN})/(Lq_N)\) and
    \(\operatorname{median}(\overline{M}_{N,LN})/(Lb_N)\), respectively,
    over \(1{,}000\) Monte Carlo trials. The third panel shows the deterministic
    ratio \(q_N/b_N\) over a much larger range of \(N\). The horizontal
    reference line in each panel is \(1\).}
    \label{fig:simulations-proportional}
\end{figure*}

\subsection{Fixed-Agent Model (\(K\gg N\))}
Finally, we consider the fixed-agent regime, in which \(N\) remains fixed
while \(K\) tends to infinity. We consider \(N=2\) and \(N=32\) and evaluate
the empirical bounds for increasing values of \(K\). Including \(N=32\)
examines whether the fixed-agent asymptotic behavior is already visible for a
substantially larger, but still fixed, number of agents. Throughout this
subsection, for \(N=2\) we use
\(K\in\{2^4,2^5,\ldots,2^{11}\}\), whereas for \(N=32\) we use
\(K\in\{2^6,2^7,\ldots,2^{11}\}\).

For the empirical upper bound, we use the standard linear-programming relaxation of the max-min fair allocation problem presented in \cite{zehavi2013weighted}. For a realization $u=(u_{i,j})$, introduce allocation variables $x_{i,j}$ and a common utility level $t$, and define
\begin{align}
\overline{m}_{N,K}(u)
&:= \max_{x,t}\ t \\
\text{s.t.}\qquad
\sum_{j=1}^{K} u_{i,j}x_{i,j}
&\geq t,
&& i=1,\ldots,N, \nonumber\\
\sum_{i=1}^{N} x_{i,j}
&=1,
&& j=1,\ldots,K, \nonumber\\
0\leq x_{i,j}
&\leq 1,
&& i=1,\ldots,N,\quad j=1,\ldots,K.
\nonumber
\end{align}
The original indivisible-allocation problem is recovered by imposing
$x_{i,j}\in\{0,1\}$. Hence the continuous relaxation enlarges the feasible
set and satisfies
\begin{equation}
m_{N,K}(u)\leq \overline{m}_{N,K}(u).
\end{equation}
For the random utility matrix, we write
$\overline{M}_{N,K}:=\overline{m}_{N,K}(U_{(N,K)})$. \newline\newline
For the empirical lower bound, we use the forest-rounding procedure of
\cite{zehavi2013weighted}, formulated through the bipartite support graph.
Let \(x^\star\) be an optimal solution of the linear-programming relaxation.
Its support graph has one vertex for each agent and one vertex for each
resource, with an edge between agent \(i\) and resource \(j\) whenever
\(x^\star_{i,j}>0\). If this graph contains a cycle, the allocation fractions
along the cycle are adjusted while preserving feasibility and every agent's
total fractional utility, until at least one positive fraction becomes zero.
Repeating this cycle-elimination procedure produces an optimal fractional
allocation whose bipartite support graph is a forest.

Root each tree at an arbitrary agent and assign every resource entirely to
its parent agent. Resources that are already assigned integrally retain
their assignments, since their vertices are leaves. Each non-root agent
loses only its fractional share of its parent resource, while receiving
all its child resources entirely. The root agent receives all its incident
resources entirely. The resulting allocation is integral and feasible. Denote the resulting binary allocation
by \(\widehat{x}^{\mathrm F}(u)\), and define
\begin{align}
\underline{m}_{N,K}(u)
:=
\min_{1\leq i\leq N}
\sum_{j=1}^{K}u_{i,j}\widehat{x}^{\mathrm F}_{i,j}(u).
\end{align}

The rounded allocation assigns every resource to exactly one agent and is
therefore feasible for the original indivisible-allocation problem. Since
\(m_{N,K}(u)\) is the maximum minimum utility over all feasible integral
allocations, the value attained by this particular allocation satisfies

\begin{align}
\underline{m}_{N,K}(u)\leq m_{N,K}(u).
\end{align}

For the random utility matrix, we write

\begin{align}
\underline{M}_{N,K}
:=
\underline{m}_{N,K}\!\left(U_{(N,K)}\right)
\end{align}

Recall that
\begin{align}
    \mu_U^N
    :=
    \mathbb{E}\left[U_{[N:N]}\right],
\end{align}
where \(U_{[N:N]}\) denotes the maximum of \(N\) independent utility values.
For i.i.d.\ utilities, Theorem~\ref{thm:fixed-agent} gives
\begin{align}
    M_{N,K}
    \sim
    \frac{K}{N}\mu_U^N,
    \qquad K\to\infty.
    \label{eq:sim-fixed-agent-benchmark}
\end{align}

Figure~\ref{fig:simulations-fixed-agents} displays the \(5\)th percentile,
median, and \(95\)th percentile of both the empirical lower and upper bounds,
together with the theoretical value \(\frac{K}{N}\mu_U^N\), for \(N=2\) and
\(N=32\).
Figure~\ref{fig:simulations-fixed-normalized} additionally compares the seven
utility distributions through the two normalized medians
\begin{equation}
    \frac{\operatorname{median}(\underline{M}_{N,K})}
    {\left(\frac{K}{N}\right)\mu_U^{N}},
    \qquad
    \frac{\operatorname{median}(\overline{M}_{N,K})}
    {\left(\frac{K}{N}\right)\mu_U^{N}},
    \qquad N\in\{2,32\}.
    \label{eq:sim-fixed-normalized-bound-medians}
\end{equation}
The horizontal reference line at \(1\) represents the limit predicted by
Theorem~\ref{thm:fixed-agent}. Because the exact max-min value lies between
the two bounds, the approach of both normalized medians toward \(1\) provides
numerical support for the theoretical characterization, while their
separation illustrates the remaining finite-size approximation gap. The
normalized curves also allow the convergence rates of the Weibull, uniform,
Nakagami, Rician, Pareto type II, and exponential models to be compared on a common
scale. For the Pareto type II case, the comparison with
Theorem~\ref{thm:fixed-agent} requires a finite variance, thus $\alpha=3$ is chosen.

\begin{figure*}[htbp]
    \centering

    \begin{subfigure}[t]{0.32\textwidth}
        \centering
        \includegraphics[width=\linewidth]
        {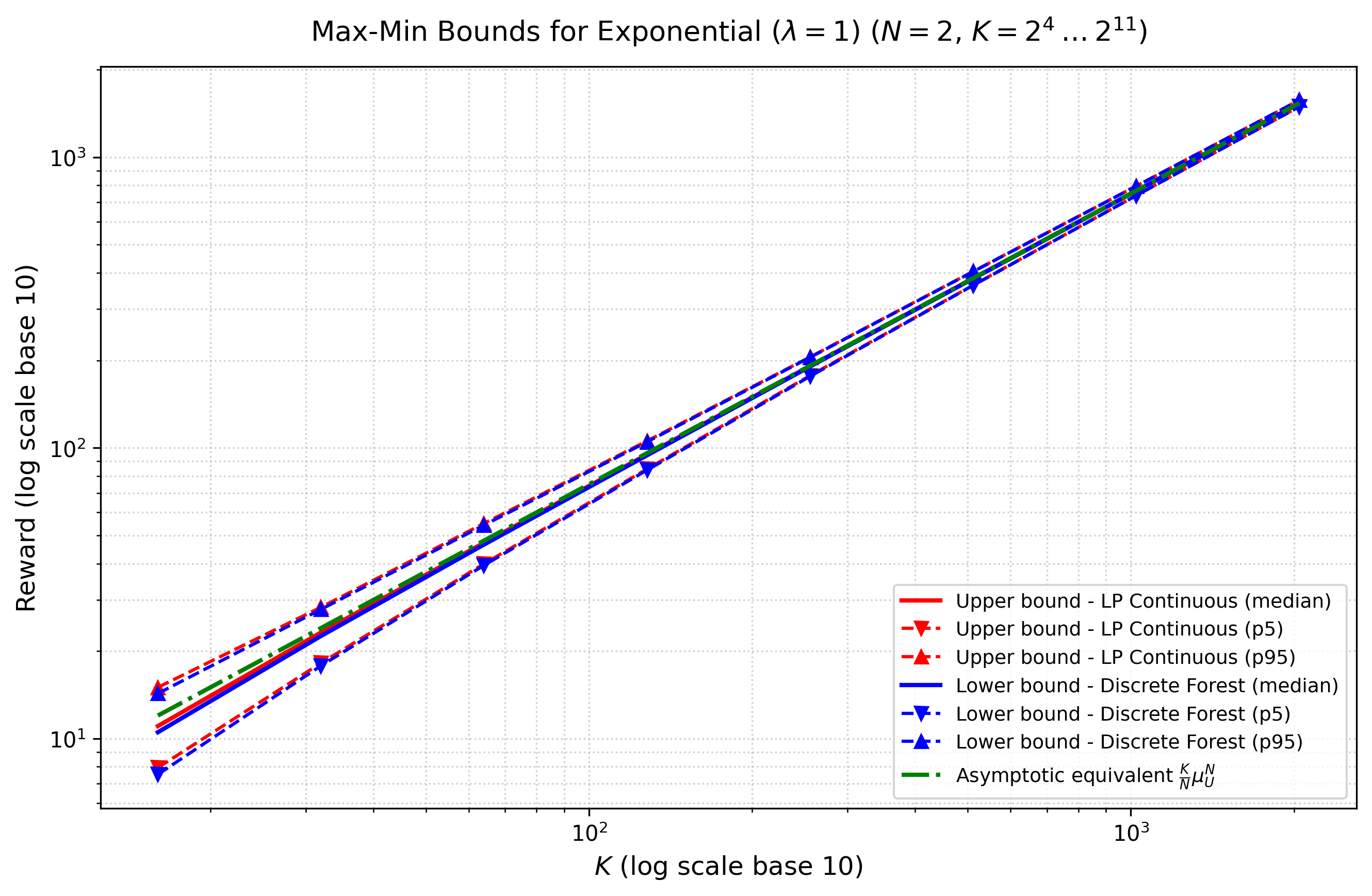}
        \caption{Exponential distribution, \(N=2\).}
        \label{fig:fixed-exponential}
    \end{subfigure}
    \hfill
    \begin{subfigure}[t]{0.32\textwidth}
        \centering
        \includegraphics[width=\linewidth]
        {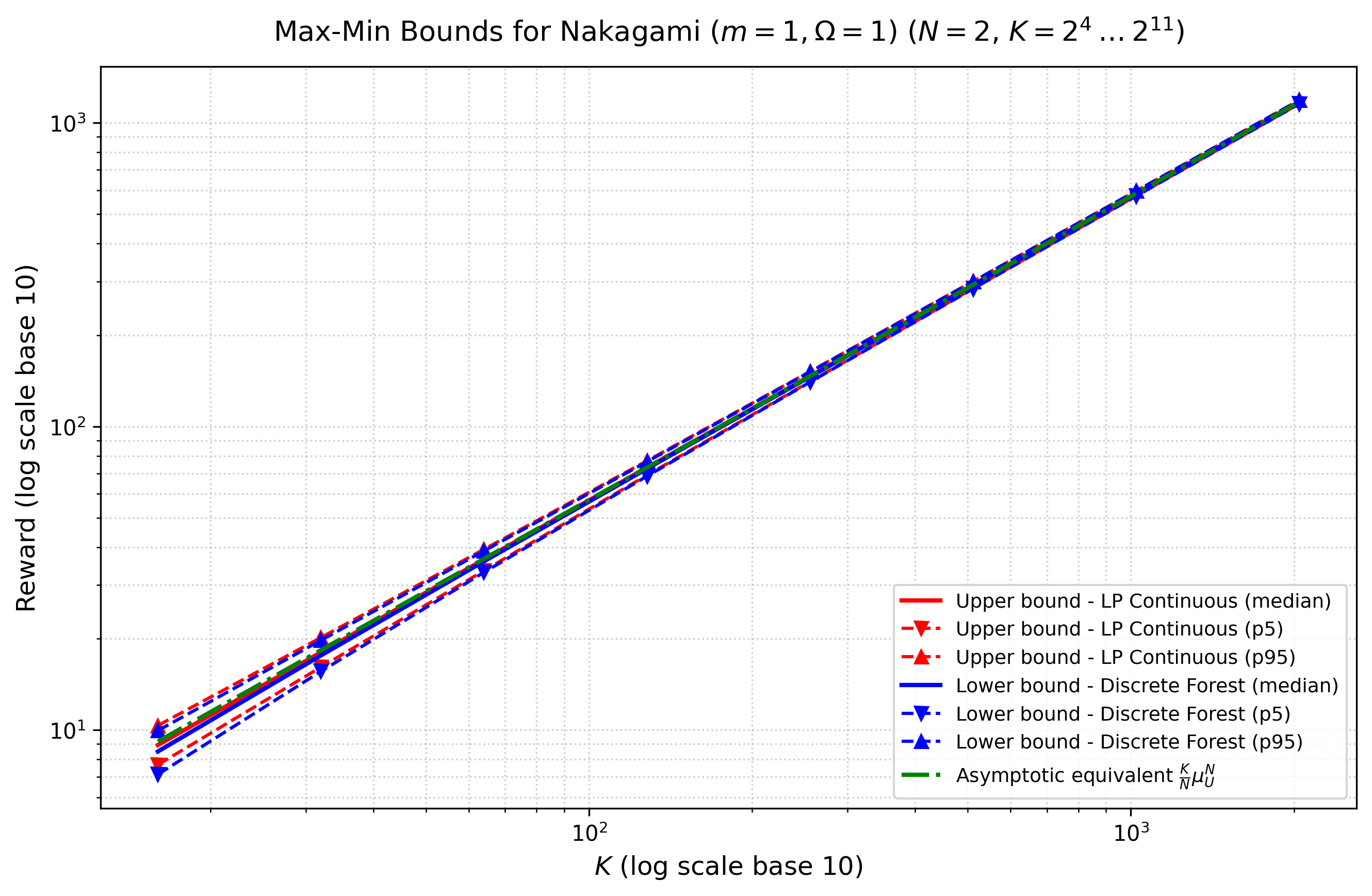}
        \caption{Nakagami distribution, \(N=2\).}
        \label{fig:fixed-nakagami}
    \end{subfigure}
    \hfill
    \begin{subfigure}[t]{0.32\textwidth}
        \centering
        \includegraphics[width=\linewidth]
        {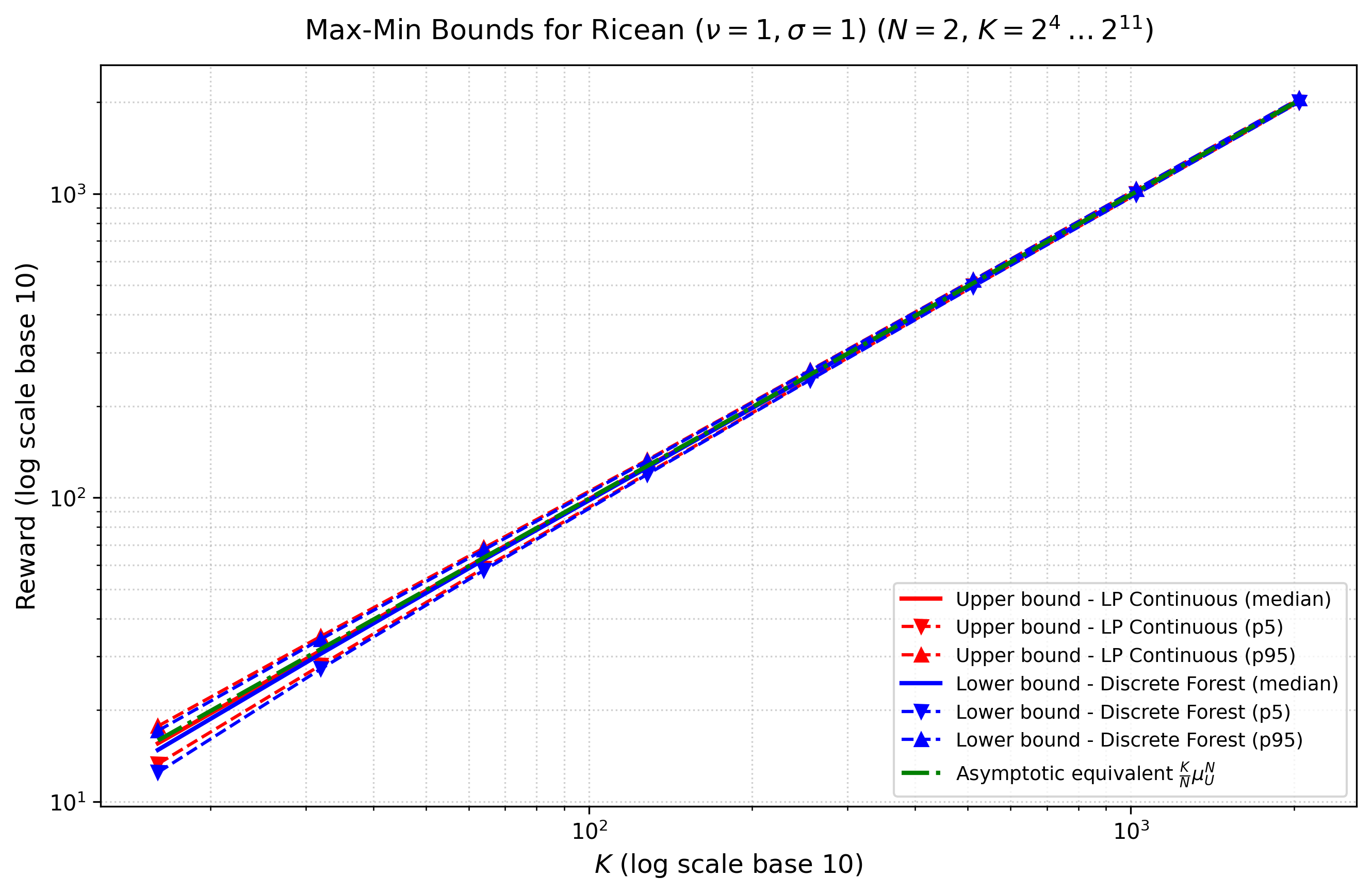}
        \caption{Rician distribution, \(N=2\).}
        \label{fig:fixed-rice}
    \end{subfigure}

    \medskip

    \begin{subfigure}[t]{0.32\textwidth}
        \centering
        \includegraphics[width=\linewidth]
        {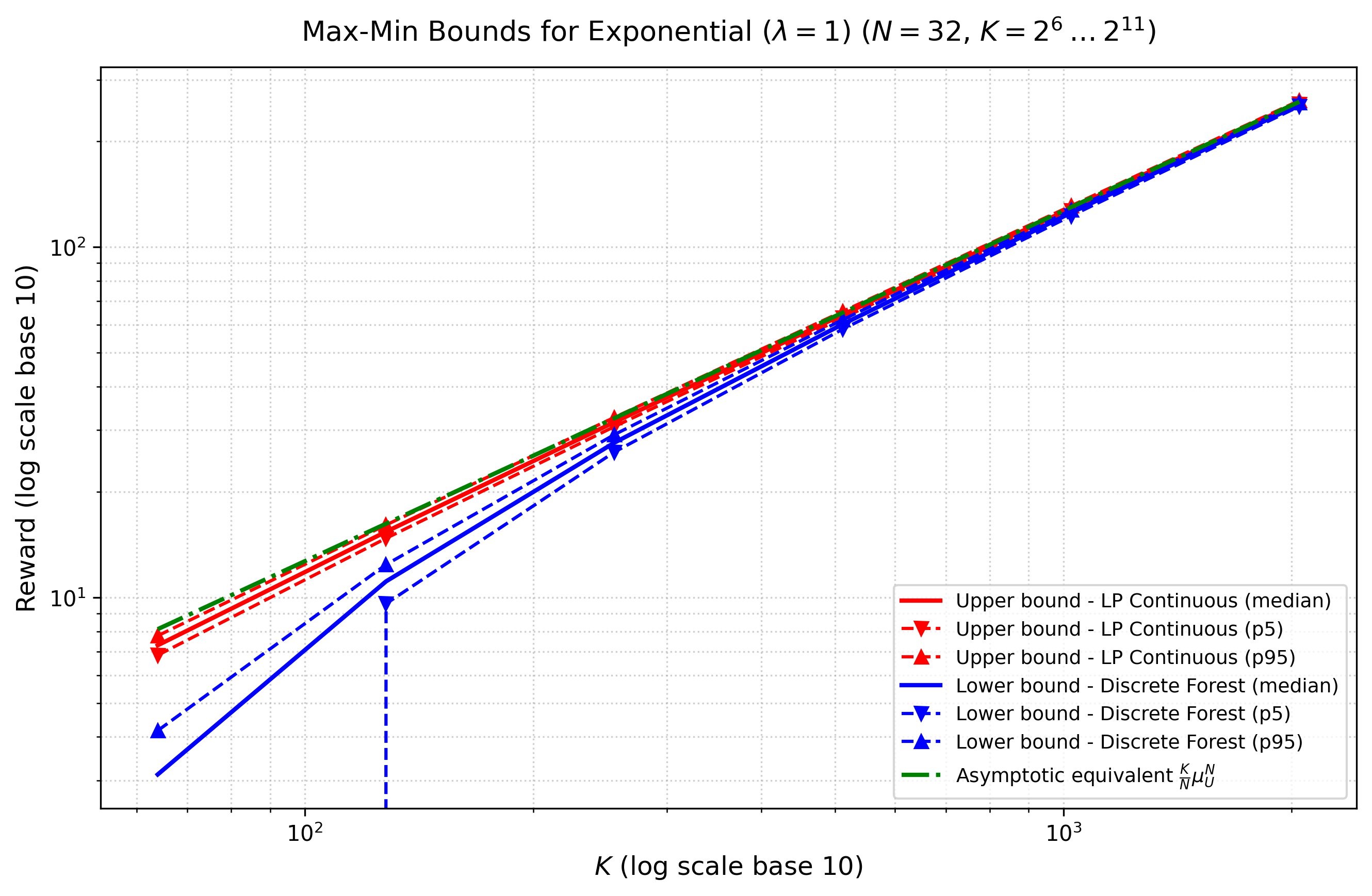}
        \caption{Exponential distribution, \(N=32\).}
        \label{fig:fixed-exponential-32}
    \end{subfigure}
    \hfill
    \begin{subfigure}[t]{0.32\textwidth}
        \centering
        \includegraphics[width=\linewidth]
        {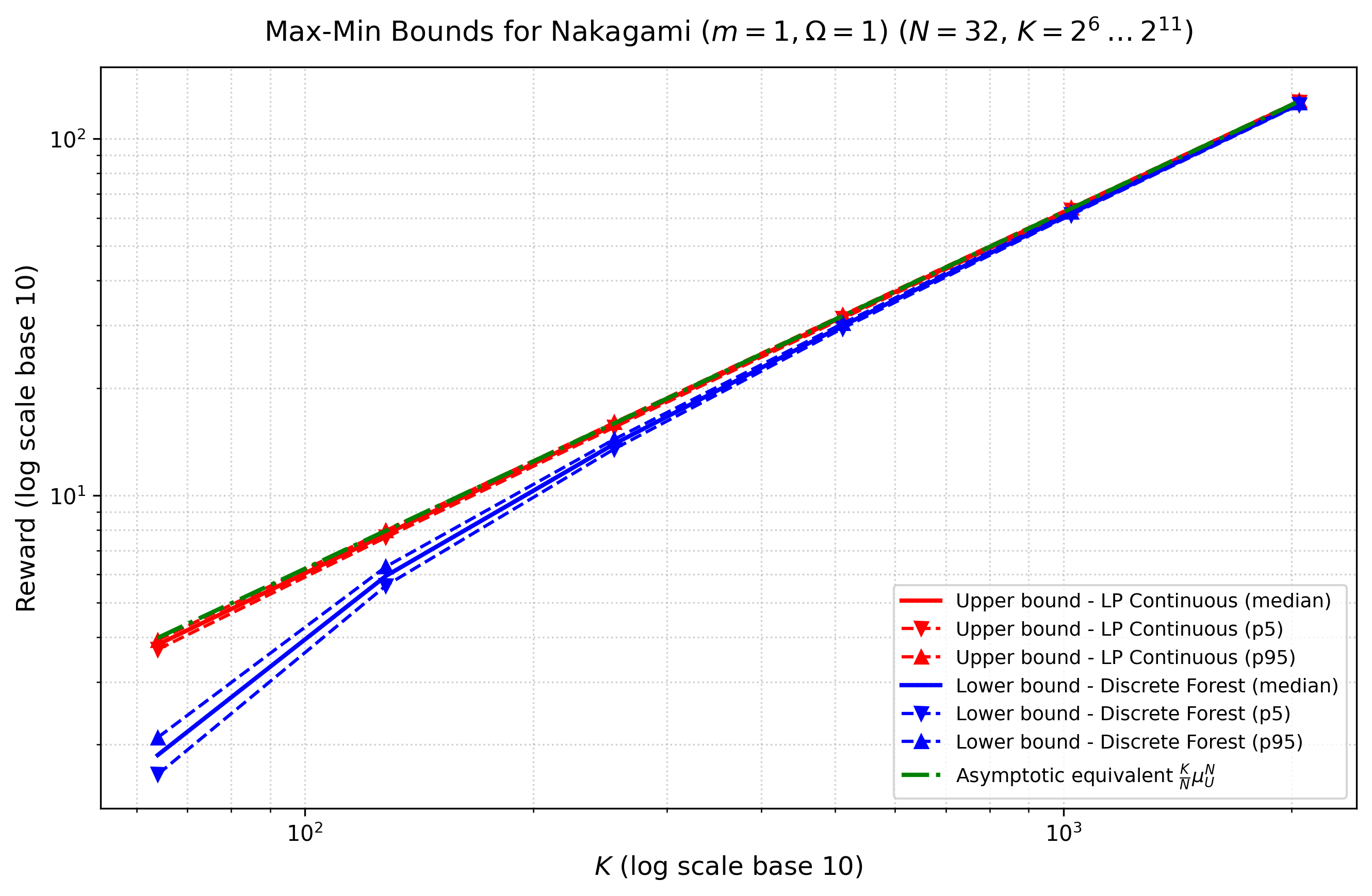}
        \caption{Nakagami distribution, \(N=32\).}
        \label{fig:fixed-nakagami-32}
    \end{subfigure}
    \hfill
    \begin{subfigure}[t]{0.32\textwidth}
        \centering
        \includegraphics[width=\linewidth]
        {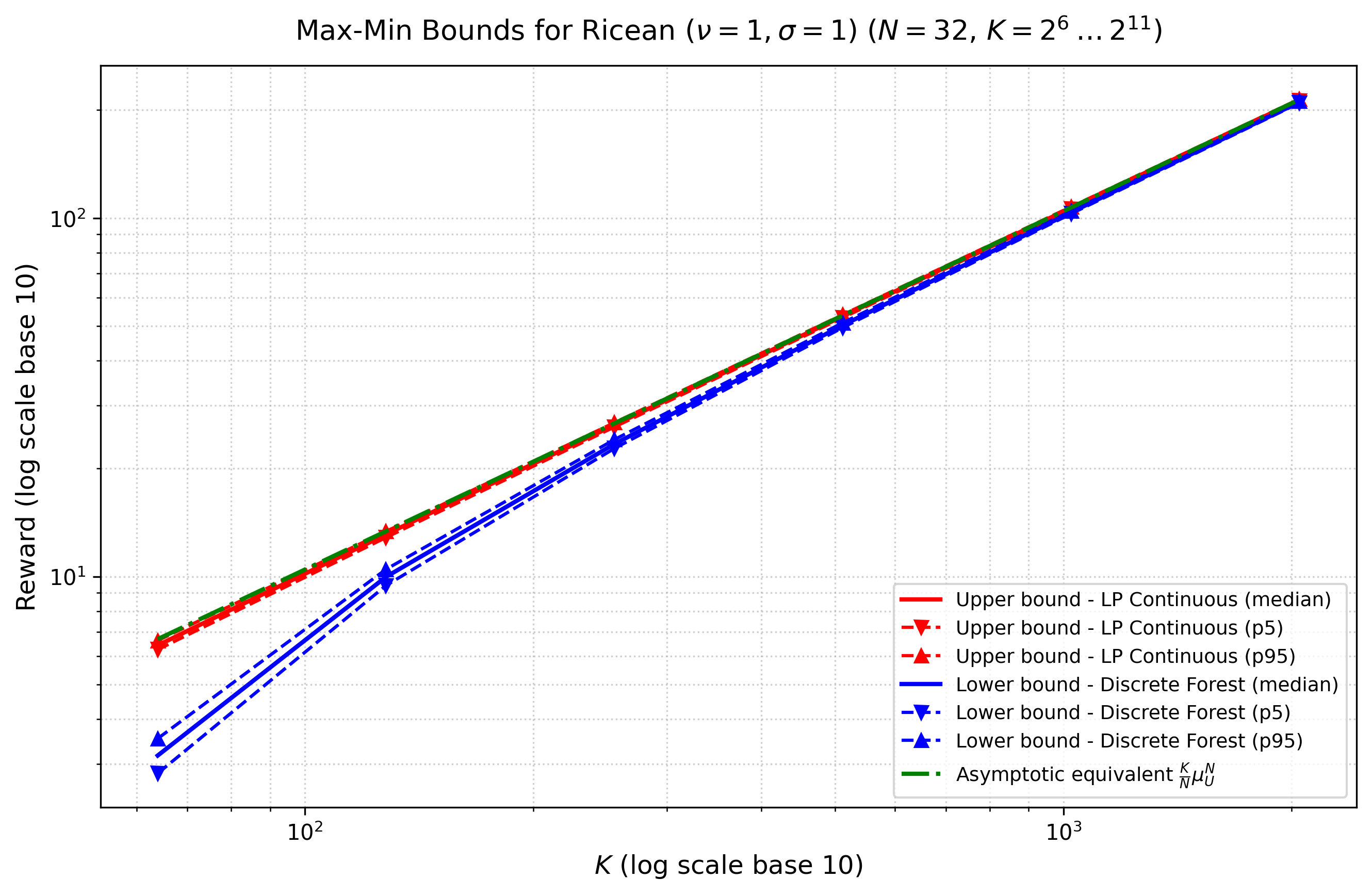}
        \caption{Rician distribution, \(N=32\).}
        \label{fig:fixed-rice-32}
    \end{subfigure}

    \caption{Empirical bounds in the fixed-agent regime with \(N=2\) and
    \(N=32\). The exponential, Nakagami, and Rician parameters are
    \(\lambda=1\), \((m,\Omega)=(1,1)\), and \((\nu,\sigma)=(1,1)\),
    respectively. For \(N=2\), \(K\) ranges from \(2^4\) to \(2^{11}\);
    for \(N=32\), it ranges from \(2^6\) to \(2^{11}\), in powers of two.
    For both the forest-rounded lower bound and the linear-programming upper
    bound, the \(5\)th percentile, median, and \(95\)th percentile over
    \(1{,}000\) Monte Carlo trials are presented as functions of \(K\).
    The theoretical curve is \(\frac{K}{N}\mu_U^N\), as established in
    Theorem~\ref{thm:fixed-agent}.}
    \label{fig:simulations-fixed-agents}
\end{figure*}

\begin{figure*}[htbp]
    \centering

    \begin{subfigure}[t]{0.48\textwidth}
        \centering
 \includegraphics[width=\linewidth]
        {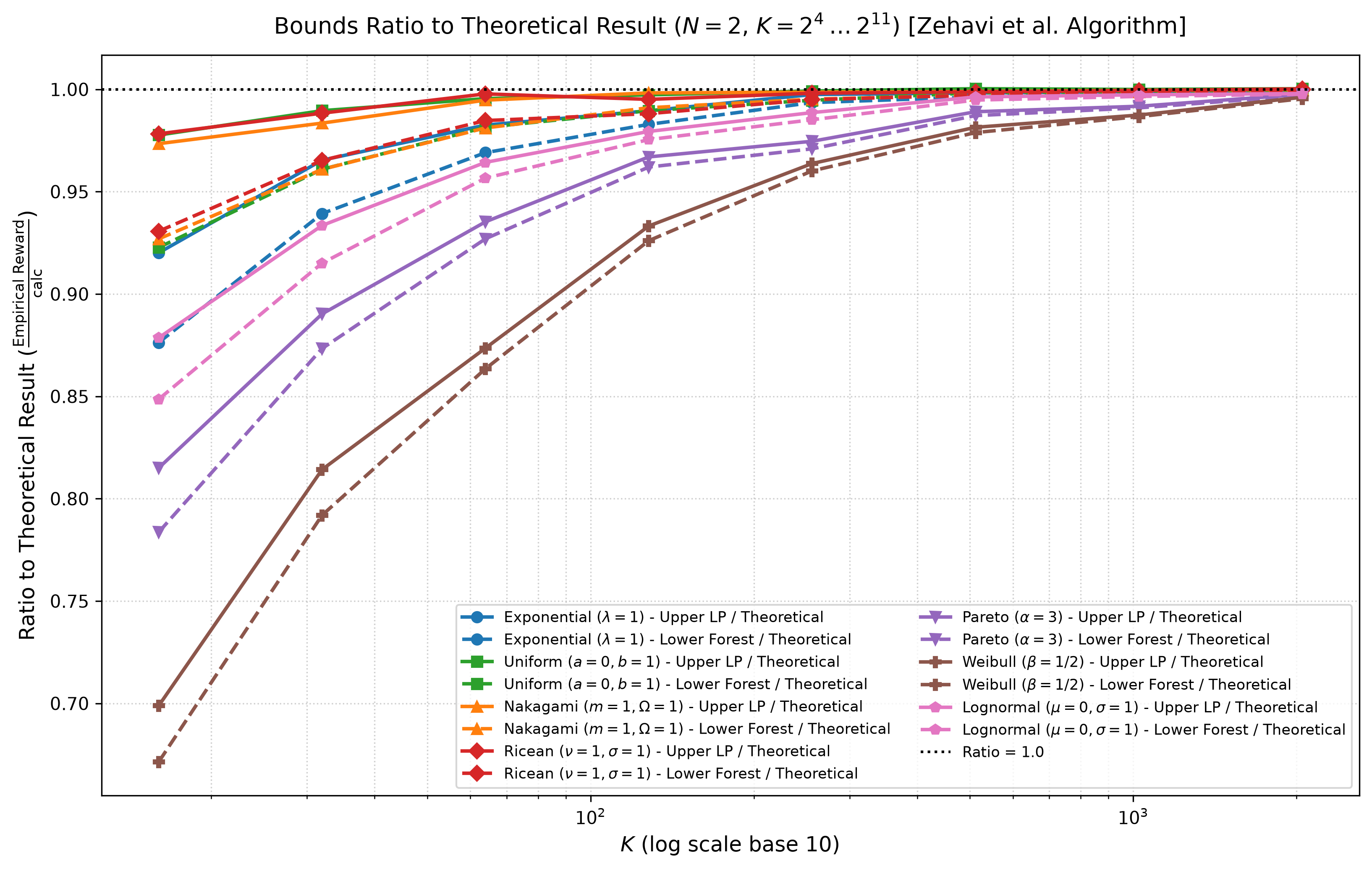}
        \caption{\(N=2\), with \(K=2^4,\ldots,2^{11}\).}
        \label{fig:fixed-normalized-n2}
    \end{subfigure}
    \hfill
    \begin{subfigure}[t]{0.48\textwidth}
        \centering
\includegraphics[width=\linewidth]
        {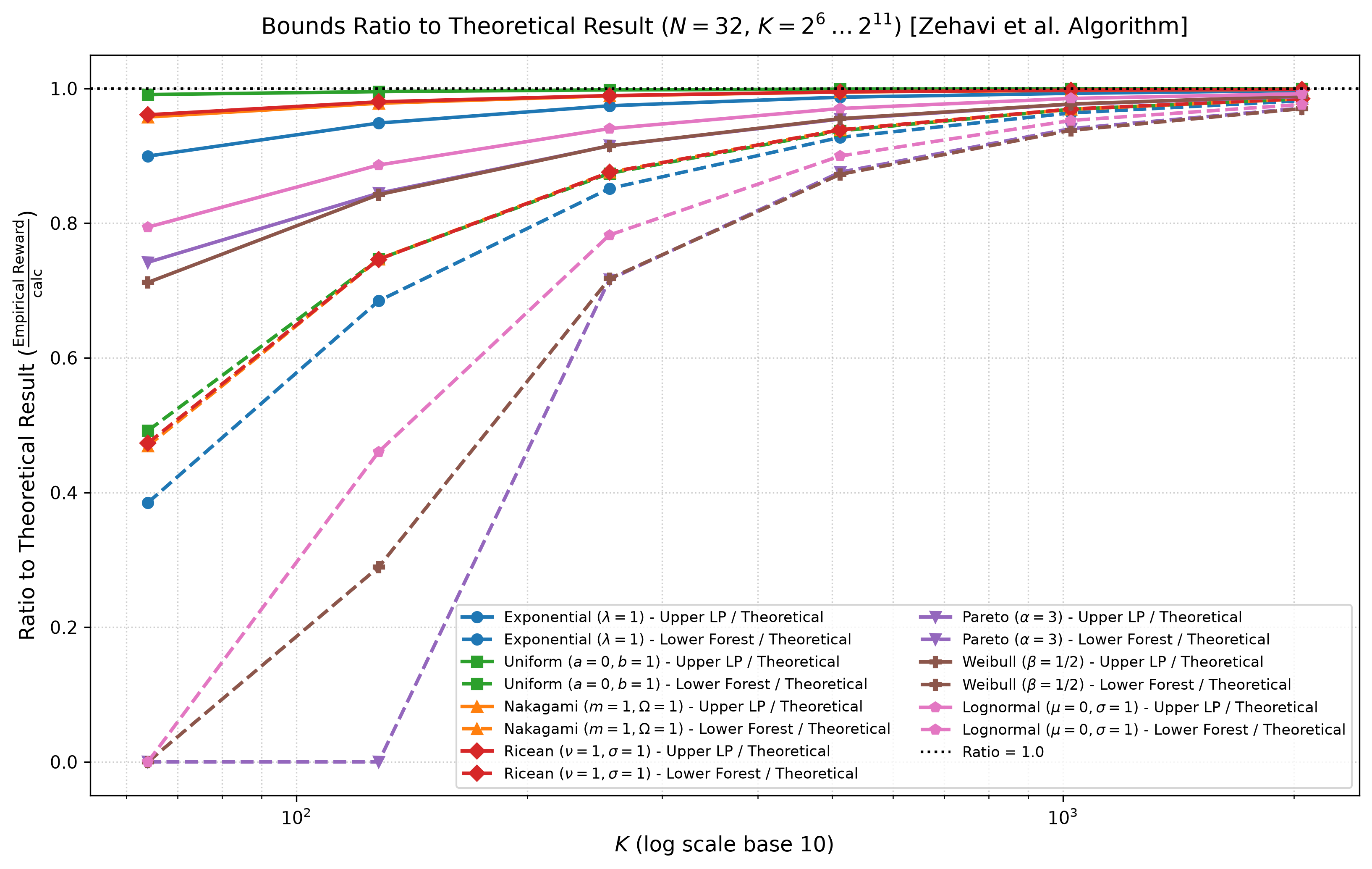}
        \caption{\(N=32\), with \(K=2^6,\ldots,2^{11}\).}
        \label{fig:fixed-normalized-n32}
    \end{subfigure}

    \caption{Normalized empirical lower and upper-bound medians in the
    fixed-agent regime for \(N=2\) and \(N=32\). For each of the seven utility
    distributions, both panels show
    \(\operatorname{median}(\underline{M}_{N,K})/
    (\frac{K}{N}\mu_U^{N})\) and
    \(\operatorname{median}(\overline{M}_{N,K})/
    (\frac{K}{N}\mu_U^{N})\), computed over \(1{,}000\) Monte Carlo trials.
    The horizontal reference line is \(1\).}
    \label{fig:simulations-fixed-normalized}
\end{figure*}

\section{Conclusion}
\label{sec:conclusion}

We developed a tail-based asymptotic theory for max-min fair allocation under i.i.d.\ random utilities. In contrast to distribution-specific analyses, our results identify explicit upper-tail conditions that determine the asymptotic max-min value. In the balanced regime $K=N$, the perfect-matching threshold yields a quantile characterization valid for the broad class of distributions with regularly varying upper-tail quantiles, with stronger additive convergence for admissible Weibull-type tails. In the proportional-growth regime $K=LN$, an $L$-matching lower bound and a sum-welfare upper bound yield

$$
M_{N,LN}\sim L F^{-1}\!\left(1-\frac{\ln N}{N}\right)
$$

for Weibull-type utilities and distributions with a finite positive upper endpoint.

These characterizations also establish that, in the proportional-growth regime under the stated assumptions, every max-min fair allocation has asymptotically optimal aggregate welfare, and hence its price of fairness converges to zero. In the complementary fixed-agent regime, $N$ fixed and $K\to\infty$, the opportunistic allocation is asymptotically max-min fair optimal and

$$
\frac{NM_{N,K}}{S_{N,K}}\xrightarrow{P}1,
$$

again implying a vanishing price of fairness.

Thus, broad stochastic utility models admit strong max-min fairness with asymptotically negligible efficiency loss. Natural extensions include sharper second-order asymptotics, heavy-tailed proportional-growth regimes, and dependent or heterogeneous utility models.
\section*{Acknowledgement}
OpenAI ChatGPT5.6 was used during the preparation of this manuscript.
The system was used to assist with drafting and revising portions of
the text, improving organization and presentation, checking and clarifying mathematical arguments. Antigravity was used to develop and refine the code used for the numerical experiments.  The authors reviewed and verified all mathematical
arguments, numerical results, references, and final manuscript text, and take full responsibility for the content of the paper.

\begin{appendices}
\section{Proof of Lemma~\ref{Lemma: total convergence order statistic complete}}
\label{app: special case}
Let $\gz_1^N=\frac{\ln N+\ln \ln \ln N}{N},
\gz_2^N=\frac{\ln N - \ln \ln \ln N}{N}$.
In this appendix, we will prove the following lemma:
\admissible*
\begin{proof}
Suppose first that $F$ has a finite positive upper endpoint
\begin{equation}
x_F := \sup\{x \in \mathbb{R} : F(x) < 1\} \in (0,\infty).
\end{equation}

Since
\begin{equation}
\gz_2^N\longrightarrow 0
\qquad \text{and} \qquad
\gz_1^N \longrightarrow 0,
\end{equation}
the definition of the upper endpoint gives
\begin{equation}
F^{-1}\left(1 - \gz_2^N\right) \longrightarrow x_F
\end{equation}
and
\begin{equation}
F^{-1}\left(1 - \gz_1^N\right) \longrightarrow x_F.
\end{equation}

Consequently,
\begin{equation}
F^{-1}\left(1-\gz_2^N\right)
-F^{-1}\left(1-\gz_1^N\right)
\longrightarrow0.
\end{equation}

This proves the result in the bounded-support case.
Now suppose that $F$ has an admissible Weibull-type tail. Define
\begin{equation}
V_F(y):=F^{-1}(1-e^{-y}).
\end{equation}
Assume that, for some $\theta\in[0,2)$,
\begin{equation}
V_F(y)=y^\theta L(y),
\end{equation}
where $L$ is slowly varying, $V_F$ is eventually continuously
differentiable, and
\begin{equation}
\frac{yV_F'(y)}{V_F(y)}\longrightarrow\theta.
\end{equation}
Let \begin{equation} x_N:=\ln N \qquad\text{and}\qquad h_N:=\ln\ln\ln N. \end{equation} $\gz_1^N,\gz_2^N\in(0,1)$. Define \begin{equation} a_N := \ln\left(\frac{1}{\gz_2^N}\right) = \ln N-\ln\left(\ln N-\ln\ln\ln N\right) \end{equation} and \begin{equation} b_N := \ln\left(\frac{1}{\gz_1^N}\right) = \ln N-\ln\left(\ln N+\ln\ln\ln N\right). \end{equation} By the definition of $V_F$, \begin{equation} F^{-1}\left(1-\gz_2^N\right)=V_F(a_N) \end{equation} and \begin{equation} F^{-1}\left(1-\gz_1^N\right)=V_F(b_N). \end{equation} Hence it is enough to prove that \begin{equation} V_F(a_N)-V_F(b_N)\longrightarrow0. \end{equation} First, since \begin{equation} \ln\left(x_N\pm h_N\right)=o(x_N), \end{equation} we have \begin{equation} \frac{a_N}{x_N}\longrightarrow1 \qquad\text{and}\qquad \frac{b_N}{x_N}\longrightarrow1. \end{equation} Moreover, \begin{align} a_N-b_N &= \ln\left( \frac{x_N+h_N}{x_N-h_N} \right) \notag\\ &= \ln\left( \frac{1+h_N/x_N}{1-h_N/x_N} \right). \label{eq:quantile-argument-difference} \end{align} Since $h_N/x_N\to0$ and \begin{equation} \ln\left(\frac{1+z}{1-z}\right) = 2z+o(z) \qquad\text{as }z\to0, \end{equation} it follows that \begin{equation} a_N-b_N = \frac{2h_N}{x_N}\bigl(1+o(1)\bigr). \end{equation} In particular, \begin{equation} a_N-b_N = O\left(\frac{h_N}{x_N}\right). \end{equation} Because $V_F$ is eventually continuously differentiable, the mean-value theorem implies that, for all sufficiently large $N$, there exists \begin{equation} \xi_N\in(b_N,a_N) \end{equation} such that \begin{equation} V_F(a_N)-V_F(b_N) = V_F'(\xi_N)(a_N-b_N). \end{equation} Since $a_N/x_N\to1$ and $b_N/x_N\to1$, we also have \begin{equation} \frac{\xi_N}{x_N}\longrightarrow1. \end{equation} The assumption \begin{equation} \frac{yV_F'(y)}{V_F(y)}\longrightarrow\theta \end{equation} implies that this ratio is bounded for all sufficiently large $y$. Thus, there exists a constant $C>0$ such that \begin{equation} \left|V_F'(\xi_N)\right| \le C\frac{V_F(\xi_N)}{\xi_N} \end{equation} for all sufficiently large $N$. Since $V_F\in\mathrm{RV}_\theta$ and $\xi_N/x_N\to1$, the uniform convergence theorem for regularly varying functions gives \begin{equation} \frac{V_F(\xi_N)}{V_F(x_N)} \longrightarrow1. \end{equation} Also, $\xi_N/x_N\to1$. Consequently, \begin{equation} \left|V_F'(\xi_N)\right| = O\left(\frac{V_F(x_N)}{x_N}\right). \end{equation} Combining this estimate with \begin{equation} a_N-b_N = O\left(\frac{h_N}{x_N}\right), \end{equation} we obtain \begin{align*} 0 &\le V_F(a_N)-V_F(b_N)\\ &= O\left( \frac{h_NV_F(x_N)}{x_N^2} \right). \end{align*} Using the representation \begin{equation} V_F(x_N)=x_N^\theta L(x_N), \end{equation} this becomes \begin{equation} 0 \le V_F(a_N)-V_F(b_N) = O\left( h_Nx_N^{\theta-2}L(x_N) \right). \end{equation} Finally, because $x_N=\ln N$, \begin{equation} h_N = \ln\ln\ln N = \ln\ln x_N. \end{equation} The function \begin{equation} \widetilde L(x):=(\ln\ln x)L(x) \end{equation} is slowly varying. Since $\theta<2$, the standard growth property of slowly varying functions gives \begin{equation} x^{\theta-2}\widetilde L(x) \longrightarrow0. \end{equation} Therefore, \begin{equation} h_Nx_N^{\theta-2}L(x_N) \longrightarrow0. \end{equation} It follows that \begin{equation} V_F(a_N)-V_F(b_N)\longrightarrow0, \end{equation} and hence \begin{equation} F^{-1}\left(1-\gz_2^N\right) - F^{-1}\left(1-\gz_1^N\right) \longrightarrow0. \end{equation} \end{proof}

\section{Proof of Lemma~\ref{lemma:regular-tail-quantile-threshold}}
\label{app:regular-tail-quantiles}

\regular*

\begin{proof}
Suppose first that $F$ has a Weibull-type tail. Then its log-tail
quantile $V_F$ is regularly varying at infinity with some finite index
$\theta\geq0$. Recall that
\begin{equation}
Q_F(t)=V_F(\ln t).
\end{equation}
Fix $c>0$ and set $y:=\ln t$. Then
\begin{equation}
\frac{Q_F(ct)}{Q_F(t)}
=
\frac{V_F(y+\ln c)}{V_F(y)}
=
\frac{V_F(\lambda_y y)}{V_F(y)},
\qquad
\lambda_y:=1+\frac{\ln c}{y}\longrightarrow1.
\end{equation}
Because $V_F$ is regularly varying, the uniform convergence theorem
implies that the last ratio converges to one. Hence $Q_F$ is slowly
varying, or equivalently regularly varying with coefficient $\rho=0$.
 Let $h_N:=\ln\ln\ln N$ and define
\begin{equation}
r_N:=\frac{N}{\ln N+h_N},
\qquad
s_N:=\frac{N}{\ln N-h_N}.
\end{equation}
For all sufficiently large $N$, both sequences tend to infinity and
\begin{equation}
\lambda_N:=\frac{r_N}{s_N}
=\frac{\ln N-h_N}{\ln N+h_N}
\longrightarrow1.
\end{equation}

Suppose that $F$ has a regularly varying upper-tail quantile with
coefficient $\rho\geq0$. By the definition of $Q_F$,
\begin{equation}
F^{-1}(1-\gz_1^N)=Q_F(r_N),
\qquad
F^{-1}(1-\gz_2^N)=Q_F(s_N).
\end{equation}
Since $r_N=\lambda_Ns_N$, the uniform convergence theorem for regularly
varying functions gives
\begin{equation}
\frac{F^{-1}(1-\gz_1^N)}{F^{-1}(1-\gz_2^N)}
=\frac{Q_F(\lambda_Ns_N)}{Q_F(s_N)}
\longrightarrow1.
\end{equation}
This proves
\eqref{eq:regular-tail-quantile-matching-ratio} for the unbounded case.

Finally, suppose that $F$ has a finite positive upper endpoint
\begin{equation}
x_F:=\sup\{x\in\mathbb{R}:F(x)<1\}\in(0,\infty).
\end{equation}
Since $\gz_1^N\to0$ and $\gz_2^N\to0$,
\begin{equation}
F^{-1}(1-\gz_1^N)\longrightarrow x_F,
\qquad
F^{-1}(1-\gz_2^N)\longrightarrow x_F.
\end{equation}
Their ratio therefore converges to one, completing the proof.
\end{proof}
\section{Proof of Lemma~\ref{theorem:main-first-model}}
\label{app:Proportional Growth: Lower Bound}

\lowerbound*

\begin{proof}

To prove this, we adapt the methodology used for the single-resource case in Section~\ref{sec: Single Resource Per agent}. First, Definition~\ref{def:L-matching} introduces the concept of an L-matching, which generalizes perfect matching to many resources per agent. Next, in lemma~\ref{lemma:erdos_reyni_equivalent}, we present a result analogous to the \ER theorem~\cite{erdos1966random}, extended to the context of L-matchings. Finally, applying the framework from the single-resource setting, we derive a lower bound for the asymptotic max-min value.
\begin{definition}
    Let $G(V,W)$ be a bipartite graph, and let $L$ be an integer. $G(V,W)$ has an L-matching if every left vertex can select $L$ neighbors, with no two left vertices sharing the same right vertex.
    \label{def:L-matching}
\end{definition}
Note that an allocation of $L$ resources to each agent is equivalent to an L-matching in $G(V,W)$. 
For $L=1$, an L-matching is a perfect matching.
\begin{claim} 
Let $\mathbf u$ be a fixed $N\times LN$ utility matrix. If there is an L-matching in the bipartite graph
\begin{equation}
G_{\mathbf u}(\gt)
=
\bigl(A,B,E_{\mathbf u}(\gt)\bigr),
\qquad
E_{\mathbf u}(\gt)
=
\left\{(i,j):u_{i,j}\geq\gt\right\},
\end{equation}
then
\begin{equation}
L\gt\leq m_{N,LN}(\mathbf u).
\end{equation}
\label{claim:K=LN}
\end{claim}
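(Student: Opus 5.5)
The plan is to argue directly from the definition of an L-matching, in the same spirit as the forward direction of Claim~\ref{claim:K=N}: an L-matching in \(G_{\mathbf u}(\gt)\) induces an explicit feasible allocation, and we then bound its minimum agent utility from below.

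First, fix an L-matching in \(G_{\mathbf u}(\gt)\). For each agent \(i\in A\), let \(S_i\subseteq B\) be the set of \(L\) neighbors selected by \(i\). By Definition~\ref{def:L-matching}, the sets \(S_1,\ldots,S_N\) are pairwise disjoint and each has cardinality \(L\). Since \(|B|=LN\), together they partition \(B\).

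Next, define \(g:B\to A\) by setting \(g(j)=i\) for \(j\in S_i\). Because the \(S_i\) partition \(B\), this \(g\) is well defined on all of \(B\), so \(g\in\mathcal G_{N,LN}\). Every \(j\in S_i\) is a neighbor of \(i\) in \(G_{\mathbf u}(\gt)\), so \(u_{i,j}\geq\gt\). Hence
\begin{equation}
t_i(g;\mathbf u)=\sum_{j\in S_i}u_{i,j}\geq L\gt
\end{equation}
for every agent \(i\).

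Finally, take the minimum over \(i\) and then the maximum over allocations in \eqref{def:deterministic-max-min}. This gives \(m_{N,LN}(\mathbf u)\geq\min_i t_i(g;\mathbf u)\geq L\gt\).

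There is no real obstacle here. The only point needing care is the partition property, which guarantees that \(g\) is a genuine map on all of \(B\). Even without \(|B|=LN\), any leftover resources could be assigned arbitrarily; since utilities are nonnegative, this would not decrease any agent's utility, so the bound still holds. Unlike Claim~\ref{claim:K=N}, only this one direction is asserted, so no converse argument is needed.
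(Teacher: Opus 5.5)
Your proof is correct and follows the paper's argument: the \(L\)-matching induces a feasible allocation in which each agent receives \(L\) resources of utility at least \(\tau\), so \(t_i(g;\mathbf u)\geq L\tau\) for every \(i\) and hence \(m_{N,LN}(\mathbf u)\geq L\tau\). Your explicit check that the selected neighbor sets partition \(B\), so that \(g\) is defined on every resource, is a detail the paper leaves implicit.
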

Indeed, an L-matching defines an allocation under which every agent
receives $L$ resources, each with utility at least $\gt$. Thus every
agent receives total utility at least $L\gt$, and hence
$m_{N,LN}(\mathbf u)\geq L\gt$.
\begin{lemma}
\label{lemma:erdos_reyni_equivalent}
Let \(L\geq 1\) be a fixed integer and let \(K=LN\). Let
\(\omega=\omega(N)\) and
\begin{equation}
    p=\frac{\ln N+\omega}{N},
\end{equation}
where \(p\in[0,1]\) for all sufficiently large \(N\). Let \(A_N\)
denote the event that the random bipartite graph
\(G_{N,LN,p}\) contains an \(L\)-matching. Then
\begin{equation}
\lim_{N\to\infty} \mathbb{P}(A_N)
=
\begin{cases}
0, & \text{if } \omega\to-\infty,\\
1, & \text{if } \omega\to\infty.
\end{cases}
\end{equation}
\end{lemma}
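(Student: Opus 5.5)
The $0$-statement follows from an isolated-vertex argument, and the $1$-statement from a reduction to Theorem~\ref{theorem:ER-perfect-matching}. Throughout, $G_{N,LN,p}$ has left side $A$ with $|A|=N$, right side $B$ with $|B|=LN$, and each edge present independently with probability $p=(\ln N+\omega)/N$.

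\emph{The case $\omega\to+\infty$.} Partition $B$ into $L$ disjoint blocks $\mathcal B_1,\ldots,\mathcal B_L$, each of size $N$. For each $\ell$, the subgraph induced on $A\cup\mathcal B_\ell$ is distributed as $G_{N,N,p}$. By Theorem~\ref{theorem:ER-perfect-matching}, it contains a perfect matching with probability tending to one. A union bound over the fixed number $L$ of blocks shows that, with probability $1-o(1)$, all $L$ block subgraphs simultaneously contain perfect matchings. Since the blocks are disjoint, the union of these $L$ matchings gives each left vertex exactly $L$ distinct neighbours, and no right vertex is shared. This is an $L$-matching, so $\mathbb P(A_N)\to1$. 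Independence between blocks is not needed for this step.

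\emph{The case $\omega\to-\infty$.} It suffices to show that some left vertex has degree less than $L$ with probability tending to one, since such a vertex cannot select $L$ neighbours. We may assume $\omega\ge-\ln N$, so that $p\ge0$; otherwise we use monotonicity of the $L$-matching property in $p$. The second-moment method has to be applied with care here. A left vertex has degree distributed as $\mathrm{Bin}(LN,p)$, whose mean is $L(\ln N+\omega)$, and this is not small. Counting isolated left vertices therefore gives an expected number
\begin{equation}
N(1-p)^{LN}\approx N e^{-L(\ln N+\omega)}=N^{1-L}e^{-L\omega},
\end{equation}
which need not diverge when $L\ge2$ and $\omega$ tends to $-\infty$ slowly. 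So $\omega\to-\infty$ alone does not force a deficient \emph{left} vertex.

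The right side has the same difficulty. A right vertex has degree $\mathrm{Bin}(N,p)$, so the expected number of isolated right vertices is
\begin{equation}
LN(1-p)^N\approx L e^{-\omega}\to\infty.
\end{equation}
By Chebyshev's inequality, roughly $Le^{-\omega}$ right vertices are isolated. However, an $L$-matching only needs $LN$ right vertices in total and there are exactly $LN$ of them. Hence an $L$-matching saturates every right vertex, and any isolated right vertex forbids it. This is the correct obstruction.

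The proof for $\omega\to-\infty$ is therefore as follows. Let $X$ be the number of isolated right vertices. Then $\mathbb E X = LN(1-p)^N = Le^{-\omega}(1+o(1))\to\infty$, using $Np^2\to0$. The variance calculation is routine: for two right vertices $j\neq j'$, isolation depends on disjoint edge sets, so the indicators are independent and $\operatorname{Var}X\le\mathbb E X$. Chebyshev then gives $\mathbb P(X=0)\le 1/\mathbb E X\to0$, and so $\mathbb P(A_N)\to0$. The main obstacle is the tail approximation $(1-p)^N=e^{-\ln N-\omega+O(Np^2)}$ when $\omega$ is only bounded below by $-\ln N$. If $p$ is not $o(N^{-1/2})$, I would replace $\omega$ by $\max(\omega,-\ln\ln N)$ and invoke monotonicity of the $L$-matching property in $p$.
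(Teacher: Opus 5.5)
Your proposal is correct and follows essentially the same route as the paper. For $\omega\to\infty$ you partition the right side into $L$ blocks of size $N$, apply Theorem~\ref{theorem:ER-perfect-matching} to each block, and take a union bound; for $\omega\to-\infty$ you use that an $L$-matching saturates all $LN$ right vertices, so an isolated right vertex rules it out, and the expected number of these is $Le^{-\omega}(1+o(1))\to\infty$. The differences are only cosmetic: the isolation indicators are independent, so the paper computes $\mathbb{P}(X=0)=\bigl(1-(1-p)^N\bigr)^{LN}\le e^{-\mathbb{E}X}$ directly instead of using Chebyshev, and your closing worry about $p$ not being $o(N^{-1/2})$ cannot arise, because $\omega\to-\infty$ forces $p\le \ln N/N$ eventually and hence $Np^2\to0$.
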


\begin{proof}
Let the two vertex classes of \(G_{N,LN,p}\) have sizes \(N\)
and \(LN\), respectively, and suppose first that
\(\omega\to-\infty\).

Let \(Z_N\) denote the number of isolated vertices in the vertex
class of size \(LN\). A fixed vertex in this class is isolated with
probability
\begin{equation}
    \alpha_N=(1-p)^N.
\end{equation}
Since the sets of edges incident to distinct vertices in this class
are disjoint, the corresponding isolation events are independent.
Consequently,
\begin{equation}
    Z_N\sim\operatorname{Bin}(LN,\alpha_N).
\end{equation}

An \(L\)-matching contains \(LN\) edges with distinct endpoints in
the vertex class of size \(LN\). Since that class contains exactly
\(LN\) vertices, the existence of an \(L\)-matching requires every
such vertex to be nonisolated. Therefore,
\begin{equation}
    A_N\subseteq\{Z_N=0\}.
\end{equation}

Because \(\omega\to-\infty\), we have
\(p\leq \frac{\ln N}{N}\) for all sufficiently large \(N\), and hence
\(Np^2=o(1)\). It follows that
\begin{align*}
    \ln\left(LN\alpha_N\right)
    &=\ln L+\ln N+N\ln(1-p)\\
    &=\ln L+\ln N-Np+O(Np^2)\\
    &=\ln L-\omega+o(1)
      \longrightarrow\infty.
\end{align*}
Thus \(LN\alpha_N\to\infty\), and
\begin{equation}
\begin{aligned}
    \mathbb{P}(A_N)
    &\leq \mathbb{P}(Z_N=0)\\
    &=(1-\alpha_N)^{LN}\\
    &\leq e^{-LN\alpha_N}
      \longrightarrow 0.
\end{aligned}
\end{equation}

Now suppose that \(\omega\to\infty\). Partition the vertex class of
size \(LN\) into \(L\) disjoint sets
\begin{equation}
    B_1,\ldots,B_L,
    \qquad |B_\ell|=N.
\end{equation}
For each \(\ell\in\{1,\ldots,L\}\), let \(A_N^{(\ell)}\) be the
event that the bipartite graph induced by the vertex class of size
\(N\) and \(B_\ell\) contains a perfect matching. Each such induced
graph is distributed as \(G_{N,N,p}\).

By the Erd\H{o}s-R\'enyi perfect-matching threshold theorem
\cite{erdos1966random,erdos1968random}
(see also \cite[Theorem~6.1]{frieze2016introduction}),
\begin{equation}
    \mathbb{P}\left(A_N^{(\ell)}\right)\longrightarrow 1
    \qquad\text{for every }\ell\in\{1,\ldots,L\}.
\end{equation}
Since \(L\) is fixed, the union bound gives
\begin{equation}
\begin{aligned}
    \mathbb{P}\left(\bigcap_{\ell=1}^{L}A_N^{(\ell)}\right)
    &\geq
    1-\sum_{\ell=1}^{L}\mathbb{P}\left(\left(A_N^{(\ell)}\right)^c\right)\\
    &\longrightarrow 1.
\end{aligned}
\end{equation}
On the event \(\bigcap_{\ell=1}^{L}A_N^{(\ell)}\), the union of the
\(L\) perfect matchings assigns one distinct resource from each
block \(B_\ell\) to every agent. Their union is therefore an
\(L\)-matching in \(G_{N,LN,p}\). Hence
\begin{equation}
    \mathbb{P}(A_N)
    \geq
    \mathbb{P}\left(\bigcap_{\ell=1}^{L}A_N^{(\ell)}\right)
    \longrightarrow 1.
\end{equation}
This completes the proof.
\end{proof}

Let
\begin{equation}
\gz_1^N
:=
\frac{\ln N+\ln\ln\ln N}{N},
\qquad
\gt_N
:=
F^{-1}\left(1-\gz_1^N\right).
\end{equation}
The random threshold graph
$G_{\mU_{(N,LN)}}(\gt_N)$ has edge probability $\gz_1^N$.
Therefore, by Lemma~\ref{lemma:erdos_reyni_equivalent}, it contains
an L-matching with probability tending to one. Applying
Claim~\ref{claim:K=LN} realization-wise shows that, with probability
tending to one,
\begin{align}
\label{eq:main-theorem-proportional}
L F^{-1}\left(1-\gz_1^N\right)\leq M_{N,LN}.
\end{align}

Applying
Lemma~\ref{Lemma: total convergence order statistic complete}  to \eqref{eq:main-theorem-proportional} proves part~(b) of
Lemma~\ref{theorem:main-first-model}. Similarly, applying
Lemma~\ref{lemma:regular-tail-quantile-threshold} proves
part~(a).
    
\end{proof}
Computing $F^{-1}\left(1-\frac{\ln N}{N}\right)$ requires knowledge of the quantile function of $F$. In many cases, this can either be computed (e.g., for a Rayleigh fading channel) or well approximated. 
\section{Proof of Lemma~\ref{lemma: Asymptotic equivalence of ratio quantiles}}
\label{app:first-model}

\weibull*

\begin{proof}
Define the log-tail quantile function by
\begin{equation}
V_{F}(y):=F^{-1}\left(1-e^{-y}\right),
\qquad y>0.
\end{equation}
Suppose first that $F$ has a finite positive upper endpoint
\begin{equation}
x_F:=\sup\{x\in\mathbb{R}:F(x)<1\}\in(0,\infty).
\end{equation}
Since
\begin{equation}
\frac1N\longrightarrow0
\qquad\text{and}\qquad
\gz_1^N\longrightarrow0,
\end{equation}
the definition of the upper endpoint gives
\begin{equation}
F^{-1}\left(1-\frac1N\right)\longrightarrow x_F
\end{equation}
and
\begin{equation}
F^{-1}\left(1-\gz_1^N\right)\longrightarrow x_F.
\end{equation}
Because $x_F>0$, it follows that
\begin{equation}
\frac{
F^{-1}\left(1-\frac1N\right)
}{
F^{-1}\left(1-\gz_1^N\right)
}
\longrightarrow
\frac{x_F}{x_F}
=1.
\end{equation}
This proves the result in the bounded-support case.

We now suppose that $F$ has an unbounded upper endpoint and that
$V_{F}$ is regularly varying at infinity with some
finite index $\theta\geq 0$; that is,
\begin{equation}
\lim_{y\to\infty}
\frac{V_{F}(cy)}{V_{F}(y)}
=
c^\theta
\qquad\text{for every fixed }c>0.
\end{equation}

For all sufficiently large $N$, we have
\begin{equation}
0<\gz_1^N<1.
\end{equation}
By the definition of $V_{F}$,
\begin{equation}
F^{-1}\left(1-\frac{1}{N}\right)
=
V_{F}(\ln N),
\end{equation}
whereas
\begin{equation}
F^{-1}\left(1-\gz_1^N\right)
=
V_{F}\left(\ln\frac{1}{\gz_1^N}\right)
=
V_{F}\left(
\ln\frac{N}{\ln N+\ln\ln\ln N}
\right).
\end{equation}

Define
\begin{equation}
a_N:=\ln N
\end{equation}
and
\begin{equation}
b_N
:=
\ln\frac{N}{\ln N+\ln\ln\ln N}
=
\ln N-\ln\bigl(\ln N+\ln\ln\ln N\bigr).
\end{equation}
Since
\begin{equation}
\frac{
\ln\bigl(\ln N+\ln\ln\ln N\bigr)
}{
\ln N
}
\longrightarrow 0,
\end{equation}
we have
\begin{equation}
b_N\longrightarrow\infty
\end{equation}
and
\begin{equation}
\frac{a_N}{b_N}
=
\frac{\ln N}{
\ln N-\ln\bigl(\ln N+\ln\ln\ln N\bigr)
}
\longrightarrow 1.
\end{equation}

Set
\begin{equation}
\lambda_N:=\frac{a_N}{b_N}.
\end{equation}
Then $\lambda_N\to1$ and $a_N=\lambda_N b_N$. Therefore,
\begin{equation}
\frac{
F^{-1}\left(1-\frac{1}{N}\right)
}{
F^{-1}\left(1-\gz_1^N\right)
}
=
\frac{V_{F}(a_N)}{V_{F}(b_N)}
=
\frac{V_{F}(\lambda_Nb_N)}{V_{F}(b_N)}.
\end{equation}

Since $V_{F}\in\operatorname{RV}_{\theta}$, the uniform convergence
theorem for regularly varying functions yields
\begin{equation}
\frac{V_{F}(\lambda_Nb_N)}{V_{F}(b_N)}
\longrightarrow 1,
\end{equation}
because $b_N\to\infty$ and $\lambda_N\to1$. Hence,
\begin{equation}
\frac{
F^{-1}\left(1-\frac{1}{N}\right)
}{
F^{-1}\left(1-\gz_1^N\right)
}
\longrightarrow 1.
\end{equation}
\end{proof}

\section{Proportional Growth: Upper Bound}
\label{app:Proportional Growth: Upper Bound}
To gain a better understanding of the behavior of the optimal sum welfare value, we first need to understand the nature of the bounds of the max-min value.
The following lemma shows that the upper bound for the max-min value,
$S_{N,LN}/N$, behaves asymptotically as
$L F^{-1}\left(1-\frac{1}{N}\right)$.
\upperbound*
\begin{proof}
To prove this, we first analyze the asymptotic behavior of the
normalized optimal sum-welfare value.
\begin{lemma}[Averaging independent maxima]
\label{lemma:averaging-independent-maxima}
Let $F$ be a continuous distribution function supported on
$[0,\infty)$. Assume that either:
\begin{enumerate}
    \item $F$ has a finite positive upper endpoint; or
    \item $F$ has an unbounded upper endpoint and
    \begin{equation}
        V_{F}(y):=F^{-1}(1-e^{-y})
        \in\operatorname{RV}_{\theta}
    \end{equation}
    for some $\theta\geq0$.
\end{enumerate}

Define
\begin{equation}
b_N:=F^{-1}\left(1-\frac{1}{N}\right),
\end{equation}
let $U_{[N:N]}$ denote the maximum of $N$ independent random
variables with distribution function $F$, and let
\begin{equation}
U_{[N:N]}^{(1)},\ldots,U_{[N:N]}^{(LN)}
\end{equation}
be independent copies of $U_{[N:N]}$, where $L\geq1$ is a fixed
integer. Then
\begin{equation}
\frac{1}{LNb_N}
\sum_{j=1}^{LN}U_{[N:N]}^{(j)}
\xrightarrow{L^2}1.
\end{equation}
In particular, the convergence also holds in probability.
\end{lemma}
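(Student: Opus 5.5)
We need L^2 convergence of the normalized sum, which follows from two facts: $\mathbb E[U_{[N:N]}]/b_N\to1$ and $\operatorname{Var}(U_{[N:N]})/b_N^2$ bounded (indeed $\to0$). Writing $\bar S_N:=\frac{1}{LNb_N}\sum_{j=1}^{LN}U^{(j)}_{[N:N]}$, independence gives
\begin{equation}
\mathbb E\bigl[(\bar S_N-1)^2\bigr]
=\frac{\operatorname{Var}(U_{[N:N]})}{LN\,b_N^2}
+\Bigl(\frac{\mathbb E[U_{[N:N]}]}{b_N}-1\Bigr)^2 .
\end{equation}
So it suffices to show $\mathbb E[(U_{[N:N]}/b_N)^2]$ is bounded and $U_{[N:N]}/b_N\to1$ in $L^1$. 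In fact I will show $U_{[N:N]}/b_N\to1$ in $L^2$. Then the variance term vanishes even without the factor $1/(LN)$, and the bias term vanishes as well. Convergence in probability then follows from Chebyshev/Markov.

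\textbf{Bounded case.} Here $x_F\in(0,\infty)$ and $0\le U_{[N:N]}\le x_F$. Moreover $b_N\to x_F$, and $U_{[N:N]}\to x_F$ almost surely, since for every $x<x_F$ we have $\Pr(U_{[N:N]}\le x)=F(x)^N\to0$. Bounded convergence then gives $L^2$ convergence of $U_{[N:N]}/b_N$ to $1$.

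\textbf{Unbounded Weibull-type case.} I use the representation $U_{[N:N]}\stackrel{d}{=}V_F(E_{[N:N]})$, where $E_{[N:N]}$ is the maximum of $N$ i.i.d. standard exponentials; this holds because $V_F(E)$ has law $F$ when $F$ is continuous. The classical facts are that $E_{[N:N]}-\ln N$ converges in law to a Gumbel variable and $E_{[N:N]}/\ln N\to1$ in probability. Since $b_N=V_F(\ln N)$, the uniform convergence theorem gives $V_F(E_{[N:N]})/V_F(\ln N)\to1$ in probability. It remains to show uniform integrability of $(V_F(E_{[N:N]})/V_F(\ln N))^2$.

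\textbf{Main obstacle: uniform integrability.} I treat the upper and lower ranges separately.

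For the upper range, Potter's bounds give, for $y\ge\ln N$ and large $N$, $V_F(y)/V_F(\ln N)\le 2(y/\ln N)^{\theta+1}$. Write $E_{[N:N]}=\ln N+G_N$; the tail satisfies $\Pr(G_N>t)\le Ne^{-(\ln N+t)}=e^{-t}$. Hence
\begin{equation}
\mathbb E\Bigl[\Bigl(\tfrac{V_F(E_{[N:N]})}{b_N}\Bigr)^{4}\mathbf 1_{\{G_N>0\}}\Bigr]
\le 16\,\mathbb E\bigl[(1+G_N^+/\ln N)^{4(\theta+1)}\bigr],
\end{equation}
which is bounded uniformly in $N$ by the exponential tail of $G_N$. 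A bounded fourth moment yields uniform integrability of the squares.

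For the lower range, monotonicity of $V_F$ gives $V_F(E_{[N:N]})\le V_F(\ln N)=b_N$ when $G_N\le0$, so the ratio is bounded by $1$ there.

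Combining the two ranges gives uniform integrability, hence $L^2$ convergence of $U_{[N:N]}/b_N$ to $1$. Substituting into the displayed identity completes the lemma. The care needed is in applying Potter's bound only where both arguments exceed a fixed threshold; small values of $y$ fall in the lower range, where monotonicity already bounds the ratio by $1$.
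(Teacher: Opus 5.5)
Your proposal is correct and follows essentially the same route as the paper: the bias--variance identity, the representation $U_{[N:N]}\stackrel{d}{=}V_F(E_{[N:N]})$ with $b_N=V_F(\ln N)$, the uniform convergence theorem for the in-probability limit, and Potter's bound on the range $E_{[N:N]}\ge\ln N$ (with monotonicity below it), combined with the exponential tail of the maximum, to obtain uniform integrability. The differences are minor. You bound the fourth moment so that $U_{[N:N]}/b_N\to1$ in $L^2$ itself, whereas the paper bounds only the second moment and lets the factor $1/(LN)$ kill the variance term. In the bounded case your computation $F(x)^N\to0$ gives convergence in probability rather than almost surely, but that is all the bounded convergence theorem needs.
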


\begin{proof}
Set
\begin{equation}
X_N:=\frac{U_{[N:N]}}{b_N}.
\end{equation}
We show that
\begin{equation}
\mathbb{E}[X_N]\longrightarrow1,
\qquad
\sup_{N\geq N_0}\mathbb{E}[X_N^2]<\infty
\end{equation}
for some $N_0$.

Suppose first that $F$ has a finite positive upper endpoint
\begin{equation}
x_F:=\sup\{x:F(x)<1\}\in(0,\infty).
\end{equation}
Then $b_N\to x_F$. Moreover, for every $\delta\in(0,x_F)$,
\begin{equation}
\mathbb{P}\bigl(U_{[N:N]}\leq x_F-\delta\bigr)
=
F(x_F-\delta)^N
\longrightarrow0,
\end{equation}
while $U_{[N:N]}\leq x_F$ almost surely. Hence
\begin{equation}
U_{[N:N]}\xrightarrow{\mathrm{P}}x_F
\qquad\text{and therefore}\qquad
X_N\xrightarrow{\mathrm{P}}1.
\end{equation}
Since $b_N\to x_F>0$, for all sufficiently large $N$,
\begin{equation}
0\leq X_N\leq\frac{x_F}{b_N}\leq2.
\end{equation}
Consequently,
\begin{equation}
\mathbb{E}[X_N]\longrightarrow1,
\qquad
\sup_{N\geq N_0}\mathbb{E}[X_N^2]<\infty.
\end{equation}

Now suppose that $F$ has an unbounded upper endpoint and
$V_{F}\in\operatorname{RV}_{\theta}$. Let
$E_1,\ldots,E_N$ be independent standard exponential random
variables and define
\begin{equation}
t_N:=\max_{1\leq i\leq N}E_i.
\end{equation}
The quantile representation gives
\begin{equation}
U_{[N:N]}\overset{d}{=}V_{F}(t_N),
\qquad
b_N=V_{F}(\ln N),
\end{equation}
and hence
\begin{equation}
X_N\overset{d}{=}\frac{V_{F}(t_N)}{V_{F}(\ln N)}.
\end{equation}

For every $\delta\in(0,1)$,
\begin{equation}
\mathbb{P}\left(t_N\leq(1-\delta)\ln N\right)
\leq e^{-N^\delta},
\end{equation}
and
\begin{equation}
\mathbb{P}\left(t_N>(1+\delta)\ln N\right)
\leq N^{-\delta}.
\end{equation}
Thus,
\begin{equation}
\frac{t_N}{\ln N}\xrightarrow{\mathrm{P}}1.
\end{equation}
The uniform convergence theorem for regularly varying functions
therefore yields
\begin{equation}
X_N
\overset{d}{=}
\frac{V_{F}(t_N)}{V_{F}(\ln N)}
\xrightarrow{\mathrm{P}}1.
\end{equation}

It remains to control the second moments. Fix $\eta>0$. By
Potter's bound and the monotonicity of $V_{F}$, there exist
constants $C>0$ and $N_0$ such that, for $N\geq N_0$,
\begin{equation}
X_N^2
\leq
C\left[
1+
\left(\frac{t_N}{\ln N}\right)^{2(\theta+\eta)}
\right].
\end{equation}
(when $t_N\le \ln N $ this is trivial).
For every $s>0$,
\begin{equation}
\sup_{N\geq2}
\mathbb{E}\left[
\left(\frac{t_N}{\ln N}\right)^s
\right]
<\infty.
\end{equation}
Indeed, for $t\geq2$,
\begin{equation}
\mathbb{P}\left(\frac{t_N}{\ln N}>t\right)
\leq Ne^{-t\ln N}
=N^{1-t}
\leq2^{1-t},
\end{equation}
and the claim follows from the tail-integral formula. Therefore,
\begin{equation}
\sup_{N\geq N_0}\mathbb{E}[X_N^2]<\infty.
\end{equation}
Thus $\{X_N\}$ is uniformly integrable, and since
$X_N\xrightarrow{\mathrm{P}}1$,
\begin{equation}
\mathbb{E}[X_N]\longrightarrow1.
\end{equation}

Finally, define
\begin{equation}
X_{N,j}:=\frac{U_{[N:N]}^{(j)}}{b_N},
\qquad j=1,\ldots,LN.
\end{equation}
These variables are i.i.d. copies of $X_N$. Therefore,
\begin{align*}
\mathbb{E}\left[
\left(
\frac1{LN}\sum_{j=1}^{LN}X_{N,j}-1
\right)^2
\right]
&=
\frac{\operatorname{Var}(X_N)}{LN}
+
\left(\mathbb{E}[X_N]-1\right)^2  \\
&\leq
\frac{\mathbb{E}[X_N^2]}{LN}
+
\left(\mathbb{E}[X_N]-1\right)^2
\longrightarrow0.
\end{align*}
Hence
\begin{equation}
\frac{1}{LNb_N}
\sum_{j=1}^{LN}U_{[N:N]}^{(j)}
\xrightarrow{L^2}1.
\end{equation}
\end{proof}
Now, using Lemma \ref{lemma:averaging-independent-maxima} and Claim \ref{Remark: sum-rate} we can finally prove the theorem.

Define
\begin{equation}
b_N:=F^{-1}\left(1-\frac{1}{N}\right).
\end{equation}
Since the max-min value is bounded above by the average optimal sum welfare value,
\begin{equation}
M_{N,LN}\leq \frac{1}{N}S_{N,LN},
\end{equation}
we obtain
\begin{equation}
\left\{
M_{N,LN}>\frac{Lb_N}{1-\gre}
\right\}
\subseteq
\left\{
\frac{1}{N}S_{N,LN}>\frac{Lb_N}{1-\gre}
\right\}.
\end{equation}
Therefore,
\begin{equation}
\mathbb{P}\left(
1-\frac{Lb_N}{M_{N,LN}}>\gre
\right)
\leq
\mathbb{P}\left(
\frac{1}{N}S_{N,LN}>\frac{Lb_N}{1-\gre}
\right).
\end{equation}
By Claim~\ref{Remark: sum-rate},
\begin{equation}
S_{N,LN}\stackrel{d}{=}
\sum_{j=1}^{LN}U_{[N:N]}^{(j)},
\end{equation}
where
$U_{[N:N]}^{(1)},\ldots,U_{[N:N]}^{(LN)}$
are i.i.d. copies of the maximum of $N$ independent samples from
$F$. Hence
\begin{equation}
\frac{1}{N}S_{N,LN}
\stackrel{d}{=}
\frac{1}{N}\sum_{j=1}^{LN}U_{[N:N]}^{(j)}.
\end{equation}

By Lemma~\ref{lemma:averaging-independent-maxima},
\begin{equation}
\frac{1}{LNb_N}\sum_{j=1}^{LN}U_{[N:N]}^{(j)}
\xrightarrow{\mathrm{P}}1.
\end{equation}
Equivalently,
\begin{align}
\label{eq:S_N-quantile}
    \frac{S_{N,LN}}{LNb_N}\xrightarrow{\mathrm{P}}1.
\end{align}

Since for $\gre <1$,
\begin{equation}
\frac{1}{1-\gre}>1,
\end{equation}
it follows that
\begin{equation}
\mathbb{P}\left(
\frac{1}{N}S_{N,LN}>\frac{Lb_N}{1-\gre}
\right)
=
\mathbb{P}\left(
\frac{S_{N,LN}}{LNb_N}>\frac{1}{1-\gre}
\right)
\longrightarrow 0.
\end{equation}
Therefore,
\begin{align}
    \label{eq:M_N-quantile}
\mathbb{P}\left(
1-\frac{LF^{-1}\left(1-\frac{1}{N}\right)}
     {M_{N,LN}}
>\gre
\right)
\longrightarrow 0.
\end{align}
The case $\gre\ge1$ is trivial.
This completes the proof.
\end{proof}
 \end{appendices}

\end{document}